\def\floor#1{\lfloor #1 \rfloor}
\def\1{\bm{1}}
\DeclareMathAlphabet{\mathsfit}{\encodingdefault}{\sfdefault}{m}{sl}
\SetMathAlphabet{\mathsfit}{bold}{\encodingdefault}{\sfdefault}{bx}{n}
\newcommand{\E}{\mathbb{E}}
\DeclareMathOperator*{\argmax}{arg\,max}
\DeclareMathOperator*{\argmin}{arg\,min}
\newtheorem{thm}{Theorem}
\newtheorem{cor}{Corollary}
\newtheorem{lem}{Lemma}
\newtheorem{prop}{Proposition}
\newtheorem{rem}{Remark}
\newcommand{\p} {\mathbb{P}}
\DeclareMathOperator{\var}{var}
\newcommand{\caE}{{\mathcal E}}
\newcommand{\caF}{{\mathcal F}}
\newcommand{\bsA}{{\boldsymbol A}}
\newcommand{\bsB}{{\boldsymbol B}}
\newcommand{\bsC}{{\boldsymbol C}}
\newcommand{\bsE}{{\boldsymbol E}}
\newcommand{\bsS}{{\boldsymbol S}}
\newcommand{\bsM}{{\boldsymbol M}}
\newcommand{\bsX}{{\boldsymbol X}}
\newcommand{\bsY}{{\boldsymbol Y}}
\newcommand{\bsU}{{\boldsymbol U}}
\newcommand{\bsV}{{\boldsymbol V}}
\newcommand{\bsZ}{{\boldsymbol Z}}
\newcommand{\bsv}{{\boldsymbol v}}
\newcommand{\bsx}{{\boldsymbol x}}
\newcommand{\bsp}{{\boldsymbol p}}
\newcommand{\bsq}{{\boldsymbol q}}
\newcommand{\bsr}{{\boldsymbol r}}
\newcommand{\bsu}{{\boldsymbol u}}
\newcommand{\bsg}{{\boldsymbol g}}
\newcommand{\bsh}{{\boldsymbol h}}
\newcommand{\beq}{ \begin{equation} }
\newcommand{\eeq}{ \end{equation} }
\renewcommand{\P}{\mathbb{P}}
\icmltitlerunning{Recovering Top-Two Answers and Confusion Probability in Multi-Choice Crowdsourcing}
\begin{document}

\twocolumn[
\icmltitle{Recovering Top-Two Answers and Confusion Probability\\ in Multi-Choice Crowdsourcing}

% It is OKAY to include author information, even for blind
% submissions: the style file will automatically remove it for you
% unless you've provided the [accepted] option to the icml2023
% package.

% List of affiliations: The first argument should be a (short)
% identifier you will use later to specify author affiliations
% Academic affiliations should list Department, University, City, Region, Country
% Industry affiliations should list Company, City, Region, Country

% You can specify symbols, otherwise they are numbered in order.
% Ideally, you should not use this facility. Affiliations will be numbered
% in order of appearance and this is the preferred way.
\icmlsetsymbol{equal}{*}

\begin{icmlauthorlist}
\icmlauthor{Hyeonsu Jeong}{yyy}
\icmlauthor{Hye Won Chung}{yyy}
% \icmlauthor{Firstname3 Lastname3}{comp}
% \icmlauthor{Firstname4 Lastname4}{sch}
% \icmlauthor{Firstname5 Lastname5}{yyy}
% \icmlauthor{Firstname6 Lastname6}{sch,yyy,comp}
% \icmlauthor{Firstname7 Lastname7}{comp}
% %\icmlauthor{}{sch}
% \icmlauthor{Firstname8 Lastname8}{sch}
% \icmlauthor{Firstname8 Lastname8}{yyy,comp}
%\icmlauthor{}{sch}
%\icmlauthor{}{sch}
\end{icmlauthorlist}

\icmlaffiliation{yyy}{School of Electrical Engineering, KAIST, Daejeon, Korea}
% \icmlaffiliation{comp}{Company Name, Location, Country}
% \icmlaffiliation{sch}{School of ZZZ, Institute of WWW, Location, Country}

\icmlcorrespondingauthor{Hyeonsu Jeong}{hsjeong1121@kaist.ac.kr}
\icmlcorrespondingauthor{Hye Won Chung}{hwchung@kaist.ac.kr}

% You may provide any keywords that you
% find helpful for describing your paper; these are used to populate
% the "keywords" metadata in the PDF but will not be shown in the document
\icmlkeywords{Machine Learning, ICML}

\vskip 0.3in
]

% this must go after the closing bracket ] following \twocolumn[ ...

% This command actually creates the footnote in the first column
% listing the affiliations and the copyright notice.
% The command takes one argument, which is text to display at the start of the footnote.
% The \icmlEqualContribution command is standard text for equal contribution.
% Remove it (just {}) if you do not need this facility.

\printAffiliationsAndNotice{}  % leave blank if no need to mention equal contribution
%\printAffiliationsAndNotice{\icmlEqualContribution} % otherwise use the standard text.

\begin{abstract}

Crowdsourcing has emerged as an effective platform for labeling large amounts of data in a cost- and time-efficient manner.
Most previous work has focused on designing an efficient algorithm to recover only the ground-truth labels of the data.
In this paper, we consider multi-choice crowdsourcing tasks with the goal of recovering not only the ground truth, but also the most confusing answer and the confusion probability. The most confusing answer provides useful information about the task by revealing the most plausible answer other than the ground truth and how plausible it is.
To theoretically analyze such scenarios, we propose a model in which there are the top two plausible answers for each task, distinguished from the rest of the choices. Task difficulty is quantified by the probability of confusion between the top two, and worker reliability is quantified by the probability of giving an answer among the top two.
Under this model, we propose a two-stage inference algorithm to infer both  the top two answers and the confusion probability. We show that our algorithm achieves the minimax optimal convergence rate. We conduct both synthetic and real data experiments and demonstrate that our algorithm outperforms other recent algorithms.  We also show the applicability of our algorithms in inferring the difficulty of tasks and in training neural networks with top-two soft labels. %Our code is publicly available at \url{https://github.com/Hyeonsu-Jeong/TopTwo}.

%, where the first stage uses the spectral method to obtain an initial estimate, and the second stage uses the result of the first stage to refine the estimates based on the maximum likelihood estimator (MLE). 
% of the top-two answers in training a neural networks by soft labels.
%This shows that our model explains well the real datasets with heterogeneous task difficulties due to confusion between plausible answers. 

\end{abstract}

\section{Introduction}\label{sec:intro}

Crowdsourcing has been widely adopted to solve a large number of tasks in a time- and cost-efficient manner with the help of human workers. 
In this paper, we consider multiple-choice tasks, where a worker is asked to provide a single answer among multiple choices.
Some examples are as follows:
1) Using crowdsourcing platforms such as MTurk, we solve object counting or classification tasks on a large collection of images.
Answers can be noisy either due to the difficulty of the scene or due to unreliable workers making random guesses.
2) Scores are collected from referees for papers submitted to a conference. For certain papers, scores can vary widely among reviewers, either due to the inherent nature of the paper (clear pros and cons) or due to the reviewer's subjective interpretation of the scoring scale \citep{stelmakh2019peerreview4all,liu2022integrating}.
%3) We may collect social opinions for sightseeing recommendations for certain cities. The answer may differ from person to person, either because of a lack of consensus among the population or because some workers do not know much about the city.%The answers may vary among people either due to lack of consensus among population or since some workers do not know much about the city. 

In the above scenarios, the answers provided by human workers may not be consistent among themselves, not only due to the presence of unreliable workers, but also due to the inherent difficulty of the tasks. 
In particular, for multiple choice tasks, there can exist plausible answers other than the ground truth, which we call \emph{confusing answers}.\footnote{This phenomenon is evident on public datasets: for `Web' dataset \citep{zhou2012learning}, which consists of five-choice tasks, the most dominant top-two answers of each task account for  80\% of the total answers, and the ratio between the top two is 2.4:1.  }
%\footnote{Our code is publicly available at \url{https://github.com/Hyeonsu-Jeong/TopTwo}}.
%For workers who put efforts to find the ground-truth, the difficulty of each task can be quantified by the confusion probability between the ground-truth and the confusing answers. As an example, for image classification tasks, ....(Figure) 
For tasks with confusing answers, even reliable workers may provide wrong answers due to confusion. Thus, we need to decompose the two different causes of wrong answers: low reliability of workers and confusion due to task difficulty. 

%The worker reliability then needs to be quantified not by the probability of giving a correct answer but by the probability of giving an answer among the plausible choices, to distinguish reliable workers from pure spammers who just provide a random guess among possible choices.

%To theoretically analyze such scenarios, we propose a new model under which each task can have top two plausible answers including the ground-truth and the most confusing answer and the task difficulty is quantified by the confusion probability between the top two answer. The worker skill is modeled not by the probability of giving a correct answer but by the probability of giving an answer among the plausible choices, to distinguish reliable workers from pure spammers who just provide a random guess among possible choices.

However, most previous models of multi-choice crowdsourcing do not adequately model the errors from confusion. For example, the single-coin Dawid-Skene model \citep{DS}, which is the most widely studied model in the literature, assumes that a worker is associated with a single skill parameter that is fixed across all tasks, which models the probability of giving a correct answer for every task. Under this model, any algorithm that infers the worker's skill would count a confused labeling as the worker's error and lower its accuracy estimate for the worker, resulting in a wrong estimate of the worker's true skill level.
 
To model the effect of confusion in multi-choice crowdsourcing problems, we propose a new model in which each task can have a confusing answer other than the ground truth, with a different confusion probability across tasks. Task difficulty is quantified by the confusion probability between the top two plausible answers, and worker skill is modeled by the probability of giving an answer among the top two, to distinguish reliable workers from pure spammers who just give random guesses among possible choices.
We justify the proposed top-two model with public datasets. Under this new model, we aim to recover both the ground truth and the most confusing answer with the confusion probability, which indicates how plausible the recovered ground truth is compared to the most confusing answer. 
%Most of previous works on multi-class crowdsourced labeling focuses on the recovery of only the ground-truth label while ignoring how plausible other choices are, which might include useful information as the next plausible choice or as the quantification of the difficulty of each task.
%To the best of our knowledge, our formulation provides the first setup and theoretical analysis for recovery of top two plausible answers and the confusion probability in multi-class crowdscouring.

%To theoretically analyze the confusion, we propose a new model under which each task can have top two plausible answers including the ground-truth and the most confusing answer and the task difficulty is quantified by the confusion probability between the top two answer. The worker skill is modeled not by the probability of giving a correct answer but by the probability of giving an answer among the plausible choices, to distinguish reliable workers from pure spammers who just provide a random guess among possible choices.
%The worker skill is modeled by the probability of giving an answer from the top two. 
We provide an efficient two-stage inference algorithm to recover the top-two plausible answers and the confusion probability. The first stage of our algorithm uses the spectral method to obtain an initial estimate for the top-two answers and the confusion probability, and the second stage uses this initial estimate to estimate the workers' reliabilities and to refine the estimates for the top-two answers. Our algorithm achieves the minimax optimal convergence rate. We then perform experiments comparing our method to recent crowdsourcing algorithms on both synthetic and real datasets, and show that our method outperforms other methods in recovering top-two answers. This result demonstrates that our model better explains the real-world datasets, including errors due to confusion. Our code is available at \url{https://github.com/Hyeonsu-Jeong/TopTwo}.
Our main contributions can be summarized as follows. 
\begin{itemize}
\item {\it Top-two model:} We propose a new model for multi-choice crowdsourcing tasks, where each task has top-two answers, and the difficulty of the task is quantified by the confusion probability between the top two plausible answers. We justify the proposed model by analyzing six public datasets and showing that the top-two structure explains the real datasets well. 
\item {\it Inference algorithm and its application:} We propose a two-stage algorithm that recovers the top-two answers and the confusion probability of each task at the minimax optimal convergence rate. We demonstrate the potential applications of our algorithm not only in crowdsourced labeling, but also in two more applications: (i) quantifying task difficulty, and (ii) training neural networks for classification with soft labels that include the top-two information and the task difficulty. 
\end{itemize}

\section{Related works}\label{sec:related}

\paragraph{Dawid-Skene(D\&S) model.}

In crowdsourcing \citep{welinder2010multidimensional,DBLP:conf/icml/LiuW12,demartini2012zencrowd,aydin2014crowdsourcing,demartini2012zencrowd}, one of the most widely studied models is the Dawid-Skene (D\&S) model \citep{DS}. In this model, each worker is associated with a single confusion matrix, fixed across tasks, that models the probability of giving a label $b\in[K]$ for the true label $a\in[K]$ for a $K$-ary classification task. In the single-coin D\&S model, the model is further simplified such that each worker has a fixed skill level regardless of the true label or task. %, and the error is uniformly distributed over the wrong answers.
%Different from these models, our model allows that the error probability of a worker from confusion can be changed depending on the task.
%We generalize this model for multi-class crowdsourcing by allowing different task difficulties, where the task difficulty is modeled by the confusion probability between the true label and the most confusing one. The worker skill is then modeled not by the probability of giving a correct answer but by the probability of giving an answer at least from the top two. From this modification, we can decompose the error from the low skill and that from the confusion. By real-data experiments, we demonstrate that our model explains better the real datasets when multiple-choice tasks may have more than one plausible answer with different confusion probabilities across the tasks.
Under the D\&S model, various methods have been proposed to estimate the confusion matrix or skill of workers by spectral methods \citep{SEM,EoR, GhostSVD, karger2013efficient},  iterative algorithms \citep{Karger, karger2011iterative,li2014error, liu2012variational,ok2016optimality}, or rank-1 matrix completion \citep{PGD,M-MSR,ibrahim2019crowdsourcing}. The estimated skill can be used to infer the ground truth answer by approximating maximum likelihood (ML)-type estimators \citep{gao2013minimax,gao2016exact,SEM,karger2013efficient,li2014error,raykar2010learning,smyth1994inferring,ipeirotis2010quality,berend2014consistency}.
Unlike the D\&S models, our model allows each worker to make errors over tasks with different probabilities due to confusion. 
Thus, our algorithm needs to estimate not only the worker's skill, but also the task difficulty. Since the number of tasks is often much larger than the number of workers in practice, estimating task difficulty is much more challenging than estimating worker skill. We provide a statistically-efficient algorithm to estimate the task difficulty and use this estimate to infer the top-two answers.

\paragraph{Task Dificculty.}

We also remark that there have been some recent attempts to model task difficulty in crowdsourcing \citep{khetan2016achieving, shah2020permutation, krivosheev2020detecting, shah2018reducing, bachrach2012grade, li2019exploiting, tian2015max}. However, these works are either restricted to binary tasks \citep{khetan2016achieving, shah2020permutation, shah2018reducing} or focus on grouping confusable classes \citep{krivosheev2020detecting, li2019exploiting, tian2015max}.
Our result, on the other hand, applies to any set of multi-choice tasks of which the choices are not necessarily restricted to a fixed set of classes/labels.
%We identify the most confusing answer of each task with its confusion probability.
\paragraph{Modeling confusion.}

There is a growing interest in the machine learning community in utilizing soft labels to train deep neural networks.  Various methods have been proposed to generate soft labels of training data, e.g., by using mixup \citep{zhang2017mixup, sohn2022genlabel} or by using the output of trained models \cite{sabetpour2021truth}. Also, CIFAR-10H \cite{peterson2019human} dataset was generated by using all the human annotations  from the data collection step as soft labels on images. Our algorithm estimates the task difficulty and the top two answers, which can produce a new form of soft label that can be used in this line of work, as will be discussed in Sec. \ref{sec:NN_soft}.
%Additionally, there are works that use the soft label method to reduce the generalization error in network training. Various methods were proposed to vary the labels, such as mixup \citep{zhang2017mixup, sohn2022genlabel}, and giving different labels reflect to the training result \cite{sabetpour2021truth}. Also, CIFAR-10H\cite{peterson2019human} uses the human annotation from the data collection step as labels. Our study estimates the task difficulty and the top two answers, and produces a new label that generalizes better.

\paragraph{Notation.} %We use bold lower-case letters and bold upper-case letters to denote vectors and matrices, respectively. 
For a vector $\bsx$, $x_i$ represents the $i$-th component of $\bsx$.
For a matrix $\bsM$, $M_{ij}$ refers to the $(i,j)$th entry of $\bsM$. %, and $\bsM_{i*}$ and $\bsM_{*j}$ are used to denote the $i$th row and $j$th column vector of $\bsM$, respectively.
%For a matrix $\bsM$, $\sigma_i(\bsM)$ represents the $i$-th largest singular value of $\bsM$. 
For any vector $\bsx$, its $\ell_2$ and $\ell_\infty$-norm are denoted by $\|\bsx\|_2$ and $\|\bsx\|_\infty$, respectively. %, and the Frobenius norm and spectral norm of a matrix is denoted by $\|\cdot\|_F$ and $\|\cdot\|$, respectively. 
We follow the standard definitions of asymptotic notations, $\Theta(\cdot)$, $O(\cdot)$, $o(\cdot)$, and $\Omega(\cdot)$.

\section{Model and Problem Setup}\label{sec:model}

We consider a crowdsourcing model to infer the top two most plausible answers among $K$ choices for each task. 
There are $n$ workers and $m$ tasks. 
For each task $j\in[m]:=\{1,\dots,m\}$, we denote the correct answer by $g_j\in[K]$ and the next plausible or the most confusing answer by $h_j\in[K]$.
We call the pair $(g_j,h_j)$ the top two answers for the task $j\in[m]$.
Let $\bsp\in[0, 1]^n$ and $\bsq\in(1/2,1]^m$ be parameters modeling the reliability of the workers and the difficulty of the tasks, respectively.
%We assume that the worker reliability $p_i$ is quantified by the probability of providing an answer equal to either the correct answer $g_j$ or the most confusing answer $h_j$.
%The difficulty of the task is quantified by $q_j$, which models the gap between 
%(1. {\color{red} conditions for norms of $\bsp$ and $\bsr$.})
For each pair of $(i,j)$, the $j$-th task is assigned to the $i$-th worker independently with probability $s$.
We use a matrix $\bsA\in \mathbb{R}^{n\times m}$ to represent the responses of the workers, where $A_{ij}=0$ if the $j$-th task is not assigned to the $i$-th worker, and if it is assigned, $A_{ij}$ is equal to the received label. The distribution of $A_{ij}$ is determined by the worker reliability $p_i$ and the task difficulty $q_j$ as follows:
\beq\label{eqn:A_dist}
A_{ij}=\begin{cases}
g_j,& \text{w.p. } s\left(p_i q_j+\frac{1-p_i}{K}\right),\\
h_j,& \text{w.p. } s\left(p_i (1-q_j) +\frac{1-p_i}{K}\right),\\
b\in[K]\backslash\{g_j,h_j\}, &\text{w.p. } s\left(\frac{1-p_i}{K}\right),\\
0,&\text{w.p. } 1-s.
\end{cases}
\eeq
Here $p_i$ is the reliability of the $i$-th worker in giving the answer from the most plausible top two $(g_j, h_j)$.
If $p_i=0$, the worker is considered a spammer, giving random answers among $K$ choices, and a larger value of $p_i$ indicates a higher reliability of the worker.
The parameter $q_j$ represents the inherent difficulty of the task $j$ in discriminating between the top two answers: for an easy task, $q_j$ is closer to 1, and for a hard task, $q_j$ is closer to 1/2. 
We call $q_j$ the confusion probability.
Our goal is to recover the top two answers $(g_j,h_j)$ for all $j\in[m]$ with high probability at the minimum possible sampling probability $s$.
We assume that the model parameters $(\bsp,\bsq)$ are unknown.% at the algorithm. %and the confusion probability $\bsq$ are unknown. %with an arbitrarily small $\ell_\infty$-norm error. 
%\paragraph{Justification for the top-two model}

We propose the top-two model to reflect common characteristics of public crowdsourcing datasets, as outlined in Appx. \S\ref{sec:app:model}. The most important observation is that the top-two answers dominate the overall answers, and only the second dominant answer has an incidence rate comparable to the ground truth. In other words, the incidence rate of the second answer falls within the one-sigma range of the ground truth, indicating a significant overlap. However, such an overlap is not observed with the third or fourth answers. %In other words, the standard deviation of the incidence rate of the second answer has an overlap with that of the ground truth, but not with that of the third-, or fourth answers. 
This suggests that the assumption of a unique ``confusing answer'' is adequate to model confusion due to task difficulty. More details can be found in Appx. \S\ref{sec:app:model}.

\paragraph{Binary conversion.}
We provide the main observation on the structure of $\mathbf{A}$, which will be used to design algorithms for estimating the top two plausible answers $(g_j,h_j)$ and the confusion probability $q_j$ for $j\in[m]$.
%We analyze the proposed model. 
The $K$-ary task can be decomposed into $(K-1)$-binary tasks as follows  \citep{karger2013efficient}: define $\bsA^{(k)}$ for $1\leq k<K$ such that the $(i,j)$-th entry $A_{ij}^{(k)}$ indicates whether the original answer $A_{ij}$ is greater than $k$ or not, i.e., $A^{(k)}_{ij}=-1$ if $1\leq  A_{ij}\leq k$; $A^{(k)}_{ij}=1$ if $k<A_{ij}\leq K$; and $A^{(k)}_{ij}=0$ if $A_{ij}=0$.
%\beq
%A^{(k)}_{ij}=\begin{cases}
%-1,& \text{if }1\leq  A_{ij}\leq k,\\
%1,& \text{if } k<A_{ij}\leq K,\\
%0,&\text{if } A_{ij}=0,
%\end{cases}
%\eeq
%The entry $A_{ij}^{(k)}$ indicates whether the answer $A_{ij}$ is larger than $k$.	
%By inferring the true labels for the $(K-1)$-binary tasks, it is possible to recover $g_j$.  Some previous works  have used this technique to infer the ground-truth labels \citep{karger2013efficient}. However, this approach does not provide information for the next plausible answer $h_j$, and thus more detailed analysis is required.
We show that $\E[\bsA^{(k)}]$ is a rank-1 matrix and the singular value decomposition (SVD) of $\E[\bsA^{(k)}]$ can reveal the top-two answers $\{(g_j,h_j)\}_{j=1}^m$ and the confusion probability vector $\bsq$.
\begin{prop}\label{prop:binary} For each $1\leq k<K$, the binary mapped matrix $\bsA^{(k)}\in\{-1,0,1\}^{n\times m}$ satisfies
$
\E[\bsA^{(k)}]-\frac{s(K-2k)}{K}\mathbbm{1}_{n \times m}=2s\bsp (\bsr^{(k)})^\top,
$
where $\bsr^{(k)}=[r_1^{(k)}\cdots r_m^{(k)}]^\top$ is defined as
\begin{equation}\nonumber\label{eqn:defr}
\begin{split}
&\text{Case I: $g_j>h_j$} \\
&r_j^{(k)}:=\begin{cases}
\frac{k}{K}& \text{ where }k<h_j ; \\
\frac{k}{K}-(1-q_j)& \text{ where }h_j\leq k< g_j;  \\
\frac{k}{K}-1& \text{ where }g_j \leq k,
\end{cases} \\
&\text{Case II: $g_j<h_j$} \\
&r_j^{(k)}:=\begin{cases}
\frac{k}{K} &\text{ where }k<g_j;\\
\frac{k}{K}-q_j &\text{ where }g_j\leq k< h_j;\\
\frac{k}{K}-1& \text{ where }h_j \leq k.
\end{cases}
\end{split}
\end{equation}

\end{prop}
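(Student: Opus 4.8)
The plan is to verify the identity entrywise and read the rank-one factorization directly off the response distribution~\eqref{eqn:A_dist}. Fix an arbitrary pair $(i,j)$ and an index $1\le k<K$. Since $A^{(k)}_{ij}\in\{-1,0,1\}$, we have $\E[A^{(k)}_{ij}] = \P[A_{ij}>k]-\P[A_{ij}\le k]$, so the first step is to compute these two tail probabilities by summing the per-label probabilities in~\eqref{eqn:A_dist}. The key observation is that the uniform ``spammer'' mass $s(1-p_i)/K$ is assigned to every one of the $K$ labels, while the additional ``reliable'' mass contributes $sp_iq_j$ to the label $g_j$ and $sp_i(1-q_j)$ to the label $h_j$. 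This gives
\[
\P[A_{ij}>k] = (K-k)\tfrac{s(1-p_i)}{K} + sp_iq_j\,\mathbbm{1}[g_j>k] + sp_i(1-q_j)\,\mathbbm{1}[h_j>k],
\]
and the analogous expression for $\P[A_{ij}\le k]$ with $K-k$ replaced by $k$ and each ``$>k$'' replaced by ``$\le k$''.

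Next I would subtract the two and apply $\mathbbm{1}[a>k]-\mathbbm{1}[a\le k]=1-2\,\mathbbm{1}[a\le k]$ for $a\in\{g_j,h_j\}$ together with $q_j+(1-q_j)=1$; the $p_i$-terms that are not attached to an indicator collapse into a single $sp_i$, yielding
\[
\E[A^{(k)}_{ij}] = \tfrac{s(1-p_i)(K-2k)}{K} + sp_i - 2sp_i\Big(q_j\,\mathbbm{1}[g_j\le k] + (1-q_j)\,\mathbbm{1}[h_j\le k]\Big).
\]
Subtracting the constant shift $\tfrac{s(K-2k)}{K}$ then cancels the $p_i$-independent part and recombines the rest into $sp_i\cdot\tfrac{2k}{K}$, so that
\[
\E[A^{(k)}_{ij}] - \tfrac{s(K-2k)}{K} = 2sp_i\Big(\tfrac{k}{K} - q_j\,\mathbbm{1}[g_j\le k] - (1-q_j)\,\mathbbm{1}[h_j\le k]\Big).
\]
Because the right-hand side is a product of a factor depending only on $i$ (namely $2sp_i$) and one depending only on $j$, this establishes simultaneously that $\E[\bsA^{(k)}]-\tfrac{s(K-2k)}{K}\mathbbm{1}_{n\times m}$ has rank one and that it equals $2s\bsp(\bsr^{(k)})^\top$ with $r_j^{(k)}=\tfrac{k}{K}-q_j\,\mathbbm{1}[g_j\le k]-(1-q_j)\,\mathbbm{1}[h_j\le k]$.

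The final step is purely to rewrite this closed form as the piecewise expression in the statement: split into Case~I ($g_j>h_j$) and Case~II ($g_j<h_j$), which are exhaustive since the top-two answers are distinct, and within each case partition the range of $k$ at $\min(g_j,h_j)$ and $\max(g_j,h_j)$; on the three resulting intervals the pair $(\mathbbm{1}[g_j\le k],\mathbbm{1}[h_j\le k])$ is successively $(0,0)$, then has exactly one entry equal to $1$, then is $(1,1)$, which reproduces the three listed sub-cases for $r_j^{(k)}$ in each case. I do not expect a genuine obstacle here — the whole argument is a direct moment computation — and the only place requiring care is the bookkeeping of which indicator switches on first as $k$ grows (it is $h_j$ in Case~I and $g_j$ in Case~II) and applying the negated-indicator identity consistently to both $g_j$ and $h_j$.
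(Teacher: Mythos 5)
Your proposal is correct and follows essentially the same route as the paper's proof: a direct entrywise computation of $\E[A_{ij}^{(k)}]$ from the model distribution \eqref{eqn:A_dist}, with the only cosmetic difference that you track the two indicators $\mathbbm{1}[g_j\le k]$ and $\mathbbm{1}[h_j\le k]$ where the paper uses the four joint indicators $\Pi_a,\Pi_b,\Pi_c,\Pi_d$ (and your closed form $r_j^{(k)}=\tfrac{k}{K}-q_j\mathbbm{1}[g_j\le k]-(1-q_j)\mathbbm{1}[h_j\le k]$ is algebraically identical to the paper's $r_j^{(k)}=q_j(\Pi_c-\Pi_b)-(\Pi_c+\Pi_d)+\tfrac{k}{K}$). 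The case analysis at the end reproduces the stated piecewise values correctly.
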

The proof of Proposition \ref{prop:binary} is available in Appendix \S\ref{app:prop:binary}.
%Note that $\E[\bsA^{(k)}]-\frac{s(K-2k)}{K}\mathbbm{1}_{n \times m}$ is a rank-one matrix with its left singular vector $\bsp^*:={\bsp}/{\|\bsp\|_2}$ and the right singular vector  $\bsr^*:={\bsr^{(k)}}/{\|\bsr^{(k)}\|_2}$. %, specified by the confusion probability vector $\bsq$ as well as the top two answers $\{(g_j,h_j)\}_{j=1}^m$.
By defining $\Delta r_{j}^{(k)}:=r_j^{(k)}-r_j^{(k-1)}$ for $k\in[K]$ with $r_{j}^{(0)}:=0$ and $r_{j}^{(K)}:=0$ for all $j$, we have
\beq\label{eqn:delta_r}
\Delta r_j^{(k)}=\begin{cases}
\frac{1}{K}-q_j&\text{ where }k=g_j,\\
\frac{1}{K}-(1-q_j)&\text{ where }k=h_j,\\
\frac{1}{K}&\text{ otherwise}.
\end{cases}
\eeq
Note that $\Delta r_{j}^{(k)}$ has its minimum at $k=g_j$ and its second smallest value at $k=h_j$ for $q_j\in (1/2,1]$.
If one can specify $g_j$, the task difficulty $q_j$ can also be found from $\frac{1}{K}-\Delta r_j^{(g_j)}$. 
In the next section, we will use this structure of $\bsr^{(k)}$ for $k\in[K]$ to infer the top two answers and the confusion probability. \footnote{We assume that
$
    \eta \sqrt{n} \le \|\bsp\|_2 \le \sqrt{n}
$ for some $\eta>0$, which ensures that there are only $o(n)$ spammers ($p_i=0$). We also assume that $\|\bsr^{(k)}\|_2 = \Theta(\sqrt{m})$ for every $k\in[K]$, which can be easily satisfied except for exceptional cases from \eqref{eqn:defr}.} 

\begin{rem}\normalfont{Our top-two model can be generalized to top-$T$ ($T\ge2$) model, where it is assumed that for each task there are $T(\leq K)$ plausible answers with the associated confusion probabilities with respect to the ground truth. Even for this generalized model, we can define binary-converted observation matrices $\bsA^{(k)}$, $k\in[K]$, which enjoy the rank-1 structure, and prove the results similar to Proposition~\ref{prop:binary}, showing that the top $T$ plausible answers and the associated confusion probabilities can be inferred using the rank-1 structure. More details can be found in the Appendix~\ref{subsec:top_T}. }
\end{rem}

%\paragraph{Assumption on model parameters.} 
%Assume that $n=\Omega(\log m)$ and $n=O(m/\log m)$. 
%We assume that
%$
%    \eta \sqrt{n} \le \|\bsp\|_2 \le \sqrt{n}
%$ for some $\eta>0$, which ensures that there are only $o(n)$ spammers ($p_i=0$), and $\|\bsr^{(k)}\|_2 = \Theta(\sqrt{m})$ for every $k\in[K]$, which can be easily satisfied except exceptional cases from \eqref{eqn:defr}.
%%Further assume that 
%%$
 %%   \eta \sqrt{n} \le \|\bsp\|_2 \le \sqrt{n}
%%$ for some $\eta>0$.

%We assume that $n=\Omega(\log m)$ and $n=O(m/\log m)$. 
%We also assume that $
%    \eta \sqrt{n} \le \|\bsp\|_2 \le \sqrt{n}
%$ for some $\eta>0$ and $\|\bsr^{(k)}\|_2 = \Theta(\sqrt{m})$ for every $k\in[K]$. %These assumptions are easily satisfied except very special cases from \eqref{eqn:defr}.

%$\|\bsp\|_2 = \Theta(\sqrt{n})$, which ensures that there are only $o(n)$ workers who answer randomly, i.e., $p_i=0$.  
%We also assume that  $\|\bsr^{(k)}\|_2 = \Theta(\sqrt{m})$ for every $k\in[K]$.
%This assumption can be easily satisfied except very special cases from \eqref{eqn:defr}.
%For analytical purpose, we further assume that 
%$
%    \eta \sqrt{n} \le \|\bsp\|_2 \le \sqrt{n}.
%$
%and
%$
%    \zeta \sqrt{m} \le \|\bsr^{(k)}\|_2 \le \sqrt{m}
%$
%for some $\eta>0$ for every $1 \le k < K$. 

\section{Proposed Algorithm}\label{sec:algorithm}

In this section, we present an algorithm to estimate the top two answers $\{(g_j,h_j)\}_{j=1}^m$ and the confusion probability vector $\bsq$. % and the worker reliability vector $\bsp$.
Our algorithm consists of two stages. In Stage 1, we compute an initial estimate of the top two answers and the confusion probability $\bsq$. 
In Stage 2, we estimate the worker reliability vector $\bsp$ by using the result of the first stage, and use the estimated $\bsp$ and $\bsq$ to refine our estimates for the top two answers. % based on maximum likelihood estimator. 
We randomly split the entires of the original response matrix $\bsA\in \mathbb{R}^{n\times m}$ into $\bsA^1\in \mathbb{R}^{n\times m}$ and $\bsA^2\in \mathbb{R}^{n\times m}$ with probability $s_1$ and $1-s_1$, respectively, and use only $\bsA^1$ for stage 1 and $(\bsA^1,\bsA^2)$ for stage 2. 

\subsection{Stage 1: Initial estimates using SVD}\label{sec:stage1}\label{sec:st1}

The first stage of our algorithm is presented in Algorithm \ref{alg:top_two}.
In this stage, we use the data matrix $\bsA^1\in \mathbb{R}^{n\times m}$  to estimate the left singular vector  $\bsp^*:={\bsp}/{\|\bsp\|_2}$ and the scaled right singular vector $\|\bsp\|_2\bsr^{(k)}$ of $\E[\bsA^{(k)}]$ for all $k\in[K]$, which are then used to infer both the top two answers and the confusion probability by using \eqref{eqn:delta_r}.
%Our algorithm consists of the five main steps: 1) estimating the normalized left singular vector $\bsp^*=\frac{\bsp}{\|\bsp\|_2}$, 2) estimating $\|\bsp\|_2\bsr^{(k)}$ for every $1\leq k<K$, 3) inferring $g_j$ and $h_j$ for $j\in[m]$ by using the estimated $\|\bsp\|_2\bsr^{(k)}$ and the relation~\eqref{eqn:delta_r}, 4) estimating $\|\bsp\|_2$ by using the inferred $g_j$ and $h_j$ and the relation~\eqref{eqn:delta_r}, 5) lastly, estimating $p_i$ for every $i \in [n]$ and $q_j$ for every $j\in[m]$. 

The first stage begins with randomly splitting the entries of $\bsA^1\in \mathbb{R}^{n\times m}$ again into two independent matrices $\bsB\in \mathbb{R}^{n\times m}$ and $\bsC\in \mathbb{R}^{n\times m}$ with equal probabilities. %:= \bsA \circ \bsS$ and $\bsC=\bsA \circ (\mathbbm{1}_{n \times m}-\bsS)$ where $\circ$ is an entrywise product and $\bsS\in\mathbb{R}^{n\times n}$ is composed of i.i.d. with Bern(1/2).  %We divide $\bsA$ into two matrices $\bsB:= \bsA \circ \bsS$ and $\bsC=\bsA \circ (\mathbbm{1}_{n \times m}-\bsS)$  where $\circ$ is an entrywise product. 
We then convert $\bsB$ and $\bsC$ into $(K-1)$-binary matrices $\bsB^{(k)}$ and $\bsC^{(k)}$ for $1\leq k< K$, defined as $B^{(k)}_{ij}=-1$ if $1\leq  B_{ij}\leq k$; $B^{(k)}_{ij}=1$ if $k<B_{ij}\leq K$; and $B^{(k)}_{ij}=0$ if $B_{ij}=0$, and similarly for $\bsC^{(k)}$.
Define $\bsX^{(k)}$ and $\bsY^{(k)}$ as 
$ \bsX^{(k)} := \bsB^{(k)} - \frac{s'(K-2k)}{K}\mathbbm{1}_{n \times m}$ and $\bsY^{(k)} := \bsC^{(k)} - \frac{s'(K-2k)}{K}\mathbbm{1}_{n \times m}$ for $s'=s\cdot s_1/2$.
We have
$
\E[\bsX^{(k)}]=\E[\bsY^{(k)}]=s'\bsp (\bsr^{(k)})^\top
$ from Proposition \ref{prop:binary}.

We use $\bsX^{(k)}$ and $\bsY^{(k)}$ to estimate $\bsp^*:={\bsp}/{\|\bsp\|_2}$ and $\|\bsp\|_2\bsr^{(k)}$, respectively. 
The estimators are denoted by $\bsu^{(k)}$ and $\bsv^{(k)}$, respectively. We define $\bsu^{(k)}$ as the left singular vector of $\bsX^{(k)}$ with the largest singular value. 
Sign ambiguity of the singular vector is resolved by defining $\bsu^{(k)}$ as the one between $\{\bsu^{(k)},-\bsu^{(k)}\}$ in which at least half of the entries are positive. 
After trimming abnormally large components of $\bsu^{(k)}$ and defining the trimmed vector as $\tilde{\bsu}^{(k)}$, we calculate $\bsv^{(k)}:=\frac{1}{s'}(\bsY^{(k)})^{\top}\tilde{\bsu}^{(k)}$, which is an estimate for $\|\bsp\|_2\bsr^{(k)}$. By using $\bsv^{(k)}$ for $1\leq k<K$, we get estimates for top-two answers $(\hat{g}_j,\hat{h}_j)$ as in \eqref{alg:gh} by using the relation in \eqref{eqn:delta_r} . 
Lastly, we estimate $\|\bsp\|_2$ and use $\bsv^{(k)}/\|\bsp\|_2\approx \bsr^{(k)}$ to estimate the confusion probability vector $\bsq$ as in \eqref{eqn:est_q}.

\begin{algorithm}[t]
	\caption{Spectral Method for Initial Estimation (TopTwo1 Algorithm) }
	\label{alg:top_two}
\begin{algorithmic}[1]
  \REQUIRE{data matrix $\bsA^1\in\{0,1,\dots,K\}^{n\times m}$ and parameter $\eta>0$ where $\eta \sqrt{n} \le \|\bsp\|_2 \le \sqrt{n}$.}
  \ENSURE{estimated top-two answers $\{(\hat{g}_j,\hat{h}_j)\}_{j=1}^m$ and confusion probability vector $\hat{\bsq}$.}
%	\STATE {\bfseries Input:} data matrix $\bsA^1\in\{0,1,\dots,K\}^{n\times m}$ and parameter $\eta>0$ where $ \eta \sqrt{n} \le \|\bsp\|_2 \le \sqrt{n}$.
	\STATE Randomly split (with equal probabilities) $\bsA^1$ into $\bsB$ and $\bsC$, and convert the two matrices into binary matrices $\bsX^{(k)}\in\{-1,0,1\}^{n\times m}$ and $\bsY^{(k)}\in\{-1,0,1\}^{n\times m}$ for $1\leq k<K$, respectively, as described in Sec. \ref{sec:stage1}. 
	\STATE  Let $\bsu^{(k)}$ be the leading normalized left singular vector of $\bsX^{(k)}$. 
Trim the abnormally large components of $\bsu^{(k)}$ by setting them to zero if $u_i^{(k)}>\frac{2}{\eta\sqrt{n}}$ and denote the resulting vector as  $\tilde{\bsu}^{(k)}$.
	\STATE Calculate the estimate of $\|\bsp\|\bsr^{(k)}$ by defining
$
\bsv^{(k)}:=\frac{1}{s'}(\bsY^{(k)})^{\top}\tilde{\bsu}^{(k)}.
$
Assume $\bsv^{(0)}:=\boldsymbol{0}$ and $\bsv^{(K)}:=\boldsymbol{0}$.
	\STATE For $k\in[K]$, calculate
$
\Delta v_j^{(k)}:= v_j^{(k)}-v_j^{(k-1)}.
$
Estimate the top-two answers for $j\in[m]$ by 
\beq
\begin{split}\label{alg:gh}
\hat{g}_j:= \argmin_{k\in[K]}\Delta v_j^{(k)};\quad \hat{h}_j:=\argmin_{k\neq \hat{g}_j, k\in[K]} \Delta v_j^{(k)}.
\end{split}
\eeq

	\STATE Estimate $\|\bsp\|_2$ by
${l}_j:= \frac{K}{K-2} \sum_{k\neq \hat{g}_j, k\neq \hat{h}_j} \Delta v_j^{(k)}$ and ${l}:=\frac{1}{m}\sum_{j=1}^m l_j$.

	\STATE Estimate ${q}_j$ for $j\in[m]$ by defining
\beq\label{eqn:est_q}
\hat{q}_j:= {1}/{K}-{\Delta v_{j}^{(\hat{g}_j)}}/{l}.
\eeq

%	\STATE {\bfseries Output:} estimated top-two answers $\{(\hat{g}_j,\hat{h}_j)\}_{j=1}^m$ and confusion probability vector $\hat{\bsq}$.
\end{algorithmic}
\end{algorithm}

\subsection{Stage 2: Plug-in Maximum Likelihood Estimator (MLE)}\label{sec:st2}

The second stage uses the result of Stage 1 to estimate the worker reliability vector $\bsp$. %, and uses this estimate as well as $\hat{\bsq}$ from Stage 1 to refine our estimates for the top-two answers based on the maximum likelihood estimator.
%We provide a plug-in Maximum Likelihood Estimator (MLE), where we use the outputs of Algorithm \ref{alg:top_two} to refine our estimates for the top two answers using the maximum likelihood estimator but with our estimates for $(\hat{\bsp},\hat{\bsq})$ replacing the unknown worker reliability vector $\bsp$ and the confusion probability vector $\bsq$.
%The maximum likelihood estimator of our model, which finds $(g_j,h_j)$ that maximizes $\sum_{i=1}^n \log \mathbb{P}(A_{ij} \vert p, q_j, g_j, h_j)$, is given by
%\begin{align}\label{eqn:opt_MLE}
%        (\hat{g}_j, \hat{h}_j) 
%        &= \argmax_{g_j \neq h_j} \sum_{i=1}^n \log \left(\frac{Kp_i q_j}{1-p_i} + 1\right) \mathbb{I}(A_{ij} = g_j)+ \log \left(\frac{Kp_i (1-q_j)}{1-p_i}  + 1\right) \mathbb{I}(A_{ij} = h_j)
%\end{align}
%for all task $1 \le j \le m$. Here, the worker reliability vector $\bsp$ and the confusion probability vector $\bsq$ are unknown and needs to be estimated. 
%An estimate for the worker reliability vector $\bsp$ is defined by using the estimated top-two answers $\{(g_j,h_j)\}_{j=1}^m$ from Alg. \ref{alg:top_two}. 
Remind that we randomly split the original response matrix $\bsA$ into $\bsA^1$ and $\bsA^2$ with probability $s_1$ and $1-s_1$, respectively, and use $\bsA^1$ only for Alg. \ref{alg:top_two}.  Thus, the estimated top-two answers $\{(\hat{g}_j,\hat{h}_j)\}_{j=1}^m$ from Alg. \ref{alg:top_two} depend only on $\bsA^1$.  We then define the estimator $\hat{\bsp}$ for worker reliability by comparing the unused data matrix $\bsA^2$ with the estimated top two answers $\{(\hat{g}_j,\hat{h}_j)\}_{j=1}^m$ as
\beq\label{eqn:hatpi}
\hat{p}_i=\frac{K}{(K-2)}\left(\frac{1}{ms(1-s_1)}\sum_{j=1}^m \mathbbm{1}(A^2_{ij}=\hat{g}_j\text{ or } \hat{h}_j)-\frac{2}{K}\right).
\eeq
 
\iffalse
Given $(\hat{g}_j,\hat{h}_j)=(g_j,h_j)$ for all $j\in[m]$, which holds with probability at least $1-o(1)$ if $s_1=\Theta\left(\frac{1}{\|\bsp\|_2^2}\log m\right)$ by Theorem \ref{thm:main}, since $\bsA^2$ is independent of $(\hat{g}_j,\hat{h}_j)$ and $\E[\frac{1}{m}\sum_{j=1}^m \mathbbm{1}(A^2_{ij}=\hat{g}_j\text{ or } \hat{h}_j)]=\P(A^2_{ij}=\hat{g}_j \text{ or }\hat{h}_j)=(s-s_1)\left(\frac{K-2}{K}p_i+\frac{2}{K}\right)$, by using Bernstein's inequality, we can show that the estimate $\hat{p}_i$ satisfies
$
\P\left(|\hat{p}_i-p_i|>\delta_2\right)\leq \exp(-\Theta(m(s-s_1)\delta_2^2)).
$
Thus, by having $s-s_1=\Theta\left(\frac{\log n}{\delta_2^2m} \right)$, we can guarantee that $
\P\left(|\bsp-\hat{\bsp}|_\infty<\delta_2\right)=1-o(1)$ as $n\to\infty$. By combining this analysis with Theorem \ref{thm:main}, we can provide the accuracy guarantees for the estimators $(\hat{\bsp},\hat{\bsq})$.
\begin{lem} By using Algorithm \ref{alg:top_two} and \eqref{eqn:hatpi}, we can have the estimators $(\hat{\bsp},\hat{\bsq})$ for the worker reliability vector $\bsp$ and the confusion probability vector $\bsq$ satisfying $\P\left(|\hat{\bsq}-\bsq|_{\infty}<\delta\right)=1-o(1)$ and $\P\left(|\bsp-\hat{\bsp}|_\infty<\delta_2\right)=1-o(1)$ as $n,m\to\infty$ when the sampling probability $s$ satisfies $s=\Theta\left(\frac{\log m}{\delta^2\|\bsp\|_2^2}+\frac{\log n}{\delta_2^2m}\right)$.
\end{lem}
\fi
The final step refines the estimates for the top two answers by using the plug-in MLE where the estimated $(\hat{\bsp},\hat{\bsq})$ are placed in $(\bsp,\bsq)$ at the oracle MLE, which finds $(\hat{g}_j,\hat{h}_j)\in[K]^2\backslash \{(1,1),(1,2),\dots,(K,K)\}$ such that  $(\hat{g}_j,\hat{h}_j):=\argmax_{(a,b)\in[K]^2, a\neq b}\sum_{i=1}^n \log \mathbb{P}(A_{ij} \vert \bsp, q_j, (a,b))$ as in \eqref{eqn:plug-in-mle}. Our complete algorithm is presented in Algorithm \ref{alg:plugin_MLE}.
%\begin{align}\label{eqn:opt_MLE}
%        (\hat{g}_j, \hat{h}_j) 
%        &= \argmax_{g_j \neq h_j} \sum_{i=1}^n \left(\log \left(\frac{Kp_i q_j}{1-p_i} + 1\right) \mathbb{I}(A_{ij} = g_j)+ \log \left(\frac{Kp_i (1-q_j)}{1-p_i}  + 1\right) \mathbb{I}(A_{ij} = h_j)\right)
%\end{align}

\begin{algorithm}[t]
	\caption{Plug-in MLE (TopTwo2 Algorithm) }
	\label{alg:plugin_MLE}
\begin{algorithmic}[1]
\REQUIRE{data matrix $\bsA\in\{0,1,\dots,K\}^{n\times m}$ and the sample splitting rate $s_1>0$.}
\ENSURE{estimated top two answers $\{(\hat{g}^{\mathsf{MLE}}_j,\hat{h}^{\mathsf{MLE}}_j)\}_{j=1}^m$ and confusion probability vector $\hat{\bsq}$.}
%	\STATE {\bfseries Input:} data matrix $\bsA\in\{0,1,\dots,K\}^{n\times m}$ and the sample splitting rate $s_1>0$.
	\STATE  Randomly split $\bsA$ into $\bsA^1$ and $\bsA^2$ by defining $\bsA^1:=\bsA \circ \bsS$ and $\bsA^2=\bsA \circ (\mathbbm{1}_{n \times m}-\bsS)$ 
	where  $\bsS$ is an $n \times m$ matrix whose entries are i.i.d. with Bernoulli($s_1$) and $\circ$ is an entrywise product.
	\STATE Apply Algorithm \ref{alg:top_two} to $\bsA^1$ to yield estimates for top-two answers $\{(\hat{g}_j,\hat{h}_j)\}_{j=1}^m$ and confusion probability vector $\hat{\bsq}$.
	\STATE By using $\{(\hat{g}_j,\hat{h}_j)\}_{j=1}^m$ and $\bsA^2$, calculate the estimate for worker reliability vector $\hat{\bsp}$ as in \eqref{eqn:hatpi}.
	\STATE By using the whole $\bsA$ and $(\hat{\bsp},\hat{\bsq})$, find the plug-in MLE estimates $(\hat{g}^{\mathsf{MLE}}_j, \hat{h}^{\mathsf{MLE}}_j)$ by
	\beq\label{eqn:plug-in-mle}
        \begin{split}
        \argmax_{a,b\in[K]^2, a\neq b} \sum_{i=1}^n &\log \left(\frac{K\hat{p}_i \hat{q}_j}{1-\hat{p}_i} + 1\right) \mathbbm{1}(A_{ij} = a) \\
        &+ \log \left(\frac{K\hat{p}_i (1-\hat{q}_j)}{1-\hat{p}_i}  + 1\right) \mathbbm{1}(A_{ij} = b).
        \end{split}
	\eeq
%	\STATE {\bfseries Output:} estimated top two answers $\{(\hat{g}^{\mathsf{MLE}}_j,\hat{h}^{\mathsf{MLE}}_j)\}_{j=1}^m$.
\end{algorithmic}
\end{algorithm}

The time complexity of Alg. \ref{alg:plugin_MLE} is $O(m^2\log m+ nmK^2)$, since the SVD in Alg. \ref{alg:top_two} can be computed via power iterations within $O(m^2\log m)$ steps \citep{boutsidis2015spectral}, and the step for finding the pair of answers maximizing \eqref{eqn:plug-in-mle} requires $O(nmK^2)$ steps.

\section{Performance Analysis}\label{sec:analysis}

To state our main theoretical results, we first need to introduce some notation and assumptions.
Let $\mu^{(i,j)}_{(a,b),k}$ denote the probability that a worker $i\in[n]$ gives the label $k\in[K]$ for the assigned task $j\in[m]$, whose top two answers are $(g_j,h_j)=(a,b)$.
Note that $\mu^{(i,j)}_{(a,b),k}$  can be written in terms of $(p_i,q_j)$ from the distribution in \eqref{eqn:A_dist} for every $a,b,k\in[K]^3$.
Let $\boldsymbol{\mu}^{(i,j)}_{(a,b)}=[\mu^{(i,j)}_{(a,b),1}\;\; \mu^{(i,j)}_{(a,b),2}\;\;\cdots\;\;\mu^{(i,j)}_{(a,b),K}]^\top$.
We introduce a quantity that measures the average ability of workers in distinguishing the ground-truth pair of top-two answers $(g_j,h_j)$ from any other pair $(a,b)\in[K]^2/\{(g_j,h_j)\}$ for the task $j\in[m]$.
We define
\begin{align}
%D&:=\sum_{i=1}^n  \mathbb{D}_\mathsf{KL}\left(\boldsymbol{\mu}^{(i,j)}_{(g_j,h_j)},\boldsymbol{\mu}^{(i,j)}_{(a,b)}\right),\\
\overline{D}^{(j)}&:=\min_{(g_j,h_j)\neq (a,b)}\frac{1}{n}\sum_{i=1}^n  \mathbb{D}_\mathsf{KL}\left(\boldsymbol{\mu}^{(i,j)}_{(g_j,h_j)},\boldsymbol{\mu}^{(i,j)}_{(a,b)}\right); \\
\overline{D}&:=\min_{j\in[m]}\overline{D}^{(j)},
\end{align}
where $ \mathbb{D}_\mathsf{KL}(P,Q):=\sum_i P(i)\log(P(i)/Q(i))$ is the KL-divergence between $P$ and $Q$. 
Note that $\overline{D}^{(j)}$ is strictly positive if $q_j\in(1/2,1)$  and there exists at least one worker $i$ with $p_i>0$, so that $(g_j,h_j)$ can be distinguished from any other $(a,b)\in[K]^2/\{(g_j,h_j)\}$ statistically in \eqref{eqn:A_dist}. We define $\overline{D}$ as the minimum of $\overline{D}^{(j)}$ over $j\in[m]$, indicating the average ability of workers in distinguishing $(g_j,h_j)$ from any other $(a,b)$ for the most difficult task in the set of tasks.

%We split the performance analysis of our algorithm into two parts.
Theorem \ref{thm:main} states the performance guarantees for Algorithm~\ref{alg:top_two} by providing sufficient conditions for achieving an arbitrarily accurate estimation of the top-two answers and the confusion probability. 

\begin{thm}[Performance Guarantees for Algorithm~\ref{alg:top_two}]\label{thm:main}
For any $\epsilon,\delta_1>0$, if the sampling probability $s\cdot s_1= \Omega\left(\frac{1}{\delta_1^2 \|\bsp\|_2^2}\log\frac{K}{\epsilon}\right)$, Algorithm~\ref{alg:top_two} guarantees the recovery of the ordered top-two answers $(g_j,h_j)$ with probability at least $1-\epsilon$ for any task $j\in[m]$ having $q_j\in(1/2,1)$, i.e., 
\beq\label{eqn:rec_gh_main}
\P\left((\hat{g}_j,\hat{h}_j)=(g_j,h_j)\right)\geq1-\epsilon \quad \text{for all }j\in[m], % \text{ with }q_j\in(1/2,1),
\eeq
and also guarantees the recovery of the confusion probability $q_j$ with
\beq\label{eqn:rec_q_main}
\P\left(|\hat{q}_j-{q}_j|<\delta_1\right)\geq 1-\epsilon\quad  \text{for all }j\in[m],
\eeq
where the number $m$ of tasks is sufficiently large and the number of workers scales as $n=O(m/\log m)$.
\end{thm}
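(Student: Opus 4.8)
\emph{Proof strategy.} The plan is to reduce both claims to a single entrywise estimate: that, with probability at least $1-\epsilon$,
\beq\label{eqn:prop-target}
\max_{k\in[K]}\bigl|\Delta v_j^{(k)}-\|\bsp\|_2\,\Delta r_j^{(k)}\bigr|\le c\,\|\bsp\|_2\,\delta_1 \quad\text{and}\quad \bigl|l-\|\bsp\|_2\bigr|=o(\|\bsp\|_2)
\eeq
for a small universal constant $c$. Granting this, the two conclusions are short: by \eqref{eqn:delta_r} the sequence $k\mapsto\Delta r_j^{(k)}$ has a unique minimum at $g_j$ (value $\tfrac1K-q_j$) and a unique runner-up at $h_j$ (value $\tfrac1K-(1-q_j)$), with gaps $2q_j-1$ to the runner-up and $1-q_j$ to the rest, both strictly positive precisely because $q_j\in(1/2,1)$; so once the hidden constant in the sampling hypothesis is chosen to force $\delta_1$ below these margins, $\argmin_k\Delta v_j^{(k)}$ and its runner-up coincide with $(g_j,h_j)$, which is \eqref{eqn:rec_gh_main}. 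And then $\hat q_j=\tfrac1K-\Delta v_j^{(\hat g_j)}/l = \tfrac1K-(\tfrac1K-q_j)(1+o(1))+O(\delta_1)=q_j+O(\delta_1)$ — the $o(1)$ coming from $l$ being absorbed below $\delta_1$ for $m$ large — which gives \eqref{eqn:rec_q_main} after adjusting constants.

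So the work is \eqref{eqn:prop-target}, and I would start from a spectral analysis. Write $\bsX^{(k)}=s'\bsp(\bsr^{(k)})^\top+\bsE^{(k)}$ with $s'=ss_1/2$; the entries of $\bsE^{(k)}$ are independent, mean zero, bounded by $2$, of variance $\le s'$, so I would invoke a sharp operator-norm bound for rectangular matrices with independent bounded entries (a Bandeira--van Handel / non-commutative-Khintchine-type bound rather than plain matrix Bernstein, to avoid a spurious $\sqrt{\log m}$ factor), giving $\|\bsE^{(k)}\|_{\mathrm{op}}=O(\sqrt{s'm})$ with probability $1-o(1)$; it is here that the regime $n=O(m/\log m)$ matters, keeping the lower-order terms negligible against $\sqrt{s'm}$. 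Since $\|\bsr^{(k)}\|_2=\Theta(\sqrt m)$, the mean $s'\bsp(\bsr^{(k)})^\top$ has leading singular value $\Theta(s'\|\bsp\|_2\sqrt m)$, so Davis--Kahan gives $\min_\pm\|\bsu^{(k)}\mp\bsp^*\|_2=O(1/(\sqrt{s'}\,\|\bsp\|_2))$, small under $ss_1=\Omega(\tfrac{1}{\delta_1^2\|\bsp\|_2^2}\log\tfrac K\epsilon)$; the ``majority-positive'' convention fixes the sign to $\langle\bsu^{(k)},\bsp^*\rangle>0$ since $\bsp^*$ is nonnegative with only $o(n)$ zero entries. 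Finally, $\|\bsp\|_2\ge\eta\sqrt n$ puts every entry of $\bsp^*$ strictly below the trimming threshold $2/(\eta\sqrt n)$, so trimming only shrinks the perturbation, $\|\wt\bsu^{(k)}-\bsp^*\|_2=O(1/(\sqrt{s'}\,\|\bsp\|_2))$, and Cauchy--Schwarz gives $\bsp^\top\wt\bsu^{(k)}=\|\bsp\|_2\,(1+O(1/(\sqrt{s'}\,\|\bsp\|_2)))$.

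Next I would handle the projection. By the sample splitting $\wt\bsu^{(k)}$ is a function of $\bsB$ alone, hence independent of $\bsY^{(k)}$, and conditionally on $\wt\bsu^{(k)}$ each coordinate $v_j^{(k)}=\tfrac1{s'}\sum_i Y_{ij}^{(k)}\wt u_i^{(k)}$ is a sum of independent terms bounded by $2/(\eta\sqrt n)$ with total variance $\le s'$ and conditional mean $r_j^{(k)}\,\bsp^\top\wt\bsu^{(k)}$; Bernstein's inequality (with $\|\bsp\|_2\le\sqrt n$ used to dominate the linear term) gives $|v_j^{(k)}-r_j^{(k)}\bsp^\top\wt\bsu^{(k)}|\le\|\bsp\|_2\delta_1/C$ except with probability $\exp(-\Theta(s'\|\bsp\|_2^2\delta_1^2))$. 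Together with the previous paragraph this yields $|v_j^{(k)}-\|\bsp\|_2 r_j^{(k)}|=O(1/\sqrt{s'})+\|\bsp\|_2\delta_1/C=O(\|\bsp\|_2\delta_1)$ (using $s'\|\bsp\|_2^2\delta_1^2=\Omega(1)$), and differencing in $k$ gives the first half of \eqref{eqn:prop-target}; a union bound over the $K$ binary levels and the $O(1)$ auxiliary events costs only a factor $K$, absorbed into $\log\tfrac K\epsilon$ (no union over $j$ is needed, since the theorem asserts per-task rather than simultaneous guarantees). For $l$: on these events $l_j=\|\bsp\|_2(1+O(\delta_1))$ whenever $(\hat g_j,\hat h_j)=(g_j,h_j)$ (using $\sum_k\Delta r_j^{(k)}=0$) and $|l_j|=O(\|\bsp\|_2)$ otherwise, so averaging over the $m$ tasks — each $|l_j|=O(\|\bsp\|_2)$ with high probability and the per-task error probability small — forces $l=\|\bsp\|_2(1+o(1))$; this is where ``$m$ sufficiently large'' is used.

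The hardest part will be the spectral analysis above: the operator-norm bound has to be sharp enough that no extra logarithmic factor leaks into the sampling threshold (hence the specific regime $n=O(m/\log m)$), and Davis--Kahan delivers only up-to-sign, $\ell_2$-closeness, so I must pin down the sign and then — through the trimming step — upgrade the $\ell_2$ bound to control of the scalar functional $\bsp^\top\wt\bsu^{(k)}$ on which everything downstream rests. By comparison the conditional Bernstein estimate, the deterministic arithmetic of the $\Delta r_j^{(k)}$, and the averaging that pins down $l$ are routine bookkeeping, modulo taking the hidden constant in the sampling hypothesis large enough in terms of $\epsilon$ and $\delta_1$.
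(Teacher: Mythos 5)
Your proposal is correct and takes essentially the same route as the paper's own proof: the Bandeira--van Handel spectral-norm bound on the perturbation, Wedin/Davis--Kahan for the leading left singular vector, the observation that trimming cannot worsen closeness to $\bsp^*$, split-sample independence plus Bernstein for the entrywise control of $v_j^{(k)}$, a union bound over the $K$ binary levels, and finally the gap structure of \eqref{eqn:delta_r} together with the estimate $l$ of $\|\bsp\|_2$ to recover $(g_j,h_j)$ and $\hat q_j$. The only minor deviations (bounding $\bsp^\top\tilde{\bsu}^{(k)}$ by Cauchy--Schwarz with a linear-in-$\sin\theta$ error rather than the paper's quadratic $\cos\theta$/norm decomposition, and your slightly more explicit averaging argument for $l$) still yield the stated sampling rate, so the approaches coincide in substance.
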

For a task $j$ with $q_j=1$, it is impossible to recover $h_j$, since $h_j$ cannot be distinguished from the rest of wrong labels $c\in[K]\backslash{\{g_j\}}$ statistically from \eqref{eqn:A_dist}. For such tasks, we can still guarantee the recovery of $g_j$ with accuracy $\P\left(\hat{g}_j=g_j\right)\geq1-\epsilon$ under the conditions in Theorem \ref{thm:main}.
By using Theorem~\ref{thm:main}, we can also find the sufficient conditions to guarantee the recovery of paired top-two answers for \textit{all} tasks and $\bsq$ with an arbitrarily small $\ell_\infty$-norm error with probability at least $1-\epsilon$. 

\begin{cor}\label{cor:exact_alg1} For any $\epsilon,\delta_1>0$, if the sampling probability $s\cdot s_1= \Omega\left(\frac{1}{\delta_1^2 \|\bsp\|_2^2}\log\frac{mK}{\epsilon}\right)$, Algorithm~\ref{alg:top_two} guarantees the recovery of $\{(g_j,h_j)\}_{j=1}^m$ and $\bsq$ with probability at least $1-\epsilon$ as $m\to\infty$ such that
\begin{align}
&\P\left((\hat{g}_j,\hat{h}_j)=(g_j,h_j),\forall j\in[m]\right)\geq1-\epsilon, \text{ and} \\
&\P\left(\|\bsq-\hat{\bsq}\|_\infty<\delta_1\right)\geq 1-\epsilon.
\end{align}
\end{cor}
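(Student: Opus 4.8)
The plan is to obtain Corollary~\ref{cor:exact_alg1} from Theorem~\ref{thm:main} by a union bound over the $m$ tasks, after redoing the failure-probability accounting in the proof of Theorem~\ref{thm:main} with a per-task target of $\epsilon/m$ in place of $\epsilon$. The structural point is that the proof of Theorem~\ref{thm:main} controls the randomness of Algorithm~\ref{alg:top_two} in two layers: a \emph{global} layer --- the accuracy of the shared objects $\tilde{\bsu}^{(k)}$ (an estimate of $\pm\bsp^{*}=\pm\bsp/\|\bsp\|_2$), $k\in[K]$, and the scalar $l$ (an estimate of $\|\bsp\|_2$) --- and, conditioned on those, a \emph{local} layer per task $j$, namely the concentration of the scalars $v_j^{(k)}=\frac{1}{s'}\sum_{i=1}^{n}Y_{ij}^{(k)}\tilde u_i^{(k)}$ ($1\le k<K$) around $\|\bsp\|_2\, r_j^{(k)}$, up to a bias the global layer forces to be small since $\E[v_j^{(k)}\mid\tilde{\bsu}^{(k)}]=(\bsp^{\top}\tilde{\bsu}^{(k)})\,r_j^{(k)}$. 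Because the $\bsB/\bsC$ split makes $\bsY^{(k)}$ independent of $\bsX^{(k)}$, the local statements are valid conditionally on the global data, and the union bound over $j$ costs nothing beyond the extra $\log m$ appearing in the rate.

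First I would fix the global event $\mathcal{E}_0$: that for all $k\in[K]$, $\tilde{\bsu}^{(k)}$ is within the $\ell_2$ and entrywise accuracy of $\pm\bsp^{*}$ demanded in the proof of Theorem~\ref{thm:main}, and $|l-\|\bsp\|_2|$ is small. Since $\bsX^{(k)}-\E[\bsX^{(k)}]$ has bounded independent entries while $\E[\bsX^{(k)}]=s'\bsp(\bsr^{(k)})^{\top}$ has a single nonzero singular value of order $s'\sqrt{nm}$ that dominates the $\Theta(\sqrt{s'm})$ operator-norm fluctuation once $s'n\gg1$ (which the sampling rate ensures), matrix concentration together with Davis--Kahan and the leave-one-out bound used in the proof of Theorem~\ref{thm:main} give $\P(\mathcal{E}_0^{c})=o(1)$, in particular $\le\epsilon/2$ once $m$ is sufficiently large; the averaging $l=\frac1m\sum_j l_j$ only makes the $l$-part of $\mathcal{E}_0$ more concentrated. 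This layer does \emph{not} need the $\log m$ strengthening of the sampling rate --- it is the local layer that consumes it.

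Next, on $\mathcal{E}_0$, I would invoke the per-task estimate from the proof of Theorem~\ref{thm:main}: a Bernstein bound applied to $v_j^{(k)}-\E[v_j^{(k)}\mid\tilde{\bsu}^{(k)}]=\frac1{s'}\sum_{i=1}^{n}(Y_{ij}^{(k)}-\E Y_{ij}^{(k)})\,\tilde u_i^{(k)}$ --- each summand bounded by $\frac{2}{s'\eta\sqrt n}$ after trimming, total conditional variance $O(\|\tilde{\bsu}^{(k)}\|_2^2/s')=O(1/s')$ --- shows that, except on an event of probability at most $K\exp\!\big(-\Omega(\delta_1^2\|\bsp\|_2^2\, s s_1)\big)$, all the $v_j^{(k)}/l$ ($1\le k<K$) lie within $O(\delta_1)$ of $r_j^{(k)}$, which by the structure \eqref{eqn:delta_r} suffices, exactly as in the proof of Theorem~\ref{thm:main}, for the first and second $\argmin$ of $\Delta v_j^{(\cdot)}$ to return $(g_j,h_j)$ and for $\hat q_j=1/K-\Delta v_j^{(\hat g_j)}/l$ to lie within $\delta_1$ of $q_j$ (using $q_j\in(1/2,1)$). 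With $s\cdot s_1=\Omega\!\big(\tfrac{1}{\delta_1^2\|\bsp\|_2^2}\log\tfrac{mK}{\epsilon}\big)$ this failure probability is at most $\epsilon/(2m)$, so a union bound over $j\in[m]$ yields, conditionally on $\mathcal{E}_0$, the simultaneous recovery of all $(g_j,h_j)$ and $\max_{j}|\hat q_j-q_j|<\delta_1$ with probability $\ge1-\epsilon/2$. Intersecting with $\mathcal{E}_0$ gives both displayed bounds of the corollary, the second being exactly $\|\bsq-\hat{\bsq}\|_\infty<\delta_1$.

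The step I expect to be the real obstacle is keeping the global layer honest: one must check that the entrywise (leave-one-out) control of $\tilde{\bsu}^{(k)}$ --- the delicate ingredient of the proof of Theorem~\ref{thm:main} --- fails with probability only polynomially small in $m$ and $1/\epsilon$, uniformly over the $n$ coordinates and the $K$ thresholds, so that ``$m$ sufficiently large'' genuinely delivers $\P(\mathcal{E}_0^{c})\le\epsilon/2$ without pushing the required $s\cdot s_1$ past $\log(mK/\epsilon)$. Everything else --- replacing $\epsilon$ by $\Theta(\epsilon/m)$ in the per-task tail and summing --- is bookkeeping, and the $\ell_\infty$ bound on $\bsq$ is merely the conjunction of the per-task confusion-probability bounds.
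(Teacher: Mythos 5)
Your proposal is correct and follows essentially the same route as the paper: the paper proves Corollary~\ref{cor:exact_alg1} by applying the entrywise bound of Lemma~\ref{lem:entrywisebd_new} with a union bound over all $j\in[m]$ and $k\in[K]$ (i.e., your $\epsilon\mapsto\Theta(\epsilon/(mK))$ bookkeeping, which is what produces the $\log(mK/\epsilon)$ factor), with the shared spectral events of Lemmas~\ref{lem:spnorm}--\ref{lem:sin_uktilde_p} failing only with probability $(n+m)^{-8}$, exactly your global/local split. The one obstacle you flag is not actually present in the paper: the entrywise control of $\tilde{\bsu}^{(k)}$ comes from the deterministic trimming step (entries exceeding $2/(\eta\sqrt{n})$ are set to zero, Lemma~\ref{lem:sin_uktilde_p}), not from a leave-one-out analysis, so the global layer needs no further strengthening.
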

Proofs of Thm. \ref{thm:main} and Cor. \ref{cor:exact_alg1} are available in Appendix \S\ref{app:thm:main}.

We next analyze the performance of Algorithm \ref{alg:plugin_MLE}, which uses Algorithm~\ref{alg:top_two} as the first stage.
%In Algorithm \ref{alg:plugin_MLE}, we randomly split the original observation matrix $\bsA$ with probability $s_1$ and $1-s_1$ to generate $\bsA^1$ and $\bsA^2$, and use only $\bsA^1$ for Algorithm~\ref{alg:top_two}. Thus, the statements in Theorem \ref{thm:main} and Corollary \ref{cor:exact_alg1} hold if $s\cdot s_1= \Omega\left(\frac{1}{\delta_1^2 \|\bsp\|_2^2}\log\frac{K}{\epsilon}\right)$ and $s\cdot s_1= \Omega\left(\frac{1}{\delta_1^2 \|\bsp\|_2^2}\log\frac{mK}{\epsilon}\right)$, respectively.
Before providing the main theorem for Algorithm \ref{alg:plugin_MLE}, we state a lemma charactering a sufficient condition for estimating the worker reliability vector $\bsp$ from \eqref{eqn:hatpi}  with an arbitrarily small $\ell_\infty$ error.

\begin{lem}\label{lem:rec_p} Conditioned on $(\hat{g}_j,\hat{h}_j)=(g_j,h_j)$ for all $j\in[m]$, if $s(1-s_1)=\Omega\left(\frac{1}{\delta_2^2 m}\log\frac{n}{\epsilon}\right)$, the estimator $\hat{p}_i$ defined in \eqref{eqn:hatpi} of Algorithm \ref{alg:plugin_MLE} guarantees 
$
\P\left(\|\bsp-\hat{\bsp}\|_\infty<\delta_2\right)\geq1-\epsilon
$
for any $\epsilon>0$.
\end{lem}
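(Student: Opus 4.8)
The plan is to recognize $\hat p_i$ as an essentially unbiased empirical average of $m$ independent indicators and to apply Bernstein's inequality followed by a union bound over the $n$ workers; the effective per-worker sample size is $\Theta\bigl(ms(1-s_1)\bigr)$, which is exactly what produces the stated lower bound on $s(1-s_1)$. Concretely, fix a worker $i$, condition on the Stage-1 success event $E:=\{(\hat g_j,\hat h_j)=(g_j,h_j)\ \forall j\in[m]\}$, and write $Z_i:=\sum_{j=1}^m\mathbbm{1}(A^2_{ij}=g_j\text{ or }h_j)$, so that \eqref{eqn:hatpi} reads $\hat p_i=\tfrac{K}{K-2}\bigl(\tfrac{1}{ms(1-s_1)}Z_i-\tfrac2K\bigr)$. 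The structural fact I would use is that, conditionally on the sampling mask $\bsS$, the matrices $\bsA^1$ and $\bsA^2$ are functions of \emph{disjoint} (hence independent) subsets of the entries $\{A_{ij}\}$, so $\bsA^2$ is independent of the $\bsA^1$-measurable event $E$ given $\bsS$, and each retained entry $A^2_{ij}$ still follows \eqref{eqn:A_dist}. From \eqref{eqn:A_dist}, $\P(A^2_{ij}\in\{g_j,h_j\})=s(1-s_1)\gamma_i$ with $\gamma_i:=\tfrac{K-2}{K}p_i+\tfrac2K\in(\tfrac2K,1]$, and these events are independent across $j\in[m]$; hence $\E[Z_i]=ms(1-s_1)\gamma_i=:\mu_i$ and
\[
\hat p_i-p_i=\frac{K}{K-2}\cdot\frac{Z_i-\mu_i}{ms(1-s_1)},
\]
a centered average of $m$ independent Bernoulli contributions.

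Next I would apply Bernstein's inequality (or the multiplicative Chernoff bound) to $Z_i$, using $\mu_i=\Theta(ms(1-s_1))$ (since $\gamma_i=\Theta(1)$): taking $t=\tfrac{K-2}{K}\delta_2\,ms(1-s_1)=\Theta(\delta_2\mu_i)$ and $\delta_2=O(1)$ (so $t=O(\mu_i)$) gives $\P(|Z_i-\mu_i|\ge t)\le 2\exp\bigl(-\Theta(t^2/\mu_i)\bigr)=2\exp\bigl(-c\,\delta_2^2\,ms(1-s_1)\bigr)$ for a constant $c>0$ depending only on $K$, and the event $\{|Z_i-\mu_i|\ge t\}$ is precisely $\{|\hat p_i-p_i|\ge\delta_2\}$ by the display above. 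A union bound over $i\in[n]$ then yields $\P(\|\bsp-\hat\bsp\|_\infty\ge\delta_2\mid E)\le 2n\exp(-c\,\delta_2^2\,ms(1-s_1))$, and forcing this below $\epsilon$ is exactly the hypothesis $s(1-s_1)=\Omega\bigl(\tfrac{1}{\delta_2^2 m}\log\tfrac n\epsilon\bigr)$.

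The mathematical core (Bernstein plus a union bound) is routine; the one point that needs genuine care — and which I expect to be the main obstacle — is the sample-splitting bookkeeping. One must verify that conditioning on the $\bsA^1$-measurable event $E$ leaves $\bsA^2$ with its nominal per-entry rate $s(1-s_1)$ (this is where the conditional independence of $\bsA^1,\bsA^2$ given $\bsS$ is used), and one must control the fluctuation of $N_i:=|\{j:(i,j)\text{ routed to }\bsA^2\}|\sim\mathrm{Bin}(m,1-s_1)$ around $m(1-s_1)$. I would handle this by conditioning on $\bsS$ first, so that $Z_i\mid\bsS\sim\mathrm{Bin}(N_i,s\gamma_i)$ and is independent of $E$, applying Bernstein conditionally, and then integrating out $\bsS$ on the high-probability event $\{\,|N_i-m(1-s_1)|\le\tfrac{K-2}{2K}\delta_2\,m(1-s_1)\ \text{for all }i\,\}$, whose complement has probability $\le 2n\exp(-\Theta(\delta_2^2 m(1-s_1)))\le 2n\exp(-\Theta(\delta_2^2 ms(1-s_1)))$ by Chernoff and a union bound — a sub-dominant contribution that does not affect the stated rate. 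This $\bsS$-conditioning step, rather than any individual inequality, is what must be set up carefully.
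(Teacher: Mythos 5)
Your proposal is correct and follows essentially the same route as the paper: Bernstein's inequality applied to $\sum_{j}\mathbbm{1}(A^2_{ij}\in\{g_j,h_j\})$, whose mean is $ms(1-s_1)\bigl(\tfrac{K-2}{K}p_i+\tfrac{2}{K}\bigr)$, with deviation threshold $\tfrac{K-2}{K}\delta_2\,ms(1-s_1)$, followed by a union bound over the $n$ workers, which yields exactly the stated condition on $s(1-s_1)$. The only difference is that the paper simply asserts that $\bsA^2$ is independent of $(\hat g_j,\hat h_j)$ and applies Bernstein directly to the $m$ indicators (each with variance at most $s(1-s_1)$), whereas you justify this by conditioning on the mask $\bsS$ and controlling $N_i$ — a more careful piece of bookkeeping, but not a genuinely different argument.
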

Combining Corollary \ref{cor:exact_alg1} and Lemma \ref{lem:rec_p}, we can obtain the estimators $(\hat{\bsp},\hat{\bsq})$ of the worker reliability vector $\bsp$ and the confusion probability vector $\bsq$, respectively, with $\ell_\infty$-norm error bounded by any arbitrarily small $\delta>0$ with probability at least $1-2\epsilon$ if 
\beq\label{eqn:total_s_pq}
\begin{split}
s=s\cdot s_1+s(1-s_1)&=\Omega\left(\frac{\log(mK/\epsilon)}{\delta^2 \|\bsp\|_2^2}+\frac{\log(n/\epsilon)}{\delta^2 m}\right)\\
&=\Omega\left(\frac{\log(mK/\epsilon)}{\delta^2 \|\bsp\|_2^2}\right)
\end{split}
\eeq
where the last equality is from the assumption that $\|\bsp\|_2=\Theta(\sqrt{n})$ and $n=O(m/\log m)$. 
In this regime, the sample complexity for estimating the task difficulty $\bsq$ is greater than that required for estimating worker reliability $\bsp$.
To make the sampling probability $s<1$, we need $n=\Omega(\log m)$.

Our second theorem characterizes the sufficient condition on the sampling probability $s$ to guarantee the recovery of the pair of top-two answers for all tasks by~\eqref{eqn:plug-in-mle} of Alg. \ref{alg:plugin_MLE}, when a sufficiently accurate estimation of $(\bsp,\bsq)$ is provided.

\begin{thm}\label{thm:plugmin_mle}
Assume that there is a positive scalar $\rho$ such that $\mu^{(i,j)}_{(g_j,h_j),c}\geq \rho$ for all $(i,j,g_j,h_j,c)\in[n]\times[m]\times [K]^3$.
For any $\epsilon>0$, if $(\hat{\bsp},\hat{\bsq})$ are given with
\beq\label{eqn:cond_delta}
\max\{\|\bsp-\hat{\bsp}\|_\infty, \|\bsq-\hat{\bsq}\|_\infty\}\leq \delta:=\min\left\{\frac{\rho}{4},\frac{\rho\overline{D}}{4(6+\overline{D})}\right\},
\eeq
and the sampling probability satisfies
$$s=\Omega\left(\frac{\log(1/\rho)\log(mK^2/\epsilon)+\log(m/\epsilon)}{n\overline{D}}\right),$$then for any $\epsilon>0$ the estimates of $\{(g_j,h_j)\}_{j=1}^m$ from \eqref{eqn:plug-in-mle} of Algorithm~\ref{alg:plugin_MLE} guarantees
\beq
\P\left((\hat{g}^{\mathsf{MLE}}_j,\hat{h}^{\mathsf{MLE}}_j)=(g_j,h_j),\forall j\in[m]\right)\geq1-\epsilon.
\eeq
\end{thm}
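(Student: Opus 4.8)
The rule in \eqref{eqn:plug-in-mle} assigns to an ordered pair $(a,b)$ the score $\sum_{i=1}^n[\hat w^g_i\mathbbm{1}(A_{ij}=a)+\hat w^h_i\mathbbm{1}(A_{ij}=b)]$, where $\hat w^g_i:=\log(\tfrac{K\hat p_i\hat q_j}{1-\hat p_i}+1)$ and $\hat w^h_i:=\log(\tfrac{K\hat p_i(1-\hat q_j)}{1-\hat p_i}+1)$. Treating $(\hat{\bsp},\hat{\bsq})$ as fixed inputs satisfying \eqref{eqn:cond_delta}, I would fix a task $j$ (necessarily with $q_j<1$, so $\overline{D}^{(j)}>0$) and an alternative pair $(a,b)\neq(g_j,h_j)$, and bound the probability of the ``confusion event'' $\mathcal{E}_j(a,b)$ that $(a,b)$ scores at least as high as $(g_j,h_j)$. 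Since there are fewer than $K^2$ such pairs per task, a union bound over pairs and over the $m$ tasks reduces the theorem to showing $\mathbb{P}(\mathcal{E}_j(a,b))\le\epsilon/(mK^2)$. Writing $\widehat{\Lambda}:=\sum_{i=1}^n\hat T_i$ for the score of $(g_j,h_j)$ minus the score of $(a,b)$, the $\hat T_i$ are independent (given the fixed $(\hat{\bsp},\hat{\bsq})$), bounded, and vanish on workers not assigned to $j$, and $\mathcal{E}_j(a,b)=\{\widehat{\Lambda}\le 0\}$. I would bound $\mathbb{P}(\widehat{\Lambda}\le 0)\le\mathbb{E}[e^{-\widehat{\Lambda}/2}]=\prod_i\mathbb{E}[e^{-\hat T_i/2}]$ by a Chernoff argument with the likelihood-ratio tilting parameter $1/2$.

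\textbf{Controlling the plug-in mismatch.} Decompose $\hat T_i=T_i^\star+e_i$, where $T_i^\star$ is the same statistic formed with the \emph{true} weights $w^g_i=\log(\tfrac{Kp_iq_j}{1-p_i}+1)$, $w^h_i=\log(\tfrac{Kp_i(1-q_j)}{1-p_i}+1)$; then $T_i^\star$ is exactly a log-likelihood-ratio increment between $\boldsymbol{\mu}^{(i,j)}_{(g_j,h_j)}$ and $\boldsymbol{\mu}^{(i,j)}_{(a,b)}$. The assumption $\mu^{(i,j)}_{(g_j,h_j),c}\ge\rho$ forces $(1-p_i)/K\ge\rho$, hence $p_i\le 1-K\rho$; this keeps every response probability in $[\rho,1]$, puts the weights in $[0,\log(1/\rho)]$, and makes $(p,q)\mapsto\log(\tfrac{Kpq}{1-p}+1)$ Lipschitz with constant $O(1/\rho)$. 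Combined with $\|\bsp-\hat{\bsp}\|_\infty,\|\bsq-\hat{\bsq}\|_\infty\le\delta$, this gives the deterministic bound $|e_i|\le c\delta/\rho$ on each assigned worker. Hence $\mathbb{E}[e^{-\hat T_i/2}]\le e^{c\delta/(2\rho)}\mathbb{E}[e^{-T_i^\star/2}]$, and $\mathbb{E}[e^{-T_i^\star/2}]=1-s(1-\mathrm{BC}_i)$ with $\mathrm{BC}_i:=\sum_k\sqrt{\mu^{(i,j)}_{(g_j,h_j),k}\,\mu^{(i,j)}_{(a,b),k}}$; and on the $\rho$-bounded simplex one has $1-\mathrm{BC}_i\ge c_\rho\,\mathbb{D}_\mathsf{KL}(\boldsymbol{\mu}^{(i,j)}_{(g_j,h_j)},\boldsymbol{\mu}^{(i,j)}_{(a,b)})$. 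Taking logs, summing over $i$, and using $\tfrac{1}{n}\sum_i\mathbb{D}_\mathsf{KL}(\cdot,\cdot)\ge\overline{D}^{(j)}\ge\overline{D}$ yields $-\log\mathbb{E}[e^{-\widehat{\Lambda}/2}]\ge sn(c_\rho\overline{D}-c\delta/\rho)$. The threshold $\delta\le\rho\overline{D}/(4(6+\overline{D}))$ in \eqref{eqn:cond_delta} is precisely what forces $c\delta/\rho$ to be at most a constant fraction of $c_\rho\overline{D}$, so that $\mathbb{P}(\mathcal{E}_j(a,b))\le\exp(-\Omega(sn\overline{D}))$.

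\textbf{Sample complexity and conclusion.} Two requirements on $s$ surface. The exponent $\Omega(sn\overline{D})$ must dominate $\log(mK^2/\epsilon)$; because each log-likelihood increment is controlled only within a window of width $\log(1/\rho)$, the implied constant absorbs a factor $1/\log(1/\rho)$, which explains the $\log(1/\rho)$ multiplying $\log(mK^2/\epsilon)$ in the statement. A lower-order concentration term --- in particular the event that a task draws too few informative (non-spammer) responses to localize $(g_j,h_j)$ --- contributes the additive $\log(m/\epsilon)$. Together these give $s=\Omega\bigl(\tfrac{\log(1/\rho)\log(mK^2/\epsilon)+\log(m/\epsilon)}{n\overline{D}}\bigr)$, and the union bound over the $m$ tasks and the $<K^2$ alternative pairs finishes the proof.

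\textbf{Main obstacle.} The crux is the mismatch analysis: substituting $(\hat{\bsp},\hat{\bsq})$ for $(\bsp,\bsq)$ in the likelihood must shrink the large-deviation exponent by only a constant factor. The lower bound $\rho$ on the label probabilities is indispensable --- without it the weights $\hat w^g_i,\hat w^h_i$ (and the true ones) blow up as $p_i\to1$, so neither boundedness nor Lipschitz continuity holds and $e_i$ is uncontrolled; with it, one still has to track carefully how the $\delta$-error propagates through the logarithms, the Bhattacharyya coefficient, and the comparison $1-\mathrm{BC}_i\ge c_\rho\mathbb{D}_\mathsf{KL}$, and this bookkeeping is what pins down the constant in the threshold $\delta\le\rho\overline{D}/(4(6+\overline{D}))$. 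A secondary technical point is establishing the reverse divergence bound $1-\mathrm{BC}\ge c_\rho\mathbb{D}_\mathsf{KL}$ on the $\rho$-truncated simplex, which is what ties the exponent to $\overline{D}$ rather than to a weaker affinity.
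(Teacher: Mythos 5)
You follow the same skeleton as the paper's proof: write the plug-in score gap as the true log-likelihood-ratio sum plus a plug-in mismatch, bound the mismatch deterministically through $\mu^{(i,j)}_{(a,b),k}\ge\rho$ and the $\delta$-closeness of $(\hat{\bsp},\hat{\bsq})$, show the true part is positive with failure probability $\exp\bigl(-\Omega(ns\overline{D}/\log(1/\rho))\bigr)$, and union bound over the $m$ tasks and fewer than $K^2$ alternative pairs. The paper does the concentration via Bernstein together with a second-moment lemma ($\E[l_i^2]\le \tfrac{2\log(1/\rho)}{1-\rho}\,s\,\mathbb{D}_{\mathsf{KL}}$) rather than your tilt-$1/2$ Chernoff/Bhattacharyya bound, and its additive $\log(m/\epsilon)$ term comes from a Bernstein bound on the \emph{total} number of responses per task (used to aggregate the per-response mismatch), not from a ``too few informative responses'' event.

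The genuine gap is in your final comparison, ``the threshold $\delta\le\rho\overline{D}/(4(6+\overline{D}))$ forces $c\delta/\rho$ to be a constant fraction of $c_\rho\overline{D}$.'' On the $\rho$-bounded simplex the best uniform reverse bound is $1-\mathrm{BC}\ge c\,\mathbb{D}_{\mathsf{KL}}/\log(1/\rho)$; a $\rho$-free $c_\rho$ is false (take $P=(1-\rho,\rho)$, $Q=(\rho,1-\rho)$: $\mathbb{D}_{\mathsf{KL}}\approx\log(1/\rho)$ while $1-\mathrm{BC}\le1$, and the same ratio occurs in this model when $p_i$ is near its maximum), and you yourself invoke $c_\rho\asymp1/\log(1/\rho)$ to explain the $\log(1/\rho)$ in the sample complexity. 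But then the stated $\delta$ only yields $c\delta/\rho\lesssim\overline{D}/(6+\overline{D})$, which is dominated by $c_\rho\overline{D}\asymp\overline{D}/\log(1/\rho)$ only when $\log(1/\rho)=O(6+\overline{D})$; if $\rho$ is very small while $\overline{D}$ is moderate (a few near-perfect workers among spammers, hard tasks), the mismatch swamps your exponent and the argument certifies the theorem only for the smaller radius $\delta\lesssim\rho\overline{D}/\log(1/\rho)$. The fix---and what the paper does---is to compare the mismatch to the \emph{mean} rather than to the exponent: show the true log-LR sum exceeds $ns\overline{D}/2$ with probability $1-\exp(-\Omega(ns\overline{D}/\log(1/\rho)))$ (Bernstein with the second-moment-vs-KL lemma; a fixed tilt $1/2$ cannot certify a below-half-the-mean deviation once $\log(1/\rho)$ is large), separately bound the number of responses to each task by $3ns/2$ (this is where $\log(m/\epsilon)$ enters), and then the total mismatch is at most $3ns\log\bigl(\rho/(\rho-4\delta)\bigr)<ns\overline{D}/2$ exactly under \eqref{eqn:cond_delta}. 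A secondary point: since $(\hat{\bsp},\hat{\bsq})$ are computed from $\bsA$, the factorization $\E[e^{-\widehat{\Lambda}/2}]=\prod_i\E[e^{-\hat{T}_i/2}]$ is not licensed by ``conditioning on the estimates''; you should first replace $\hat{T}_i$ by the data-only lower bound $T_i^\star-(c\delta/\rho)\,\mathbbm{1}(A_{ij}\neq0)$, valid uniformly over the $\delta$-ball, before invoking independence---this uniform reduction is implicitly what the paper's deterministic mismatch bound accomplishes.
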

Proofs of Lemma \ref{lem:rec_p} and Theorem \ref{thm:plugmin_mle}  are available in Appendix \S\ref{app:thm:plugin_MLE}.
The assumption in Theorem \ref{thm:plugmin_mle} that $\mu^{(i,j)}_{(g_j,h_j),c}\geq \rho$ for some $\rho>0$ holds if $p_i<1$ for all $i\in[n]$, i.e., there is no perfectly reliable worker. This assumption can be easily satisfied by adding an arbitrary small random noise to the worker answers as well.
By combining the statements in Corollary \ref{cor:exact_alg1}, Lemma \ref{lem:rec_p}, and Theorem \ref{thm:plugmin_mle} with $\delta_1=\delta_2=\delta$ for $\delta$ defined in \eqref{eqn:cond_delta}, we get the overall performance guarantee for Algorithm~\ref{alg:plugin_MLE}.
\begin{cor}[Performance Guarantees for Algorithm~\ref{alg:plugin_MLE}]\label{cor:alg2} Algorithm~\ref{alg:plugin_MLE} guarantees the recovery of top-two answers for all tasks with $\P\left((\hat{g}^{\mathsf{MLE}}_j,\hat{h}^{\mathsf{MLE}}_j)=(g_j,h_j),\forall j\in[m]\right)\geq1-\epsilon$ for any $\epsilon>0$ if $s$ satisfies
\beq\label{eqn:cor2_suff}
\begin{split}
s % =&\Omega\left(\frac{\log(mK/\epsilon)}{\delta^2 \|\bsp\|_2^2}+\frac{\log(1/\rho)\log(mK^2/\epsilon)+\overline{D}\log(m/\epsilon)}{n\overline{D}}\right)\\
=&{\Omega}\left(\frac{\log(m/\epsilon)}{\delta^2 \|\bsp\|_2^2}+\frac{\log(m/\epsilon)}{{n\overline{D}}}\right).
\end{split}
\eeq
\end{cor}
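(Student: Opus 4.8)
The plan is to chain the three results already in hand—Corollary~\ref{cor:exact_alg1} for the Stage-1 spectral estimates, Lemma~\ref{lem:rec_p} for the worker-reliability estimate $\hat\bsp$, and Theorem~\ref{thm:plugmin_mle} for the plug-in-MLE refinement—and then collapse the resulting three sampling-rate conditions into the single bound \eqref{eqn:cor2_suff}. I would set $\delta_1=\delta_2=\delta$ with $\delta$ as in \eqref{eqn:cond_delta}, run each of the three steps at failure probability $\epsilon/3$, and finish with a union bound; since replacing $\epsilon$ by $\epsilon/3$ only rescales the constants hidden in the $\Omega(\cdot)$'s, this costs nothing.

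First I would observe that $\bsA^1=\bsA\circ\bsS$ with $\bsS$ i.i.d.\ $\mathrm{Bernoulli}(s_1)$ has effective per-entry sampling rate $s\,s_1$, so Corollary~\ref{cor:exact_alg1} gives, for $s\,s_1=\Omega\bigl(\delta^{-2}\|\bsp\|_2^{-2}\log(mK/\epsilon)\bigr)$, an event $E_1=\{(\hat g_j,\hat h_j)=(g_j,h_j)\ \forall j\}\cap\{\|\bsq-\hat\bsq\|_\infty<\delta\}$ of probability at least $1-\epsilon/3$. Next, since $\bsA^2=\bsA\circ(\mathbbm{1}_{n\times m}-\bsS)$ is independent of $\bsA^1$ and carries effective rate $s(1-s_1)$, I would condition on $\{(\hat g_j,\hat h_j)=(g_j,h_j)\ \forall j\}$ and apply Lemma~\ref{lem:rec_p} to the untouched block $\bsA^2$: for $s(1-s_1)=\Omega\bigl(\delta^{-2}m^{-1}\log(n/\epsilon)\bigr)$ the event $E_2=\{\|\bsp-\hat\bsp\|_\infty<\delta\}$ has conditional probability at least $1-\epsilon/3$. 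On $E_1\cap E_2$ the pair $(\hat\bsp,\hat\bsq)$ satisfies exactly the accuracy hypothesis \eqref{eqn:cond_delta}, so Theorem~\ref{thm:plugmin_mle} applies and, for $s=\Omega\bigl((n\overline{D})^{-1}[\log(1/\rho)\log(mK^2/\epsilon)+\log(m/\epsilon)]\bigr)$, the plug-in MLE recovers every $(g_j,h_j)$ with probability at least $1-\epsilon/3$. A union bound over the three failure events then yields the claimed $1-\epsilon$ guarantee.

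It remains to check that \eqref{eqn:cor2_suff} implies all three conditions on $s$. Taking $s_1=\tfrac12$ and using $\|\bsp\|_2=\Theta(\sqrt n)$ together with $n=O(m/\log m)$ (hence $m\ge n$), the Lemma~\ref{lem:rec_p} requirement $s=\Omega(\delta^{-2}m^{-1}\log(n/\epsilon))$ is dominated by the Corollary~\ref{cor:exact_alg1} requirement $s=\Omega(\delta^{-2}n^{-1}\log(mK/\epsilon))$; absorbing $\log K$ into $\log m$ and $\log(1/\rho)$ into the implicit constant then reduces the two surviving conditions to $s=\Omega(\delta^{-2}\|\bsp\|_2^{-2}\log(m/\epsilon))$ and $s=\Omega((n\overline{D})^{-1}\log(m/\epsilon))$, whose sum is precisely \eqref{eqn:cor2_suff}.

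The argument introduces no new probabilistic content, so the only thing that really needs care is the order of conditioning together with the fact that Theorem~\ref{thm:plugmin_mle} is phrased for \emph{given} $(\hat\bsp,\hat\bsq)$: I must first peel off the Stage-1 event, exploit $\bsA^1\perp\bsA^2$ to run Lemma~\ref{lem:rec_p} on $\bsA^2$ alone, and only then condition on the accuracy of $(\hat\bsp,\hat\bsq)$; feeding data-dependent estimates into Theorem~\ref{thm:plugmin_mle} is legitimate because its proof is uniform over the $\delta$-ball (equivalently, the plug-in MLE is insensitive to perturbations of $(\hat\bsp,\hat\bsq)$ of size at most $\delta$), so the mild reuse of $\bsA^1$ inside the plug-in MLE does no harm. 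The rest is bookkeeping: $\epsilon\mapsto\epsilon/3$, $\log(mK/\epsilon)\mapsto\log(m/\epsilon)$, and $\log(1/\rho)\log(mK^2/\epsilon)\mapsto\log(m/\epsilon)$ are all absorbed into constants.
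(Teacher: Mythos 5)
Your proposal is correct and follows essentially the same route as the paper: the paper proves Corollary~\ref{cor:alg2} precisely by combining Corollary~\ref{cor:exact_alg1}, Lemma~\ref{lem:rec_p}, and Theorem~\ref{thm:plugmin_mle} with $\delta_1=\delta_2=\delta$ as in \eqref{eqn:cond_delta}, then collapsing the three rate conditions via $\|\bsp\|_2=\Theta(\sqrt{n})$, $n=O(m/\log m)$, and absorption of the constant factors $\log K$ and $\log(1/\rho)$, exactly as you do. Your added care about the conditioning order, the independence of $\bsA^1$ and $\bsA^2$, and the fact that Theorem~\ref{thm:plugmin_mle}'s argument holds uniformly over the $\delta$-ball of $(\hat{\bsp},\hat{\bsq})$ is a faithful (and slightly more explicit) rendering of the same bookkeeping the paper leaves implicit.
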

In \eqref{eqn:cor2_suff}, the first term is for guaranteeing the accurate estimate of $(\bsp,\bsq)$ with $\ell_\infty$-norm error bounded by $\delta$, and the second term is for the recovery of top-two answers from MLE with high probability. Since $\|\bsp\|_2^2=\Theta(n)$, the two terms have the same order but with different constant scaling, depending on model-specific parameters $(\bsp,\bsq)$.

%Since we have $\|\bsp\|_2^2=\Theta(n)$, the sufficient order of sampling probability $s$ to recover pairs of top-two answers for all tasks with high probability is lower bounded by $\frac{C\log m}{n}$ where $C$ depends on a problem specific constant.

Lastly, we show the optimality of the convergence rates of Algorithm~\ref{alg:top_two} and Algorithm~\ref{alg:plugin_MLE} with respect to two types of minimax errors, respectively.
The proof of Theorem \ref{thm:converse} is available in Appendix \S\ref{app:thm:converse}.
\begin{thm}\label{thm:converse}
(a) Let $\caF_{\overline{p}}$ be a set of $\bsp\in[0,1]^n$ such that the collective quality of workers, measured by $\|\bsp\|_2$, is parameterized by $\overline{p}$ as
$\caF_{\bar{p}}:=\{\bsp: \frac{1}{n}\|\bsp\|_2^2={\overline{p}}\}$. Assume that $\overline{p}\leq 2/3$. If the average number $(ns)$ of samples (queries) per task is less than $\frac{1}{2\overline{p}}\log\left(\frac{K-1}{K\epsilon}\right)$, then
\beq\label{eqn:minimax_error1}
\min_{\hat{\bsg}}\max_{\bsp\in\mathcal{F}_{\overline{p}},\;\; \bsg\in[K]^m}\frac{1}{m}\sum_{j\in[m]}\p(\hat{g}_j\neq g_j)\geq \epsilon.
\eeq
(b) There is a universal constant $c>0$ such that for any $\bsp\in[0,1]^n$ and $\bsq\in(1/2,1]^m$, if the sampling probability $s\leq {1}/{(4n\overline{D})}$, then
\beq\label{eqn:minimax_error2}
\min_{(\hat{\bsg},\hat{\bsh})}\max_{\substack{(\bsg,\bsh)\in[K]^m\times [K]^m\\ g_j\neq h_j,\forall j[m]}}\frac{1}{m}\sum_{j\in[m]}\p((\hat{g}_j,\hat{h}_j)\neq (g_j,h_j)) \geq c.
\eeq
\end{thm}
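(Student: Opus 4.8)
The plan is to prove both lower bounds by reduction to hypothesis testing, using a change-of-measure (Le Cam / Fano-type) argument, separately for the two parts since they concern different minimax risks.

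For part (a), I would fix the worst-case distribution of $\bsp$ by taking all workers to have identical reliability $p_i = \sqrt{\overline p}$ (or a two-point mixture achieving $\frac1n\|\bsp\|_2^2 = \overline p$), so that each task receives, in expectation, $ns$ i.i.d. responses whose law depends on $g_j$ only through which label is ``correct.'' Since the tasks are independent and the prior over $\bsg\in[K]^m$ is a product, the minimax risk decouples across tasks, and it suffices to lower bound, for a single task, the Bayes error of deciding $g_j\in[K]$ from $N \sim \mathrm{Bin}(n,s)$ observations (or just from $ns$ observations in expectation, handling the Poissonization/concentration of $N$ routinely). I would put a uniform prior on $g_j$ and invoke the standard fact that the Bayes error of an $M$-ary test is at least $1 - \frac{1}{M}\sum (\cdots)$, or more directly bound the probability of confusing $g_j=a$ with $g_j=a'$ using the fact that a single worker's answer distinguishes these two hypotheses only through the coordinates $a,a'$, where the likelihood ratio is bounded. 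Concretely, with $q_j$ allowed to be arbitrarily close to $1$, the total variation between the $N$-fold product laws under $g_j=a$ and $g_j=a'$ is controlled by $N$ times a per-sample divergence of order $\overline p$; setting this below the threshold forces error probability $\ge \epsilon$. The constant $\frac{1}{2\overline p}\log\frac{K-1}{K\epsilon}$ should come out of optimizing a Bhattacharyya-coefficient bound: the per-sample affinity between the two response laws is $\ge 1 - c\,\overline p$, the $N$-fold affinity is $\ge (1-c\overline p)^N \approx e^{-c\overline p N}$, and averaging over the $K$ hypotheses with a union-type argument over the $K-1$ alternatives yields the stated form.

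For part (b), the target is a constant lower bound on the misclassification rate of the \emph{ordered pair} $(\hat g_j,\hat h_j)$ when $s \le 1/(4n\overline D)$. Here $\overline D$ is exactly the average per-worker KL radius separating the true top-two pair from its nearest competitor, so this is the natural ``information budget'' quantity. Again I would decouple across tasks and, for a single task $j$ achieving the minimum in $\overline D = \overline D^{(j)}$, let $(a,b)$ be the nearest competing pair. Placing a uniform prior on the two hypotheses $\{(g_j,h_j),(a,b)\}$, Le Cam's two-point method gives Bayes error $\ge \frac12\big(1 - \mathrm{TV}(\mathbb P_0^{\otimes N}, \mathbb P_1^{\otimes N})\big)$ where $N\sim\mathrm{Bin}(n,s)$ counts the workers assigned to task $j$. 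By Pinsker (or the tensorization of KL) and Jensen over the random assignment, $\mathrm{KL}(\mathbb P_0^{\otimes}\,\|\,\mathbb P_1^{\otimes}) \le s\sum_{i=1}^n \mathbb D_{\mathsf{KL}}(\boldsymbol\mu^{(i,j)}_{(g_j,h_j)}, \boldsymbol\mu^{(i,j)}_{(a,b)}) = s\,n\,\overline D^{(j)} = sn\overline D$, which is $\le 1/4$ by hypothesis; hence $\mathrm{TV}\le\sqrt{(1/2)\cdot(1/4)} < 1$ and the Bayes error is at least an absolute constant $c>0$. Taking the max over the product prior on $(\bsg,\bsh)$ and averaging over $j$ preserves the constant.

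The main obstacle I anticipate is part (a): getting the \emph{sharp} constant $\frac{1}{2\overline p}$ rather than an unspecified $\Theta(1/\overline p)$. This requires the tight second-order expansion of the per-sample affinity (or $\chi^2$/Hellinger quantity) between the ``correct $= a$'' and ``correct $= a'$'' response laws under the single-coin-like structure of \eqref{eqn:A_dist}, and then carefully combining the $K-1$ pairwise bounds into the $\log\frac{K-1}{K\epsilon}$ form — presumably via a Fano argument with the right conditioning, or via a direct averaging identity for the $M$-ary Bayes error. The tensorization over the random number of assigned workers (replacing the fixed ``$ns$ samples per task'' by $\mathrm{Bin}(n,s)$ per entry) is a routine but necessary technical step, handled by noting $\mathbb E[N] = ns$ and that the relevant bounds are convex/linear in $N$. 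Part (b) is comparatively soft: any constant suffices, so the crude Pinsker-after-Jensen chain is enough, and the only care needed is that the KL tensorizes correctly over workers and that the random-assignment expectation is pulled out by linearity of KL in the mixing-over-$s$ sense (equivalently, by conditioning on the assignment pattern and using convexity).
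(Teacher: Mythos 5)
Your part (b) is essentially the paper's own argument: pick the hardest task and its nearest competing pair, run a two-point Le Cam bound, apply Pinsker, and tensorize the KL over workers with the $s$-thinning pulled out by conditioning on the assignment pattern, so that $sn\overline{D}\le 1/4$ forces a constant Bayes error. Even your last, somewhat glossed step (``averaging over $j$ preserves the constant'') matches the paper, which performs the same per-task bound with the same pair of hypotheses for every $j$; so on part (b) there is nothing to flag beyond what the paper itself does.

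Part (a), however, has a genuine gap, and it is exactly where you anticipated trouble. With the homogeneous choice $p_i=\sqrt{\overline{p}}$, the two response laws under ``truth $=a$'' and ``truth $=a'$'' differ only in two coordinates, each carrying mass roughly $\sqrt{\overline{p}}+\frac{1-\sqrt{\overline{p}}}{K}$ versus $\frac{1-\sqrt{\overline{p}}}{K}$; a direct computation gives per-sample KL (and one minus the Bhattacharyya affinity) of order $K\overline{p}$, not $\overline{p}$. Your route therefore yields a necessary-sample threshold of order $\frac{1}{K\overline{p}}\log\frac{1}{\epsilon}$, which is smaller than the claimed $\frac{1}{2\overline{p}}\log\big(\frac{K-1}{K\epsilon}\big)$ by a factor of $K$ --- this is not a matter of sharpening a constant, and no Fano or second-order affinity refinement over the homogeneous ensemble will recover it. The paper instead takes the extreme point of $\caF_{\overline{p}}$: a spammer-hammer vector with about $\overline{p}n$ workers having $p_i=1$ and the rest $p_i=0$. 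Conditioned on the event that task $j$ is assigned only to spammers --- which has probability $(1-\overline{p})^{l_j}$ when $l_j$ workers are assigned and holds regardless of $q_j$ --- the responses are uniform and any estimator errs with probability $\frac{K-1}{K}$, so $\p(\hat{g}_j\neq g_j)\ge \frac{K-1}{K}(1-\overline{p})^{l_j}$; Jensen's inequality over the assignment counts and $1-\overline{p}\ge e^{-(\overline{p}+\overline{p}^2)}\ge e^{-2\overline{p}}$ for $\overline{p}\le 2/3$ then give the stated threshold exactly. If you prefer to keep your hypothesis-testing framing, you must run it on this spammer-hammer ensemble (where the per-sample informativeness is $\overline{p}$ because only a $\overline{p}$-fraction of answers carry any information), but the conditioning argument is both simpler and the source of the precise $\frac{K-1}{K}$ and $\frac{1}{2\overline{p}}$ constants.
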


From part (a) of Theorem \ref{thm:converse}, it is necessary to have $s=\Omega\left(({1}/{\|\bsp\|_2^2})\log({1}/{\epsilon})\right)$ to make the minimax error in \eqref{eqn:minimax_error1} less than $\epsilon$. Since Thm. \ref{thm:main} shows that Alg.~\ref{alg:top_two} recovers $(\hat{g}_j,\hat{h}_j)$ with probability at least $1-\epsilon$ if $s=\Omega\left(({1}/{\|\bsp\|_2^2})\log({1}/{\epsilon})\right)$ when $s_1=1$, we can conclude that Alg.~\ref{alg:top_two} achieves the minimax optimal rate for a fixed collective intelligence of workers, measured by $\|\bsp\|_2$.
From part (b) of Theorem \ref{thm:converse}, for any $(\bsp,\bsq)$, unless we have $s> {1}/{(4n\overline{D})}$ there exists a constant fraction of tasks for which the recovered top-two answers are incorrect. This bound matches with our sufficient condition on $s$ in \eqref{eqn:cor2_suff} from Alg. \ref{alg:plugin_MLE} upto logarithmic factors, as long as $\delta^2\|\bsp\|_2 \gtrsim n\overline{D}$, showing the minimax optimality of Alg. \ref{alg:plugin_MLE}. 
More discussions on the theoretical results are available at Appendix \S \ref{app:diss:theory}.

It is also worth comparing our algorithm with the simple \textit{majority voting} (MV) scheme where we infer the top-two answers by counting the majority of the received answers. Simple analysis shows that the MV scheme requires the sampling probability $s$ to be $ns= \Theta\left((\frac{1}{n}\sum_i p_i)^{-2}\log\frac{1}{\epsilon}\right)$ to recover $(g_j,h_j)$ with probability $1-\epsilon$. Since $\frac{1}{n}\|\bsp\|_2^2=\frac{1}{n}\sum_{i} p_i^2 \geq\left(\frac{1}{n}\sum_i p_i\right)^2$, Algorithm \ref{alg:top_two} achieves strictly better trade-offs unless $p_i$ is the same for all workers $i\in[n]$. For a spammer-hammer model \citep{Karger} where $\alpha\in(0,1)$ fraction of workers are hammers with $p_i=1$ and the rest are spammers with $p_i=0$,  Algorithm \ref{alg:top_two}  requires $ns=\Theta\left(\frac{1}{\alpha}\log\frac{1}{\epsilon}\right)$ samples per task, while MV requires $ns=\Theta\left(\frac{1}{\alpha^2}\log\frac{1}{\epsilon}\right)$ samples per task to recover top-two answers with probability $1-\epsilon$.

\begin{figure*}[t]
    \centering
    \includegraphics[width=0.9\textwidth]{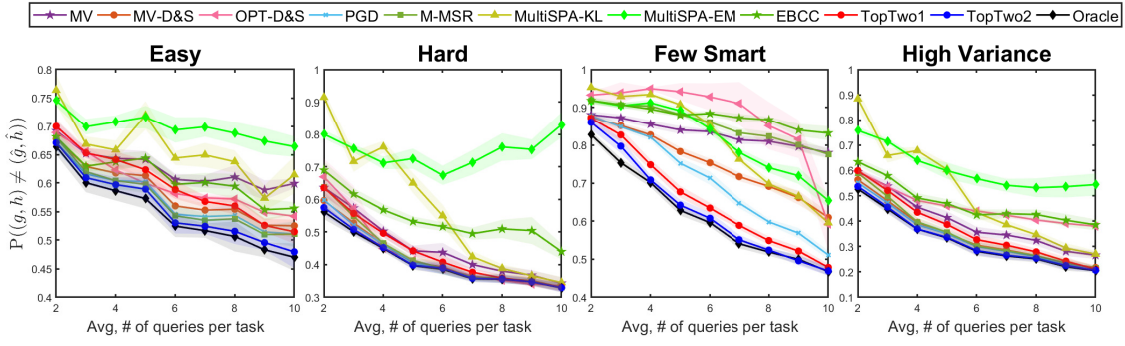}
    \caption{Prediction error in recovering the ordered top-two answers $(g,h)$ for four different scenarios, summarized in Table \ref{tab:synth_setup1},  as the avg. number of queries per task changes. Our TopTwo2 algorithm achieves the best performance, near the oracle MLE for all the scenarios. }
    \label{fig:synthetic_acc}
\end{figure*}

\section{Experiments}\label{sec:experiment}

We evaluate the proposed algorithm under diverse scenarios of synthetic datasets in Sec. \ref{sec:synth}, and for two applications--in identifying difficult tasks in real datasets in Sec. \ref{sec:real_diff}, and in training neural network models with soft labels defined from the top-two plausible labels in Sec. \ref{sec:NN_soft}.

\subsection{Experiments on synthetic dataset}\label{sec:synth}

%We next present experiments to evaluate Algorithm \ref{alg:top_two} and Algorithm \ref{alg:plugin_MLE} both for synthetic and real datasets.
%\subsection{Evaluation and Performance Baselines}
%In this section, we utilize both synthetic and real world datasets to analyze the performance of our models. We compare the prediction error of our algorithm with the following methods: majority voting(MV), OTP-D\&S and MV-D\&S \cite{SEM}, PGD \cite{PGD}.

% \begin{table}[b]
%     \centering
%      \caption{Parameters for synthetic data experiments under diverse scenarios.}
%     \label{tab:synth_setup1}
%     \vspace{0.2em}
%    % \renewcommand{\arraystretch}{1.2}
%    \small{
%     \begin{tabular}{c| c| c|  c| c}
%        \toprule
%         & Easy  & Hard &  Few-smart & High-variance \\
%        \midrule
%        \multirow{2}{*}{Worker} &  \multirow{2}{*}{$p_i\in[0,1]$}  &   \multirow{2}{*}{$p_i\in[0,1]$} & 90\% $p_i\in[0,0.1]$   &   \multirow{2}{*}{$p_i\in[0,1]$}  \\
%        & & & 10\% $p_i\in [0.9,1]$  &\\
%         \midrule
%         \multirow{2}{*}{ Task} &  \multirow{2}{*}{$q_j\in[0.9,1]$}  &   \multirow{2}{*}{ $q_j\in(0.5,0.6]$} &  \multirow{2}{*}{ $q_j\in(0.5,1]$}    &   50\% $q_j\in(0.5,0.6]$ \\
%        & &  &  & 50\% $q_j\in[0.9,1.0]$ \\
%    %     Task & $q_j\in[0.9,1]$ & $q_j\in(0.5,0.6]$ &  $q_j\in(0.5,1]$ & $q_j\in(0.5,1]$ &  $q_j\in(0.5,0.6]\cup[0.9,1.0]$  \\
%         \bottomrule
%     \end{tabular}}
% \end{table}

\begin{table}[b]
    \centering
     \caption{Parameters for synthetic data experiments under diverse scenarios.}
    \label{tab:synth_setup1}
    \vspace{0.2em}
   \small{
    \begin{tabular}{c| c| c}
       \toprule
        & Worker  & Task  \\
       \midrule
       Easy & $p_i \in [0, 1]$ & $q_j \in [0.9, 1]$ \\
       \midrule
       Hard & $p_i \in [0, 1]$ & $q_j \in (0.5, 0.6]$ \\
       \midrule
       \multirow{2}{*}{Few-smart} & 90\% $p_i\in[0,0.1]$  & \multirow{2}{*}{$q_j\in(0.5,1]$} \\
       & 10\% $p_i\in [0.9,1]$ & \\
       \midrule
       \multirow{2}{*}{High-variance} & \multirow{2}{*}{$p_i\in[0,0.1]$} & 50\% $q_j\in(0.5,0.6]$ \\
       & & 50\% $q_j\in[0.9,1.0]$ \\
        \bottomrule
    \end{tabular}}
\end{table}

We compare the empirical performance of Algorithm \ref{alg:top_two} and Algorithm \ref{alg:plugin_MLE} (referred as TopTwo1 and TopTwo2) with other baselines: majority voting(MV), %OTP-D\&S and 
MV-D\&S and OPT-D\&S \citep{SEM}, PGD \citep{PGD}, M-MSR \citep{M-MSR}, MultiSPA-KL and MultiSPA-EM \citep{ibrahim2019crowdsourcing}, EBCC \citep{li2019exploiting} and oracle-MLE. %  whose details can be found in Appendix. \S\ref{app:baseline}. %for synthetic datasets. 
OTP-D\&S and MV-D\&S assume the D\&S model and use the EM algorithm, initialized with worker confusion matrices estimated by spectral method or MV, respectively.  PGD, on the other hand, assumes a simpler single-coin D\&S model, which is equivalent to our model \eqref{eqn:A_dist} with a fixed $q_j=1$ for all tasks, and estimates $p_i$ of each worker and uses this estimate to compute the MLE. 
We choose these baselines because they have the strongest established guarantees in the literature, and they are all MLE-based approaches, from which the top-two answers can be inferred. 
%Details about the baseline methods are available in Appendix \S\ref{app:baseline}.
%The performance of algorithms is measured by the empirical average error probabilities in recovering $g_j$, $h_j$ and $(g_j,h_j)$. %, i.e.,  $\frac{1}{m}\sum_{j=1}^m \P(\hat{g}_j\neq g_j)$, $\frac{1}{m}\sum_{j=1}^m \P(\hat{h}_j\neq h_j)$ and $\frac{1}{m}\sum_{j=1}^m \P((\hat{g}_j,\hat{h}_j)\neq (g_j,h_j))$.
Obviously, oracle-MLE, which uses the ground-truth model parameters, provides the best possible performance since oracle MLE uses the ground-truth $(\bsp,\bsq)$ from which the synthetic data is generated.
See Appendix \S\ref{app:baseline} for more details of these baselines. 

We devise four scenarios described in Table \ref{tab:synth_setup1} to verify the robustness of our model for different $(\bsp,\bsq)$ ranges, at $(n,m)=(50,500)$ with $s\in(0,0.2]$. The number of choices for each task is fixed as 5.
Fig. \ref{fig:synthetic_acc} reports the empirical error probability $\frac{1}{m}\sum_{j=1}^m \P((\hat{g}_j,\hat{h}_j)\neq (g_j,h_j))$  averaged over 30 runs, with 95\% confidence intervals (shaded region). 
Four columns correspond to the four scenarios, respectively. The prediction errors for $g_j$ and $h_j$ are plotted in Fig. \ref{fig:app:synthetic_acc} of Appendix. \S\ref{sec:app:more_plots}.
%, and the three rows correspond to the prediction error in recovering $(g_j,h_j)$, $g_j$ and $h_j$, respectively. 
%To analyze the error probability in relation to change of sparsity, we first generate the raw matrix of the worker response using  the distribution \eqref{eqn:A_dist}. Then we sample the raw matrix 10 times to create the synthetic worker response matrix. We change the sampling rate to obtain each method's error probability alongside the average task degree $d_{task}$. The detailed parameters used in the data synthesis is given in Table~\ref{tab:synth_setup}. 

%\begin{table}[h]
%    \centering
%     \caption{Synthetic dataset: parameters of the raw matrix}
%    \label{tab:synth_setup}
%    \vspace{0.2em}
%    \renewcommand{\arraystretch}{1.2}
%    \begin{tabular}{c c c c c c c}
%        \hline
%        \# workers  & \# tasks & \# classes & sparsity & $d_{task}$ & $d_{worker}$ & sampling rate \\
%        \hline
%         50 & 500 & 5 & 0.2 & 10 & 100 & $[0.2, \cdots, 1]$ \\
%         \hline\\
%    \end{tabular}
%\end{table}

%Here, $d_{task}$ represents the average number of assigned workers for each task, and $d_{worker}$ represents the average number of assigned tasks for each worker. 

We can observe that for all considered scenarios, TopTwo2 achieves the best performance, close to the oracle MLE, in recovering $(g_j,h_j)$. Depending on the scenario, the reason for TopTwo2's outperformance can be explained differently. For the \underline{Easy} scenario, since $q_j$ is close to 1, it is easy to distinguish $g_j$ from $h_j$, but hard to distinguish $h_j$ from other labels. Our algorithm achieves the best performance in estimating $h_j$ by a large margin (Fig. \ref{fig:app:synthetic_acc}), which also leads to a better performance in estimating $(g_j,h_j)$ compared to other baselines. For the \underline{Hard} scenario, it is difficult to distinguish $g_j$ from $h_j$, but our algorithm using an accurate $\hat{q}_j$ can better distinguish $g_j$ from $h_j$. %\underline{Uniform} and \underline{High-variance} show the effect of having diverse $q_j$ in a dataset. For both scenarios, we can see gain in recovering $g_j$ and $h_j$ from our method. 
For \underline{Few-smart}, our algorithm achieves the largest gain compared to other methods, since our algorithm can effectively distinguish few smart workers from spammers.  \underline{High-variance} shows the effect of having diverse $q_j$ in a dataset. 

We remark that our algorithm (TopTwo2) achieves the best performance, close to that of the oracle MLE, for all scenarios, while the next performer changes depending on the scenario. For example, the OPT D\&S is the second best performer in the \underline{Hard} scenario, while it is the worst performer in the \underline{Few-smart} scenario. 
We also show the robustness of our algorithm to changes in model parameters in Appendix \S\ref{app:synthetic_exp}.

\begin{figure*}
\centering
	\subfloat[][The average prediction error on color comparison tasks  \label{fig:color_error}]{\includegraphics[width=0.6\textwidth]{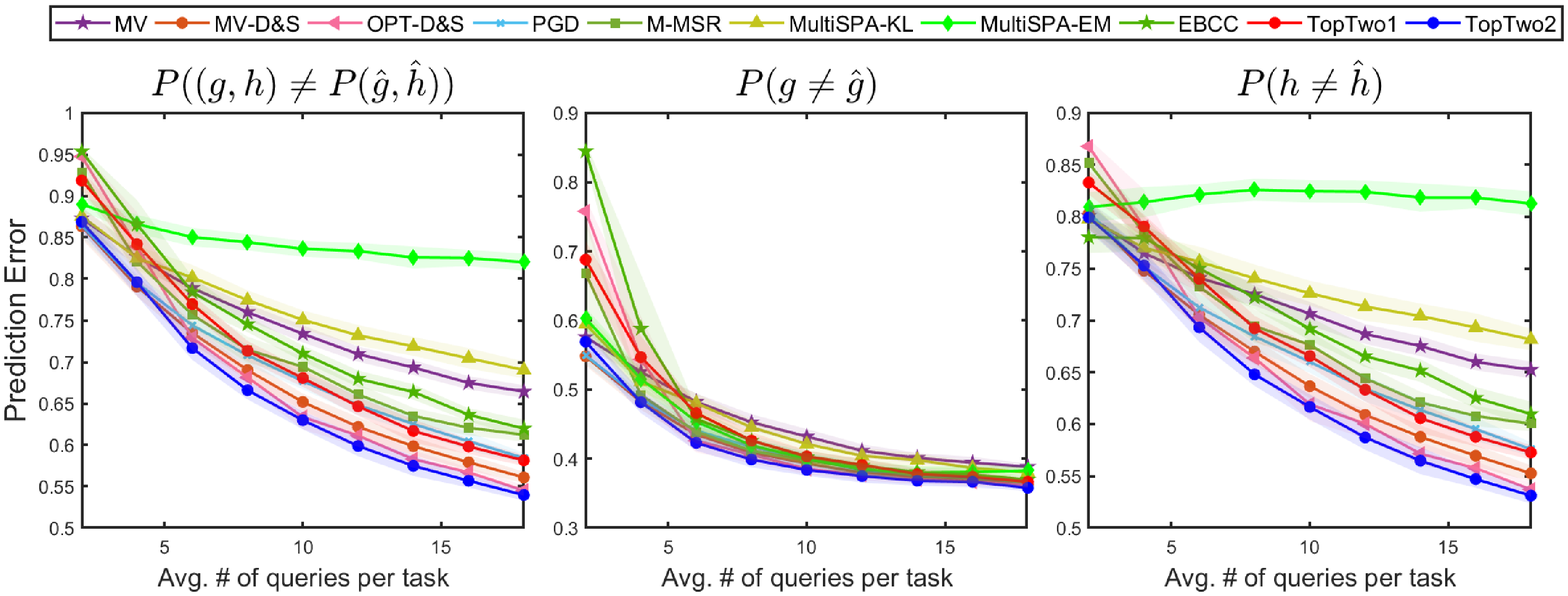}} \qquad
    \subfloat[][Histogram of dist. gap\label{fig:color_histo}]{ \includegraphics[width=0.25\textwidth]{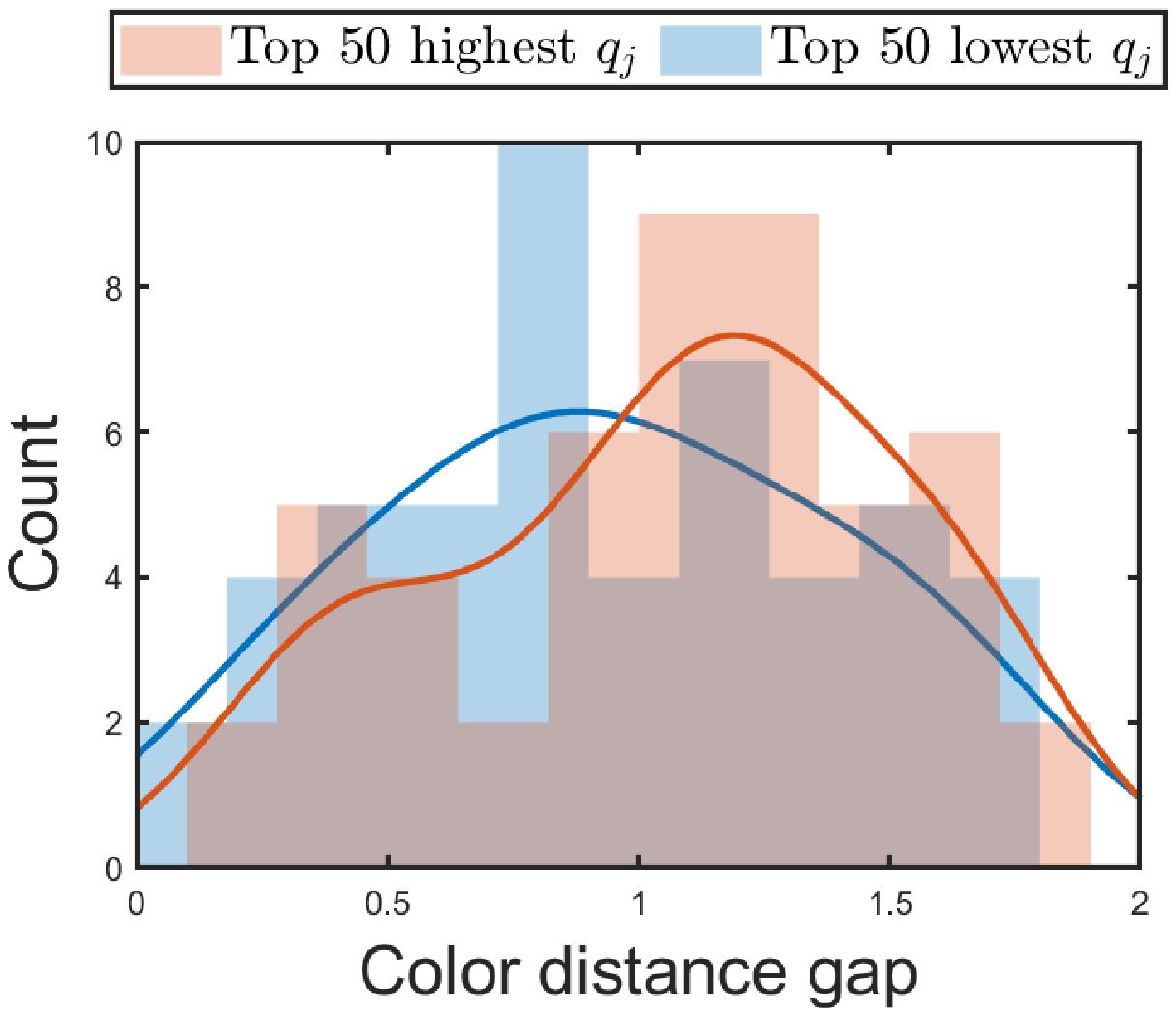}}
    \caption{(a) Prediction error for $(g_j,h_j)$, $g_j$ and $h_j$ (from left to right) for color comparison task using real data collected by MTurk. TopTwo2 algorithm achieves the best performance. (b) Histogram of color distance gap for two task groups, the easy group with the highest $q_j$(red)  and the  difficult group with the lowest $q_j$(blue). The difficult group  tends to have a smaller color distance gap. }
    \label{fig:25_gaussian}
\end{figure*}
\subsection{Experiments on real-world dataset: inferring task difficulties}\label{sec:real_diff}

We provide experimental results using real-world data collected by MTurk and show that our algorithm can be used to infer task difficulty.
Since publicly available datasets do not provide information about confusing answers or task difficulty, we designed a new set of multiple-choice tasks for which we can identify both. 
We designed a color comparison task in which we asked the crowd to choose, from six given choices, the color that looks the most like a reference color of each task. See Fig. \ref{tab:color_example} in Appx. \S\ref{subsec:dataset} for example tasks. 
After randomly generating a reference color and the six choices, we identified the ground truth and the most confusing answer for each task by measuring the distance between colors using the CIEDE2000 color difference formula \citep{CIEDE2000}. If the distance from the reference color to the ground truth is much shorter than the distance to the most confusing answer, then the task is considered easy.
We designed 1,000 tasks and distributed them to 200 workers, collecting 19.5 responses for each task. After collecting the data, we subsampled it to simulate how the prediction error decreases as the number of responses per task increases.
Fig. \ref{fig:color_error} shows the performance in detecting $(g_j,h_j)$, $g_j$ and $h_j$, averaged over 10 random sampling,  with a 95\% confidence interval (shaded region). 

First, we can verify that the ground truth and the most confusing answer we identified by the measured color distance are valid with the collected data, since the prediction error actually decreases as the number of queries per task increases. 
As shown in Fig. \ref{fig:color_error}, TopTwo2 algorithm achieves the best performance in detecting $(g_j,h_j)$, $g_j$ and $h_j$ in all ranges.
We further investigate the correlation between the task difficulty - quantified by the distance gap between the ground truth and the most confusing answer from the reference color - and the estimated confusion probability $q_j$ across tasks.
We select the top 50 most difficult/easiest tasks according to the estimated confusion probability $q_j$ and plot the histograms of the distance gap for the two groups in Fig \ref{fig:color_histo}. We can see that the difficult group (blue, with the lowest $q_j$) tends to have a smaller distance gap than the easy group (red). This result shows that our algorithm can  identify difficult tasks in real datasets.
%Further experimental details and other real dataset experiments are available in  Appendix \S\ref{app:real_exp}.

\subsection{Training neural networks with soft labels having top-two information}\label{sec:NN_soft}

An appealing example where we can use the knowledge of the second best answer is in training deep neural networks for classification tasks. Traditionally, a hard label (one ground-truth label per image) has been used to train a classifier. Recent work has shown that using a soft label (a full label distribution that reflects human perceptual uncertainty) is sometimes advantageous in obtaining a model with better generalization ability \citep{peterson2019human}. However, obtaining an accurate full label distribution requires much higher sample complexity than just recovering only the ground-truth.  For example, \citet{peterson2019human} provided a CIFAR10H dataset with full human label distributions for 10,000 instances of CIFAR10 test examples by collecting an average of 50 judgments per image, which is about 5-10 times larger than the usual datasets (Table \ref{tab:data_info} in Appendix \ref{subsec:dataset}).

Our top-two model, on the other hand, can effectively reduce the required sample complexity while still providing the benefit of the soft-label training. %By using the outputs of our algorithm, we can design a soft-label vector having top-two most plausible labels with its confusion probability, and use this vector to train networks. 
To demonstrate this idea, we train two deep neural network models, VGG-19 and ResNet18, with the soft label vectors having the top-two (top2) structure extracted from the CIFAR10H dataset\footnote{As in \citep{peterson2019human}, we used the original 10,000 test examples from CIFAR10 for training and 50,000 training examples for testing. Thus, the final accuracy is lower than usual. Since CIFAR10H is collected from selected `reliable' workers who answered a set of test examples with an accuracy higher than 75\%, we directly used the top two dominant answers and the fraction between them to obtain the soft label vector (top2). }. 
%We also conducted the soft labels to the two state-of-the-art models of image classification, ViT and EfficientNetV2. 
We then compare the training and testing results of our method with those of the hard label (hard) and full label (full) training. Experimental details are given in Appendix \S\ref{sec:app:NN_soft}.
%Using the grid search, we set the learning rate as 0.1. }.
Compared to the original training with hard labels, training with the top-two soft labels achieves 1.56\% and 4.09\% higher test accuracy in VGG-19 and ResNet18, respectively (averaged over three runs, 150 epochs), as shown in Table \ref{tab:data_CIFAR10}. This result is also comparable to that of the full label distribution. 
It shows that training with the top-two soft labels can achieve better generalization (test accuracy) than training with hard labels, because the top-two soft label contains simple but helpful side information, the most confusable class, and the confusion probability. In Sec. \ref{sec:toptwo_t_robust}, we also report an additional experiment showing that training with the top-two labels is more robust to the label noise than  training with the full label distribution. 
%  \begin{table}[h]
%  \centering
%     \caption{Comparison of performances for CIFAR10H dataset with hard/soft label training}
%     \label{tab:data_CIFAR10}
%     \vspace{0.2em}
%     \renewcommand{\arraystretch}{1.2}
%     \small{
%     \begin{tabular}{c c c c c}
%         \toprule
%          Network & Train accuracy & Training loss & Test accuracy &Test loss \\
%         \midrule
%         VGG-19 (hard) & \textbf{97.46$\pm$0.59\%} & \textbf{0.081$\pm$0.012} & 77.64$\pm$1.54\% & 1.057$\pm$0.118 \\
%         VGG-19 (top2) & 97.00$\pm$0.51\% & 0.231$\pm$0.014 & \textbf{79.20$\pm$1.04\%} & 0.754$\pm$0.050 \\
%         VGG-19 (full) & 96.69$\pm$0.48\% & 0.282$\pm$0.010 & 78.66$\pm$0.97\% & \textbf{0.740$\pm$0.030} \\
%         \midrule
%         ResNet18 (hard) & 98.47$\pm$0.320\% & \textbf{0.046$\pm$0.009} & 76.49\%$\pm$1.80\% & 1.275$\pm$0.157 \\
%         ResNet18 (top2) & 98.67$\pm$0.491\% & 0.168$\pm$0.024 & 80.58\%$\pm$2.36\% & 0.640$\pm$0.093 \\
%         ResNet18 (full) & \textbf{99.19$\pm$0.125\%} & 0.189$\pm$0.023 & \textbf{80.93\%$\pm$2.66\%} & \textbf{0.611$\pm$0.102} \\
%         \bottomrule
%     \end{tabular}}
% \end{table}
 \begin{table}[t]
 \centering
    \caption{Comparison of performances for CIFAR10H dataset with hard/soft label training}
    \label{tab:data_CIFAR10}
    \vspace{0.2em}
    \renewcommand{\arraystretch}{1.2}
    \small{
    \begin{tabular}{c c c}
        \toprule
         Network & Train accuracy & Test accuracy  \\
        \midrule
        VGG-19 (hard) & \textbf{97.46$\pm$0.59\%} & 77.64$\pm$1.54\% \\
        VGG-19 (top2) & 97.00$\pm$0.51\% & \textbf{79.20$\pm$1.04\%} \\
        VGG-19 (full) & 96.69$\pm$0.48\% & 78.66$\pm$0.97\%  \\
        \midrule
        ResNet18 (hard) & 98.47$\pm$0.320\% & 76.49\%$\pm$1.80\%  \\
        ResNet18 (top2) & 98.67$\pm$0.491\% & 80.58\%$\pm$2.36\% \\
        ResNet18 (full) & \textbf{99.19$\pm$0.125\%} & \textbf{80.93\%$\pm$2.66\%} \\
        \bottomrule
    \end{tabular}}
\end{table}
%Furthermore, full still has the fastest convergence. It implies that the most confusing answer and the confusing probability can be used as side-information for training neural networks. However, our model had lower accuracy than full. This can be explained by the characteristics of CIFAR10H. When the authors of CIFAR10H collected labels from workers, they rejected the answers if the participants scored below 75\%. Thus, the answer rate is so high that there is no probability difference between the second and third dominating answers. Nevertheless, training with the top-two labels could result in better generalization (test accuracy), since the top-two label includes simple yet helpful side information, the most confusable class and the confusion level. 

 \section{Discussion}\label{sec:diss}

We proposed a new model for multiple-choice crowdsourcing with top-two confusable answers and varying confusion probabilities across tasks. We provided an algorithm to infer the top-two answers and the confusion probability. 
This work can benefit various query-based data collection systems, such as MTurk or review systems, by providing additional information about the task, such as the most plausible answer other than the ground truth and how plausible it is. This information can be used to quantify the accuracy of the ground truth or to classify the tasks based on difficulty. We also demonstrated possible applications of our top-two model in designing soft labels for training neural networks.
%The topic of confusion is getting increasing attention in the machine learning community for designing reliable classifiers \citep{jin2017confusion,luque2019impact,chang2017revolt}. We also demonstrated possible applications of our algorithm in designing soft labels for better generalization of neural networks. %Our work can provide information about ``confusable'' classes in datasets. Possible negative impact of this work is that it may allow the crowdsourcing platforms to learn more about the skill level of the workers.
\section{Acknowledgements}
This research was supported by the National Research Foundation of Korea under grant 2021R1C1C11008539, and  by the MSIT(Ministry of Science and ICT), Korea, under the ITRC(Information Technology Research Center) support program(IITP-2023-2018-0-01402) supervised by the IITP(Institute for Information \& Communications Technology Planning \& Evaluation).

\bibliography{main_toptwo}
\bibliographystyle{icml2023}

\newpage
\appendix
\onecolumn
\section{Verification for the Proposed Top-Two Model}\label{sec:app:model}

We proposed the top-two model to reflect the key attributes of seven datasets including Adult2, Dog, Web, Flag, Food, Plot, and Color, of which the details are summarized in Appendix \ref{subsec:dataset}.

Table \ref{tab:data_info_3} shows empirical distributions of the mean incidence of responses for the top-three dominating answers, sorted by the dominance proportions, for the six public datasets and the Color dataset that we collected, with the standard deviation over the tasks in the dataset. In Fig. \ref{fig:label_dist}, we also plot empirical distributions of the mean incidence of responses sorted by the dominant proportion with error bars indicating the standard deviation. The $i$-th data point represents the average incidence of the $i$-th highest response in each task. For example, in Adult2 dataset, the most dominating answer takes 0.8 portion of the total answers, and the next dominating answer takes 0.14 portion of the total answers on average.
\begin{table}[h]
    \centering
    \caption{Proportions of top-three dominating answers in public datasets}
    \label{tab:data_info_3}
    \vspace{0.2em}
    \renewcommand{\arraystretch}{1.2}
    \begin{tabular}{c c c c }
        \toprule
         Dataset & Ground truth& 2nd dominating answer & 3rd dominating answer\\
        \midrule
        Adult2 & 0.80$\pm$0.19 & 0.14$\pm$0.13 & 0.04$\pm$0.07 \\
        Dog & 0.76$\pm$0.15 & 0.22$\pm$0.14 & 0.01$\pm$0.04 \\
        Web & 0.59$\pm$0.20 & 0.25$\pm$0.12 &0.12$\pm$0.09 \\
        Flag & 0.90$\pm$0.16 & 0.09$\pm$0.13 & 0.01$\pm$0.03 \\
        Food & 0.80$\pm$0.18 & 0.17$\pm$0.15 & 0.02$\pm$0.05 \\
        Plot & 0.62$\pm$0.21 & 0.30$\pm$0.16 & 0.06$\pm$0.07 \\
        Color & 0.43$\pm$0.1 & 0.23$\pm$0.06 & 0.15$\pm$0.05 \\
        \bottomrule
    \end{tabular}
\end{table}

\begin{figure}[tbh]
    \centering{
    \subfloat[\label{fig:real_adult}]{\includegraphics[width=0.25\textwidth]{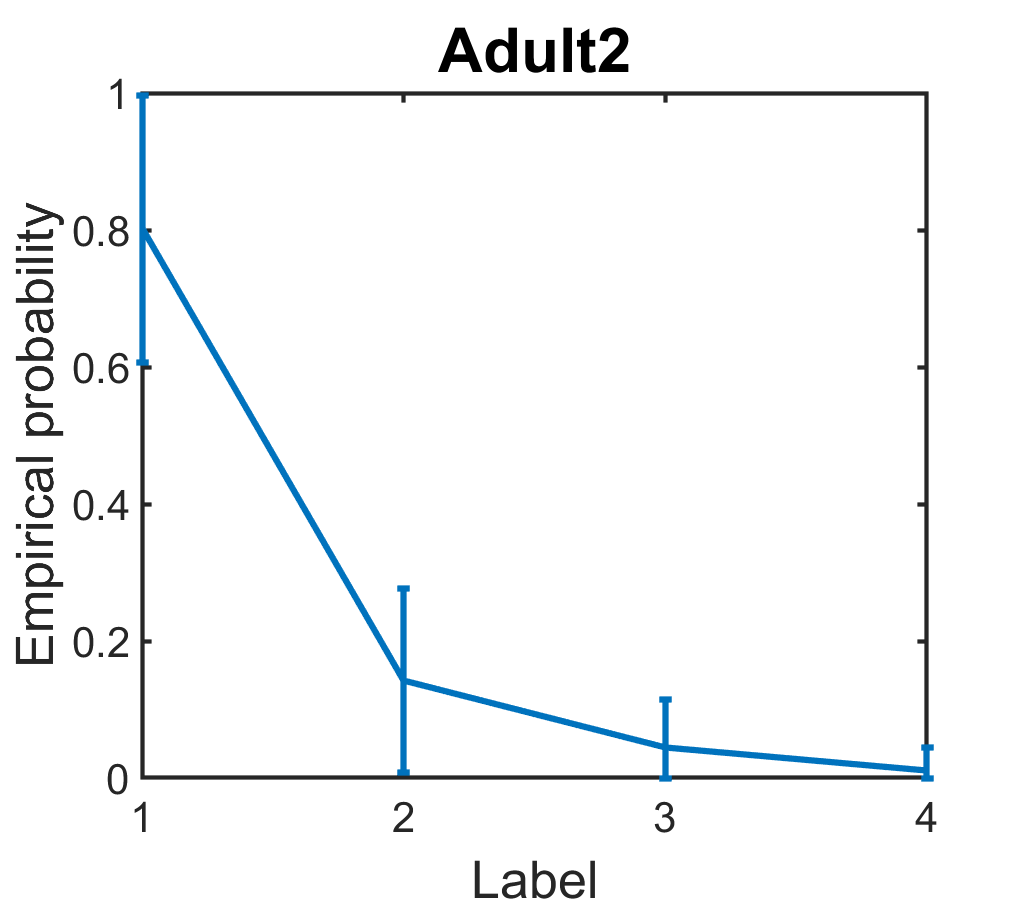}}
    \subfloat[\label{fig:real_dog}]{\includegraphics[width=0.25\textwidth]{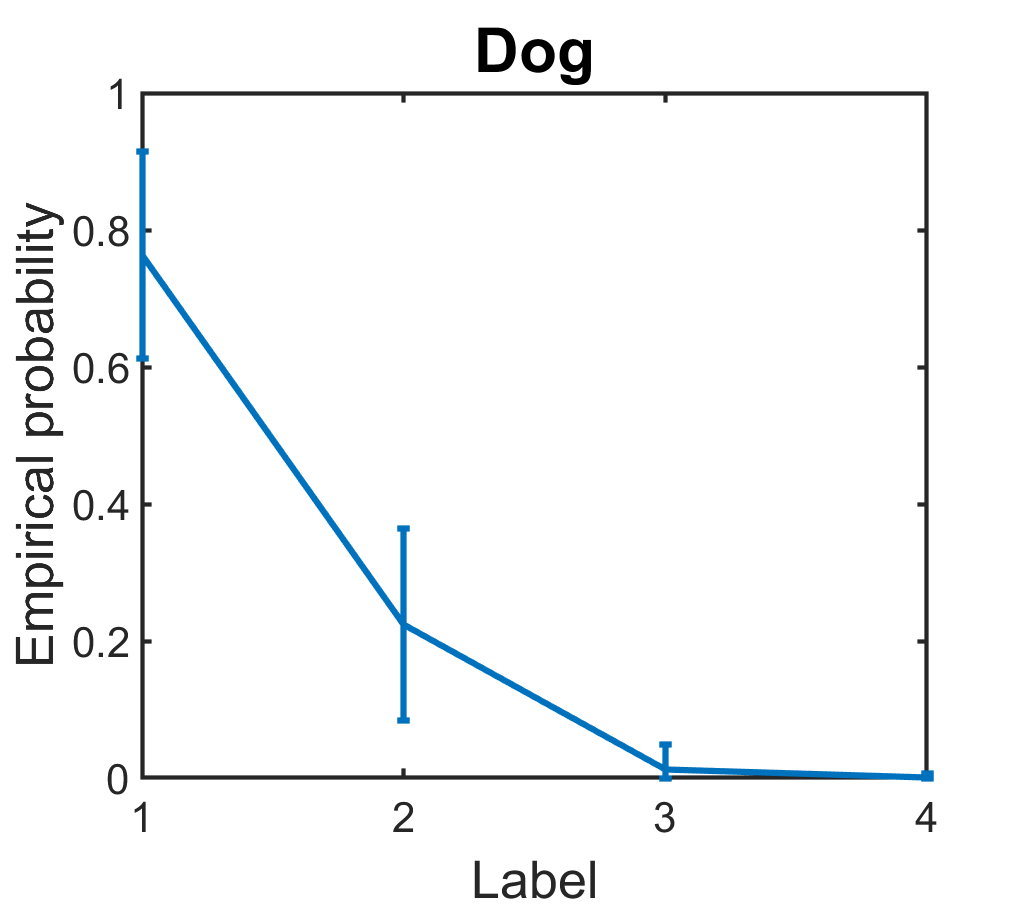}}
    \subfloat[\label{fig:real_web}]{\includegraphics[width=0.25\textwidth]{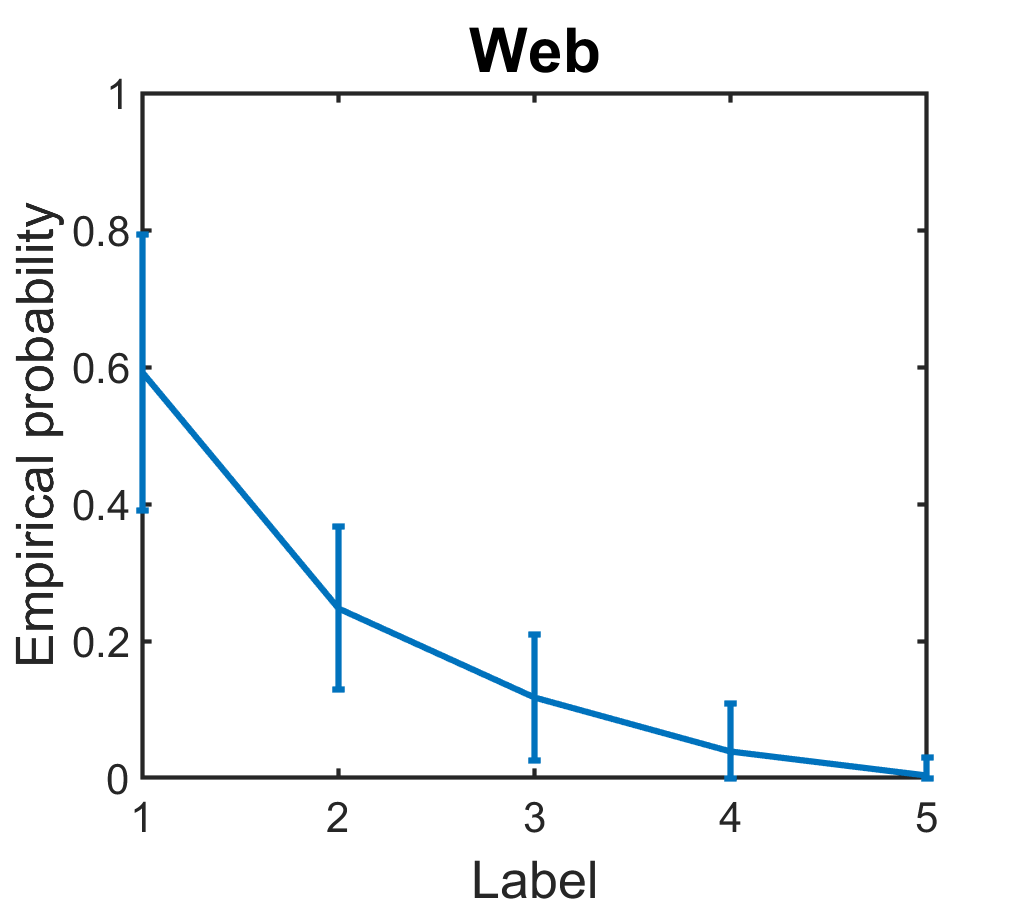}}
    \subfloat[\label{fig:real_flag}]{\includegraphics[width=0.25\textwidth]{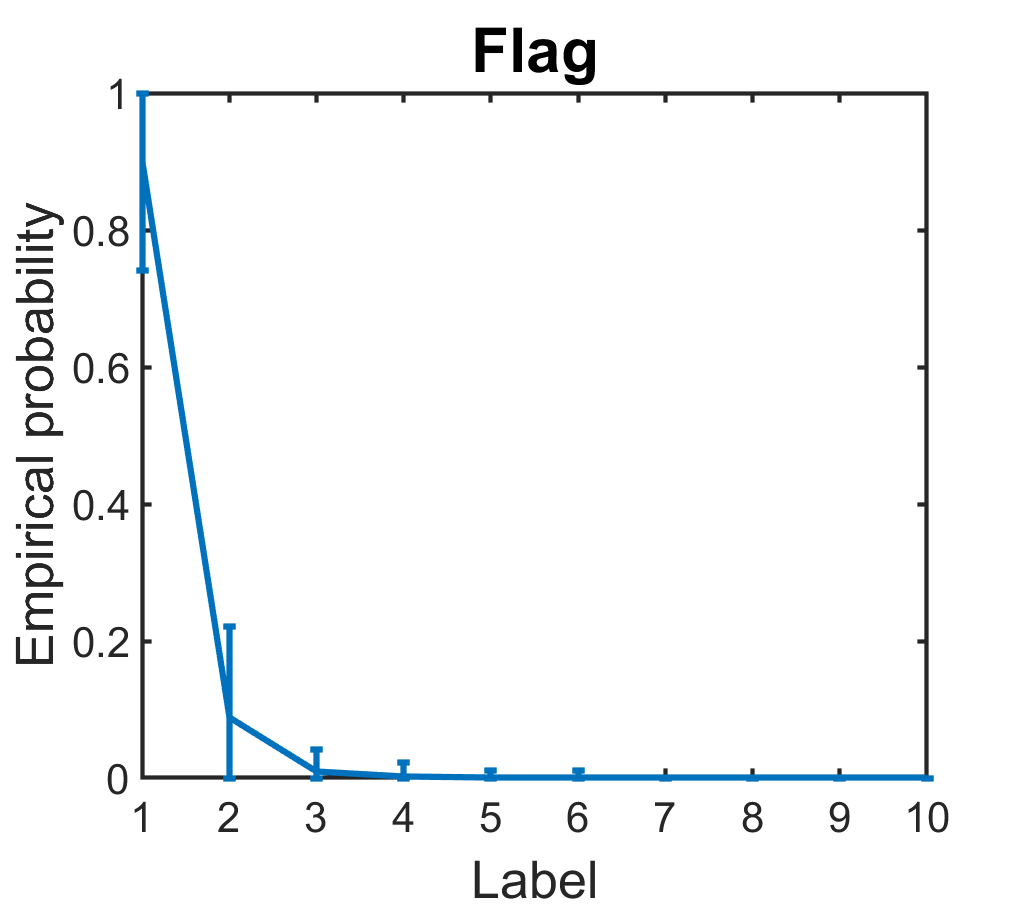}}
    \\
    \subfloat[\label{fig:real_food}]{\includegraphics[width=0.25\textwidth]{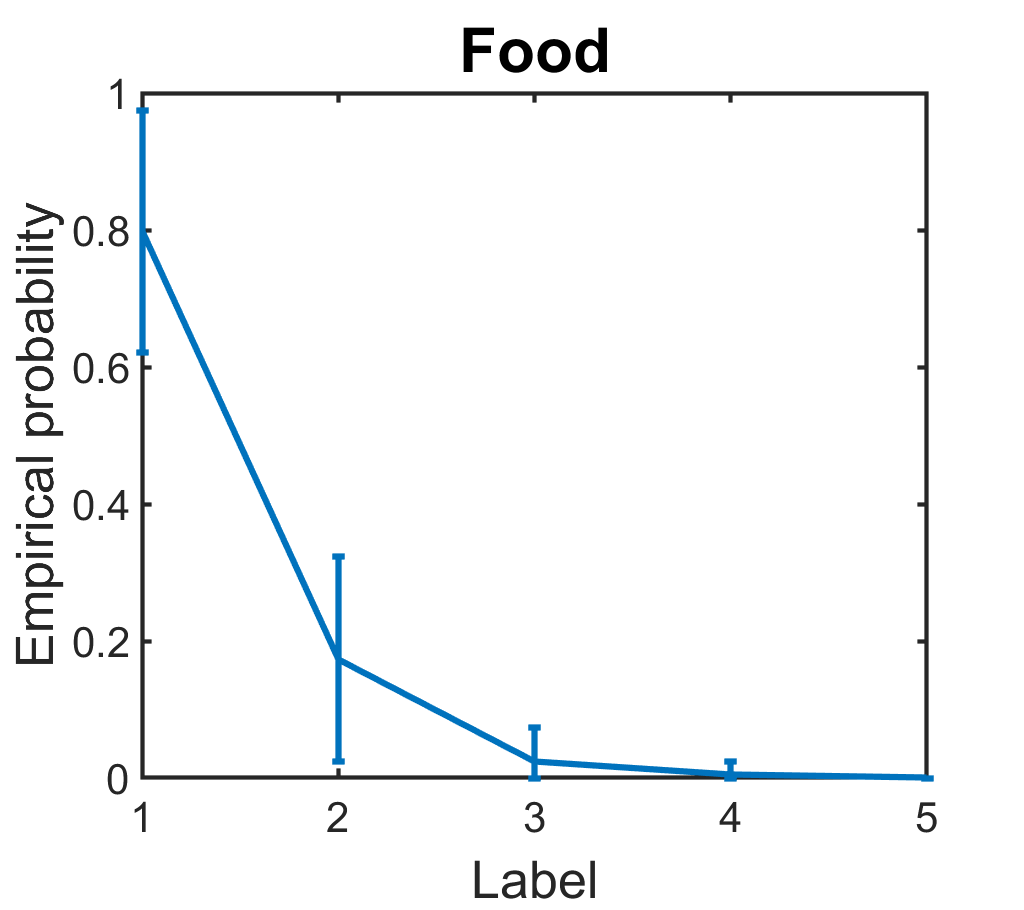}}
    \subfloat[\label{fig:real_plot}]{\includegraphics[width=0.25\textwidth]{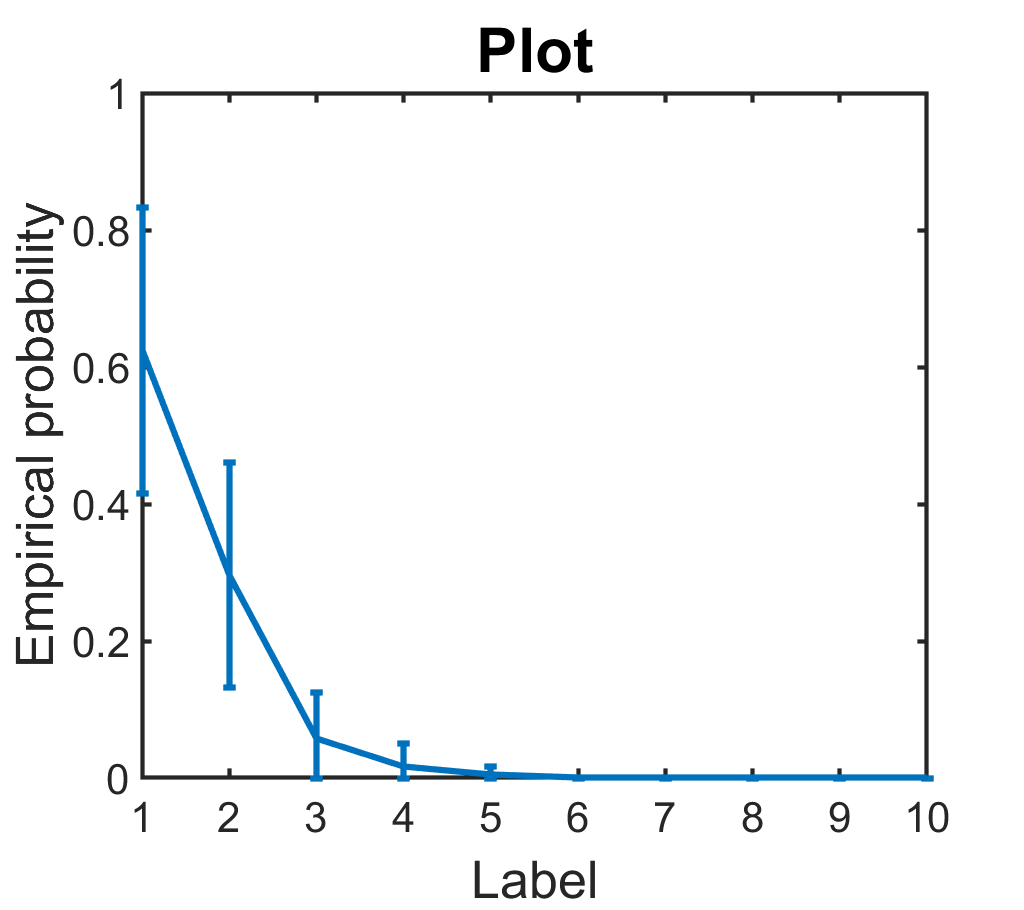}}
    \subfloat[\label{fig:real_color}]{\includegraphics[width=0.25\textwidth]{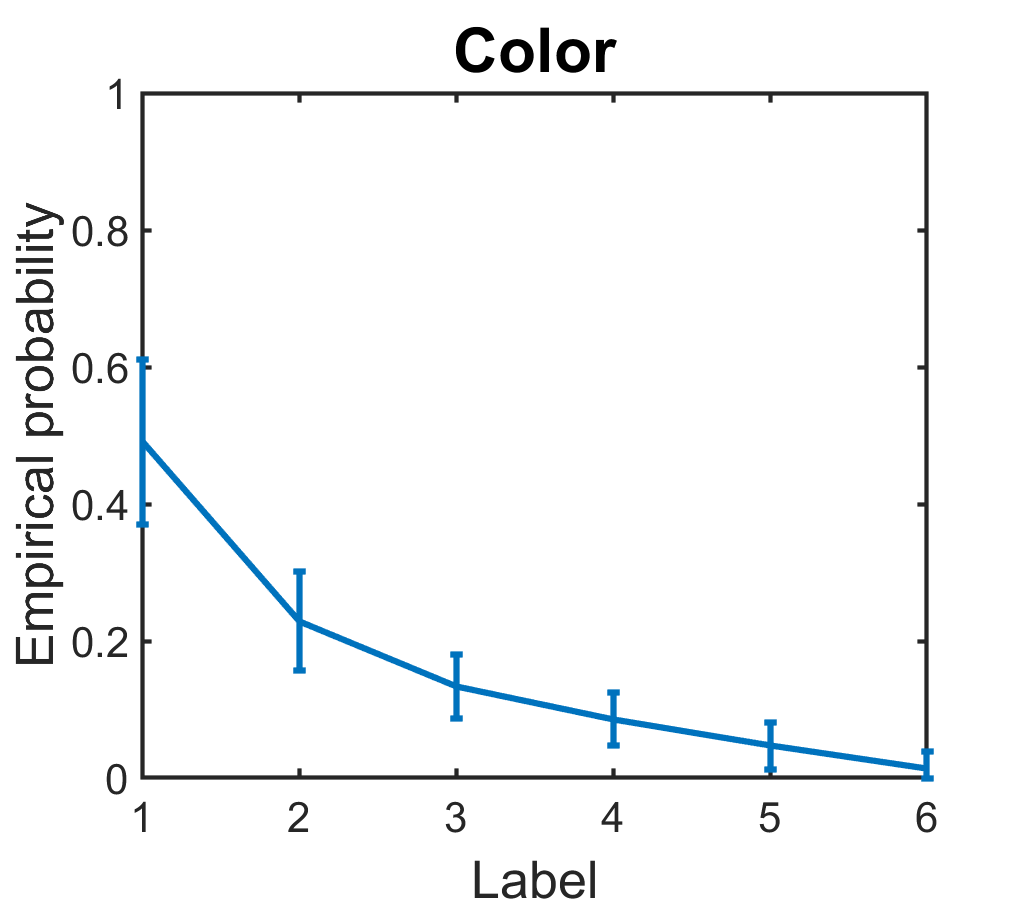}}
}    
    \caption{Empirical distribution of the mean incidence of responses sorted by the dominant proportion, averaged over  all tasks in each dataset. The $i$-th data point represents the average incidence of the $i$-th highest response in each task.  The error bars indicate the standard deviation of the mean incidence of the $i$-th dominating answer over the tasks in the dataset. }
    \label{fig:label_dist}
\end{figure}

From the table and figure, we can observe that for all the considered public datasets the top-two answers dominate the overall answers, i.e., about 65-90\% of the total answers belong to the top two. Furthermore, the average ratio from the most dominating answer to the second one is 4:1, while that between the second and the third is 7.5:1. 
There often exist overlaps in the error bars between the ground truth and the second dominating answer, e.g., for Web, Plot, and Color datasets, but no such overlap is found between the ground truth and the third dominating answer. 
What we can call a \textit{`confusing answer'} is an answer that has an incidence rate comparable to that of the ground truth. In all the considered datasets, only the second dominating answer shows such a tendency, and thus, we can conclude that the third dominating answer cannot be called a `confusing answer', and the top-two model in \eqref{eqn:A_dist} well describes the errors in answers caused by confusion. 

%This observation shows that for real datasets even with many multiple choices (e.g. 10 for Flag or Plot datasets), the top-two answers dominate the overall portion of worker answers, and for difficult datasets, where the portion of the ground truth is not significantly larger than that of the second dominating answer, it is still enough to consider only the top two choices, rather than including 3rd or 4th confusing answers.

Moreover, from the public datasets, we also observe that the task difficulty can be quantified by the confusion probability between the top-two answers. As an example, for the Web dataset, when we select the easiest 500 tasks and hardest 500 tasks by ordering tasks with the ratio of correct answers, the ratio between the ground-truth to the 2nd best answer was 10.7:1 for the easiest group, while it was 1.5:1 for the hardest group. This observation shows that the ratio between the top-two answers indeed captures task difficulty as does our model parameter for task difficulty $q_j$ in \eqref{eqn:A_dist}.

\subsection{Datasets}\label{subsec:dataset}
We collect six publicly available multi-class datasets: Adult2, Dog, Web, Flag, Food and Plot. Since these datasets do not provide information about the most confusing answer or the task difficulty, we additionally create a new dataset called `Color', for which we can identify the most confusing answer and also quantify the task difficulty for all the included tasks.

\begin{itemize}
	\item \textbf{Color} is a dataset where the task is to find the most similar color to the reference color among six different choices. For each task, we randomly create a reference color and then choose six choices of colors. The distance from the reference color to the ground truth color is in between 4.5 and 5.5, the distance to the most confusing answer is in between 5.5 and 6.5, and the distance to the rest of the choices is between 11 and 12, where the distance between the pairs of colors is measured by CIEDE2000 \citep{CIEDE2000} color difference formulation. The tasks are ordered in terms of their difficulty levels by measuring the gap between: the distance from the reference color to the ground truth; and that to the most confusing answer. If the distance from the reference color to the ground truth is much shorter than that to the most confusing answer, then the task is considered easy.
	%	1000 candidate colors randomly. We then measure the CIEDE2000\cite{CIEDE2000} color difference between the reference color and the candidate colors. 
%We pick the ground truth color which the color difference is measured between 4.5 and 5.5, the most confusing answer which the color difference is measured between 5.5 and 6.5. The other four choices are selected in color difference range from 11 to 12. 
Using MTurk, we collected 19600 labels from 196 workers for 1000 tasks. Each Human Intelligence Task (HIT) is composed of randomly selected 100 tasks, and we pay \$1 to each worker who completed a HIT. 
Fig. \ref{tab:color_example} shows an example task for the Color dataset. 

% \begin{figure}[t]
%     \centering
%     {\includegraphics[width=0.45\textwidth]{fig/Experiments/real_error/color_example.jpg}} 
%     \caption{An example task for `Color' dataset where the ground truth and the most confusing answer, determined by the color distance from the reference color (top), are $g_j = 5$  and $h_j = 2$.}
%     \label{tab:color_example}
% \end{figure}

\begin{figure}[t]
    \centering
    \subfloat[$g_j = 6$ and $h_j = 5$]{\includegraphics[width=0.4\textwidth]{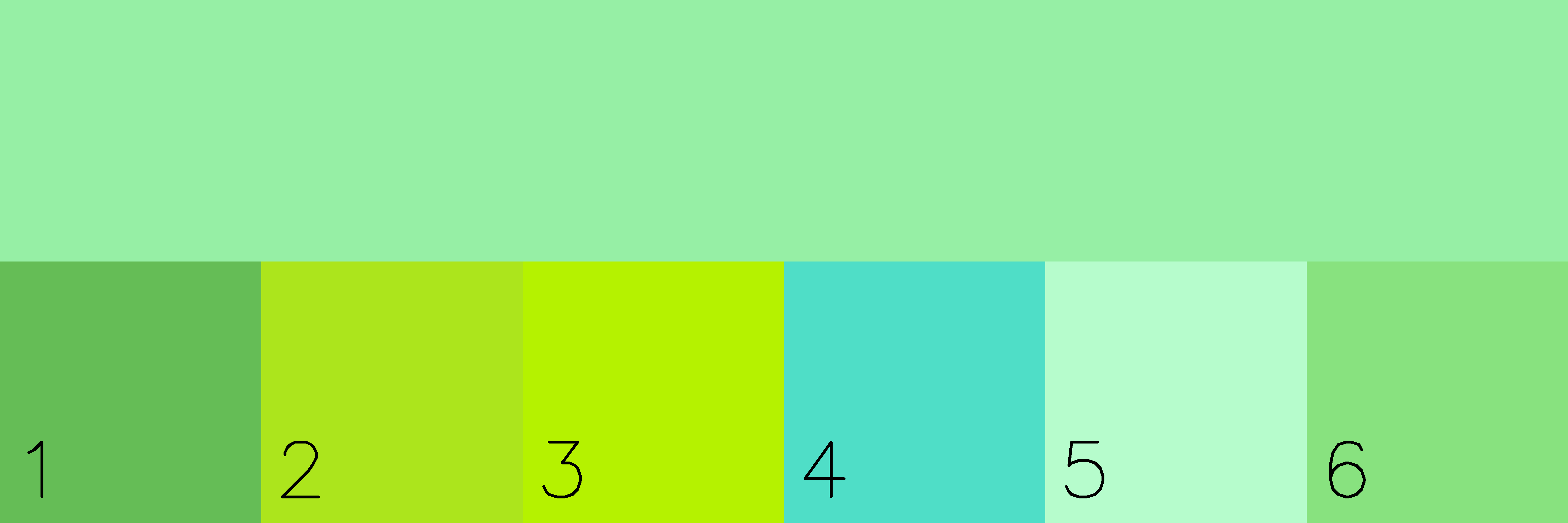}} \qquad
    \subfloat[$g_j = 4$ and $h_j = 3$]{\includegraphics[width=0.4\textwidth]{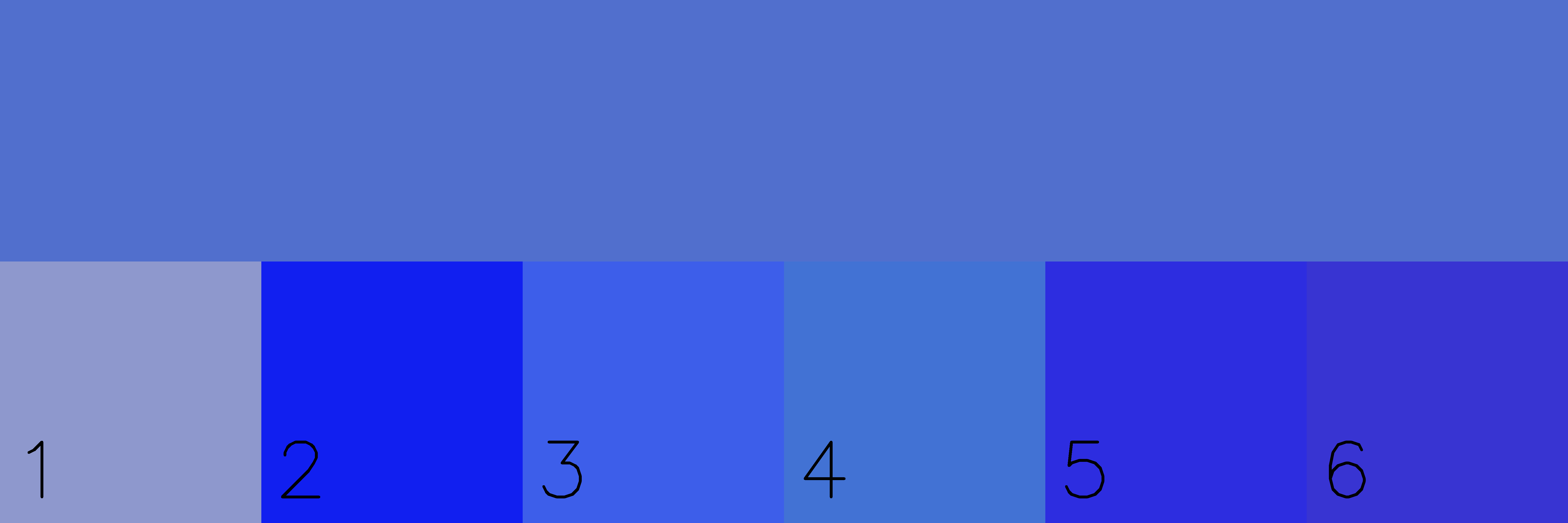}} \\
    \subfloat[$g_j = 5$ and $h_j = 3$]{\includegraphics[width=0.4\textwidth]{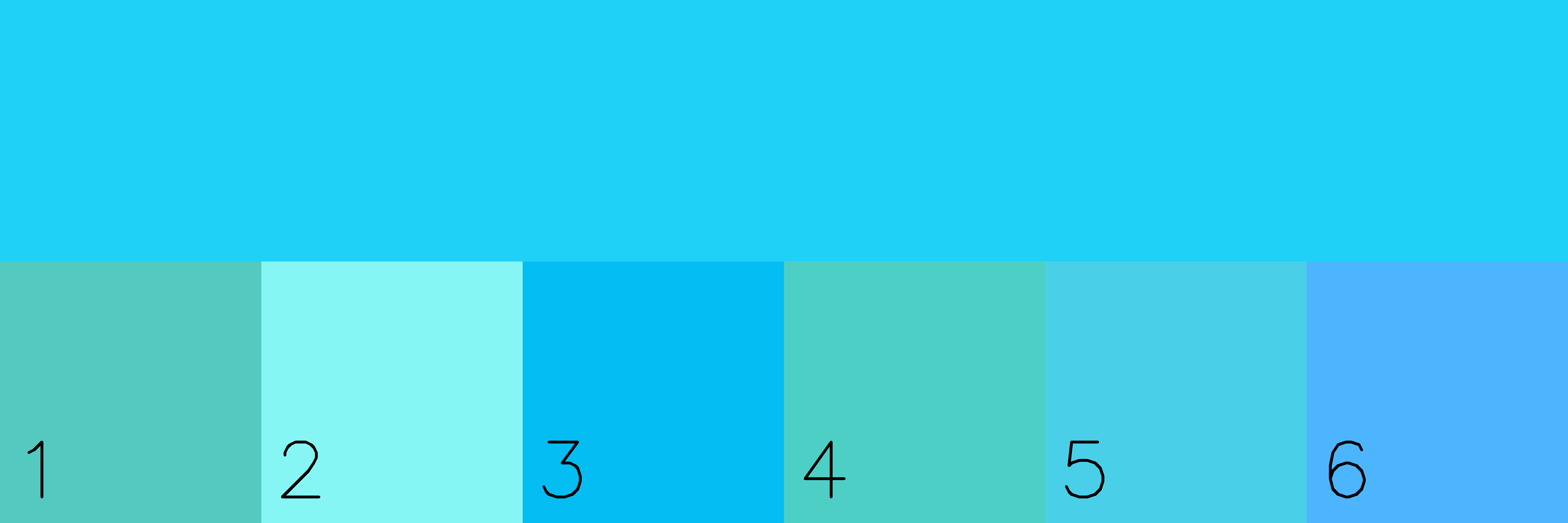}} \qquad
    \subfloat[$g_j = 6$ and $h_j = 2$]{\includegraphics[width=0.4\textwidth]{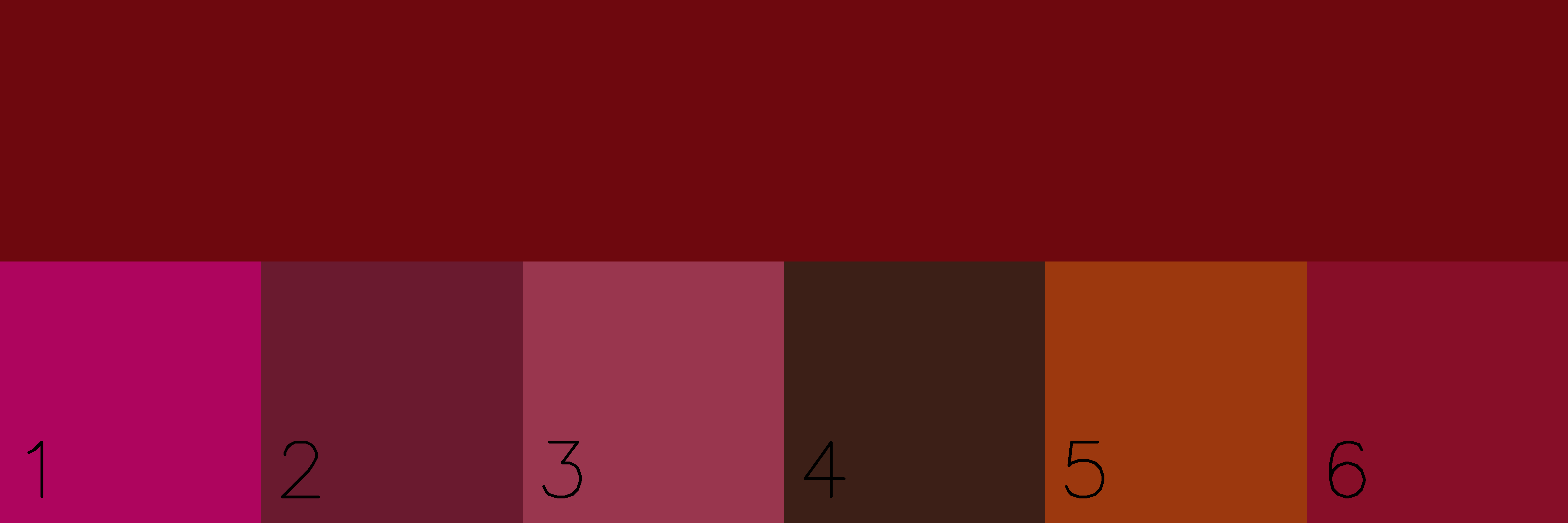}}
    \caption{Example tasks for `Color' dataset where the ground truth $g$ and the most confusing answer $h$ are determined by the color distance from the reference color (top).}
    \label{tab:color_example}
\end{figure}

    \item \textbf{Adult2} \citep{ipeirotis2010quality} is a 4-class dataset where the task is to classify the web pages into four categories (G, PG, R, X) depending on the adult level of the websites. This dataset contains 3317 labels for 333 websites which are offered by 269 workers. %The labels are collected on MTurk platform.
    
    \item \textbf{Dog }\citep{SEM} is a 4-class dataset where the task is to discriminate a breed (out of Norfolk Terrire, Norwich Terrier, Irish Wolfhound, and Scottich Deerhound) for a given dog. This dataset contains 7354 labels collected from 52 workers for 807 tasks. %The labels are collected on MTurk platform. 
    
    \item \textbf{Web} \citep{zhou2012learning} is a 5-class dataset where the task is to determine the relevance of query-URL pairs with a 5-level rating (from 1 to 5). The dataset contains 15567 labels for the 2665 query-URL pairs offered by 177 workers. 

    \item \textbf{Flag} \citep{krivosheev2020detecting} is a dataset for multiple-choice tasks where each task is to identify the country for a given flag from 10 given choices. A total of 1600 votes are collected from 220 workers for the 100 tasks. %on MTurk platform.
    
    \item \textbf{Food} \citep{krivosheev2020detecting} is a dataset for multiple-choice tasks where each task asks to identify a picture of a given food or dish from 5 given choices. %They screen out not qualified workers using 4 test questions. 
    This dataset contains 1220 labels for 76 tasks collected from 177 workers. %The labels are collected on MTurk platform.
    
    \item \textbf{Plot} \citep{krivosheev2020detecting} is a dataset for multiple-choice tasks where the task is to identify a movie from a description of its plot from 10 given choices. Only workers who correctly solved the first 10 test questions can answer the rest of the tasks. A total of 1937 labels are collected from 122 workers for 100 tasks. %The labels are collected on MTurk platform. 

\end{itemize}
Table~\ref{tab:data_info} shows a summarized information for the introduced datasets. 
\begin{table}[h]
    \centering
    \caption{Dataset information}
    \label{tab:data_info}
    \vspace{0.2em}
    \renewcommand{\arraystretch}{1.2}
    \begin{tabular}{c c c c c c c}
        \toprule
         Dataset & \# workers & \# tasks & \# labels or choices & sparsity& $d_{task}$ & $d_{worker}$\\
        \midrule
        Adult2 & 269 & 333 & 4 & 0.037 & 10.0 & 12.4 \\
        Dog & 109 & 807 & 4 & 0.092 & 10.0 & 74.0 \\
        Web & 176 & 2653 & 5 & 0.033 & 5.9 & 88.3 \\
        Flag & 220 & 100 & 10 & 0.073 & 16.0 & 7.3 \\
        Food & 177 & 54 & 5 & 0.125 & 22.1 & 6.7 \\
        Plot & 122 & 56 & 10 & 0.293 & 35.7 & 16.4 \\
        Color & 196 & 1000 & 6 & 0.1 &19.5 &99.4 \\
        \bottomrule
    \end{tabular}
\end{table}

\subsection{Top-$T$ model: extension of the Top-Two model}\label{subsec:top_T}
In this section, we also show that our top-two model \eqref{eqn:A_dist} can be generalized to have $T\geq 2$ plausible answers. The distribution of the response $A_{ij}$ can be defined as follows:
\beq\label{eqn:topk_dist}
A_{ij}=\begin{cases}
g_{jt} \text{ for } t \in [T] & \text{w.p. } s\left(p_i q_{jt}+\frac{1-p_i}{K}\right);\\
\text{each }b\in[K]\backslash\{g_{j1},\dots,g_{jT}\} &  \text{w.p. } s\left(p_i (1-\sum_{t=1}^T q_{jt})+\frac{1-p_i}{K}\right);\\
0,&\text{w.p. } 1-s,
\end{cases}
\eeq
where $g_{j1}, \dots, g_{jT}$ represent the $T$ plausible answers, and $q_{j1}, \dots, q_{jT}$ are the associated confusion probabilities with respect to the ground truth. Without loss of generality, let the ground truth answer of the task $j$  be $g_{j1}$, where we assume $q_{j1} \ge q_{j2} \dots \ge q_{jT}>1-\sum_{t=1}^T q_{jt}$. Similar to the top-two model, we can define a binary converted observation matrices $\bsA^{(k)}$ for $1\leq k<K$, which enjoy the rank-1 structure.
The analysis of the binary converted observation matrices reveals that $\Delta r_j^{(k)}$ in \eqref{eqn:delta_r} can be represented as below
\beq
 \Delta r_j^{(k)}= \begin{cases}
\frac{1}{K}-q_{jt}&\text{ for }k=g_{jt}, t\in[T], \\
\frac{1}{K} &\text{ otherwise}.\\
\end{cases}
\eeq
Thus, we can estimate the top-$T$ plausible answers for each task by finding the lowest-$T$ values of  $\Delta r_{j}^{(k)}$, $k\in[K]$. We can also obtain $q_{jt}$ from $\frac{1}{K}-\Delta r_j^{(t)}$. %Also, the confusion probability $\bsq$ can be estimated by calculating $\Delta r_j^{(k)} \approx \Delta v_{j}^{(k)} / \|\bsp\|_2$. 
Based on this observation, we can generalize Algorithm \ref{alg:top_two} of the top-two model to  Algorithm \ref{alg:top_T} of the top-$T$ model.
%To estimate the $\|\bsp\|_2$, we focus that all $\Delta r_j^{(k)}$ values are same if $j \notin \{g_{j1}, \dots, g_{jT}\}$. We estimate the $\|\bsp\|$ as
%\beq
%    l_j:= \frac{K}{K-T} \sum_{k \notin \{g_{j1}, \dots, g_{jT}\}} \Delta v_j^{(k)}; \quad l:=\frac{1}{m} \sum_{j=1}^m l_j.
%\eeq
%Finally, We can estimate confusing probability of the task $j$ as
%\beq
%    \hat{q}_{jt}:=1/K - \frac{v_j^{(\hat{g}_{jt})}}{l}, t \in [T].
%\eeq
%Details can be found in Alg.\ref{alg:top_T}. 
\begin{algorithm}[h]
	\caption{Spectral Method for Initial Estimation (Top-$T$1 Algorithm) }
	\label{alg:top_T}
\begin{algorithmic}[1]
	\STATE {\bfseries Input:} data matrix $\bsA^1\in\{0,1,\dots,K\}^{n\times m}$ and parameter $\eta>0$ where $
    \eta \sqrt{n} \le \|\bsp\|_2 \le \sqrt{n}
$.
	\STATE Randomly split (with equal probabilities) and convert $\bsA^1$ into binary matrices $\bsX^{(k)}\in\{-1,0,1\}^{n\times m}$ and $\bsY^{(k)}\in\{-1,0,1\}^{n\times m}$ for $1\leq k<K$ as described in Sec. \ref{sec:stage1}. 
	\STATE  Let $\bsu^{(k)}$ be the leading normalized left singular vector of $\bsX^{(k)}$. 
Trim the abnormally large components of $\bsu^{(k)}$ by setting them to zero if $u_i^{(k)}>\frac{2}{\eta\sqrt{n}}$ and denote the resulting vector as  $\tilde{\bsu}^{(k)}$.
	\STATE Calculate the estimate of $\|\bsp\|\bsr^{(k)}$ by defining
$
\bsv^{(k)}:=\frac{1}{s'}(\bsY^{(k)})^{\top}\tilde{\bsu}^{(k)}.
$
Assume $\bsv^{(0)}:=\boldsymbol{0}$ and $\bsv^{(K)}:=\boldsymbol{0}$.
	\STATE For $k\in[K]$, calculate
$
\Delta v_j^{(k)}:= v_j^{(k)}-v_j^{(k-1)}.
$
Estimate the top-$T$ answers for $j\in[m]$ by 
\beq
\begin{split}\label{alg:gh_T}
\hat{g}_{jt}:= \argmin_{k\in[K], k \neq \hat{g}_{j1}, \dots, \hat{g}_{j(t-1)}}\Delta v_j^{(k)}, t \in [T]. 
\end{split}
\eeq

	\STATE Estimate $\|\bsp\|_2$ by
$l_j:= \frac{K}{K-T} \sum_{k \notin \{g_{j1}, \dots, g_{jT}\}} \Delta v_j^{(k)}$ and ${l}:=\frac{1}{m}\sum_{j=1}^m l_j$.

	\STATE Estimate ${q}_{jt}$ for $j\in[m]$ and $t \in [T]$ by defining
\beq\label{eqn:est_qT}
\hat{q}_{jt}:= {1}/{K}-{\Delta v_{j}^{(\hat{g}_{jt})}}/{l}.
\eeq

	\STATE {\bfseries Output:} estimated top-T answers $\{\hat{g}_{j1}, \dots, \hat{g}_{jT}\}_{j=1}^m$ and confusion probability matrix $\hat{\bsq}$.
\end{algorithmic}
\end{algorithm}
\newpage
To proceed to the second stage, we also generalize Algorithm \ref{alg:plugin_MLE} of the top-two model to Algorithm \ref{alg:plugin_MLE_T} of the top-$T$ model by defining the estimate of the worker reliability in a similar way as  \eqref{eqn:hatpi}, but for the case of the top-$T$ model:
\beq\label{eqn:hatpi_T}
\hat{p}_i=\frac{K}{(K-T)}\left(\frac{1}{ms(1-s_1)}\sum_{j=1}^m \mathbbm{1}(A^2_{ij} \in \{\hat{g}_{j1}, \dots, \hat{g}_{jT}\})-\frac{T}{K}\right).
\eeq
We then apply the Maximum Likelihood Estimator (MLE) using $\hat{\bsp}$ and $\hat{\bsq}$. See Algorithm \ref{alg:plugin_MLE_T} for details. 
\begin{algorithm}[h]
	\caption{Plug-in MLE (Top-$T$2 Algorithm) }
	\label{alg:plugin_MLE_T}
\begin{algorithmic}[1]
	\STATE {\bfseries Input:} data matrix $\bsA\in\{0,1,\dots,K\}^{n\times m}$ and the sample splitting rate $s_1>0$.
	\STATE  Randomly split $\bsA$ into $\bsA^1$ and $\bsA^2$ by defining $\bsA^1:=\bsA \circ \bsS$ and $\bsA^2=\bsA \circ (\mathbbm{1}_{n \times m}-\bsS)$ 
	where  $\bsS$ is an $n \times m$ matrix whose entries are i.i.d. with Bern($s_1$) and $\circ$ is an entrywise product.
	\STATE Apply Algorithm \ref{alg:top_two} to $\bsA^1$ to yield estimates for top-T answers $\{\hat{g}_{j1}, \dots, \hat{g}_{jT}\}_{j=1}^m$ and confusion probability vector $\hat{\bsq}$.
	\STATE By using $\{\hat{g}_{j1}, \dots, \hat{g}_{jT}\}_{j=1}^m$ and $\bsA^2$, calculate the estimate $\hat{\bsp}$ as in \eqref{eqn:hatpi_T}.
	\STATE By using the whole $\bsA$ and $(\hat{\bsp},\hat{\bsq})$, find the plug-in MLE estimates $\{\hat{g}^{\mathsf{MLE}}_{j1}, \dots, \hat{g}^{\mathsf{MLE}}_{jT}\}_{j=1}^m$ by
	\beq\label{eqn:plug-in-mle_T}
        \argmax_{a_1, \dots, a_T \in[K]^T} \sum_{i=1}^n \sum_{t=1}^T \log \left(\frac{K\hat{p}_i \hat{q}_{jt}}{1-\hat{p}_i} + 1\right) \mathbbm{1}(A_{ij} = a_t) 
	\eeq
	\STATE {\bfseries Output:} estimated top-two answers $\{\hat{g}^{\mathsf{MLE}}_{j1}, \dots, \hat{g}^{\mathsf{MLE}}_{jT}\}_{j=1}^m$
\end{algorithmic}
\end{algorithm}

Although theoretical analysis needs to be changed accordingly, the model and algorithms can be easily extended to the general case of $T \ge 2$ plausible answers as above, since the binary-converted observation matrices still enjoy the rank-1 structure. Generalizing the theoretical analysis will be an interesting open problem.

\newpage
\section{Experimental Details for Neural Network Training}\label{sec:app:NN_soft}

We show the details of the experiments presented in Sec. \ref{sec:NN_soft}. 

\subsection{Datasets}
The CIFAR10H dataset \citep{peterson2019human} consists of 511,400 human classifications by 2,571 participants which were collected via Amazon Mechanical Turk. Each participant classified 200 images, 20 from each category. Every 20 tasks, a trivial question is presented to prevent random guessing, and participants who scored below 75\% were excluded from the dataset. We present the images with the lowest/highest $q$ from the training samples in Fig \ref{fig:CIFAR10Hlowhigh}. The image with a lower $q$ means that the first answer and the second answer are hard to distinguish.

\begin{figure*}[!tb]
\centering
	\subfloat[Images with lowest $q$ (considered to be hard)\label{fig:lowest_q}]{
	\includegraphics[width=0.45\textwidth]{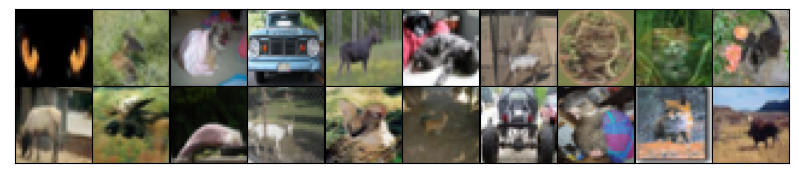}} \qquad
    \subfloat[Images with highest $q$ (considered to be easy)\label{fig:highest_q}]{ \includegraphics[width=0.45\textwidth]{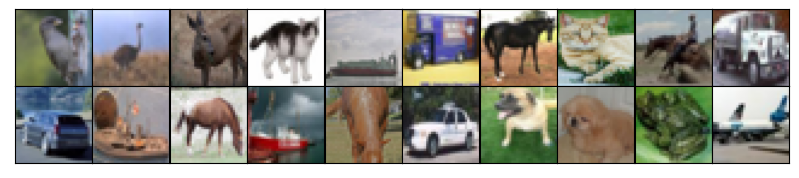}}
    \caption{Training images with (a) lowest and (b) highest confusion probabilities.}\label{fig:CIFAR10Hlowhigh}
\end{figure*}

\subsection{Model}
We train two simple CNN architectures, VGG-19 and ResNet-18, to show the usefulness of the second answer and the confusion probability. For each model, our loss function is defined as the cross-entropy between the softmax output and the two-hot vector (in which the values are $q$ and $1-q$ for $g$ and $h$, respectively). We compare the results of our top-two label training with those of full-distribution training and hard label (one-hot vector) training. 

\subsection{Training}
We train each model using 10-fold cross validation (using 90\% of images for training and 10\% images for validation) and average the results across 5 runs. We run a grid search over learning rates, with the base learning rate chosen from \{0.1, 0.01, 0.001\}. We find 0.1 to be optimal in all cases. We train each model for a maximum of 150 epochs using SGD optimizer with a momentum of 0.9 and a weight decay of 0.0001. Our neural networks are trained using NVIDIA GeForce 3090 GPUs.

\subsection{Training neural networks with corrupted CIFAR10H datasets}\label{sec:toptwo_t_robust}

The CIFAR10H dataset is collected from workers whose reliability is above 75\%, so that the full label distribution is in fact almost the same as the top-two distribution. To analyze the robustness against the label noise, we conduct an additional experiment by adding different portions of random responses to the original CIFAR10H dataset. In the experiment, we add the responses from spammers, who provide random labels on each image, to the original dataset, with the varying ratio of $[0.1, 0.2, 0.3, 0.4, 0.5]$. For example, if the ratio of spammers is 0.5, it means that we add the  same number of responses from spammers as the original dataset. The exact number of the added responses is
\beq
    \text{(\# of added responses)} = \frac{\text{(spammer ratio)}}{1 - \text{(spammer ratio)}} \times \text{(\# of total responses)}. 
\eeq

As in the experiments of Sec.\ref{sec:NN_soft}, we train two neural networks, ResNet18 and VGG-19, with the top-two label distribution and the full label distribution as the spammer ratio increases. Table~\ref{tab:corrupt_CIFAR10} shows the test accuracy of the trained neural networks. As shown in the table, the top-two label training outperforms the full label training in the high spammer ratio regime. This is because the training with the full label distribution tries to fit the model to all the collected answers, which include the responses from spammers. On the other hand, training with the top-two labels is more robust against the label noise, since it focuses on the simple yet meaningful side information, the ground-truth label and the most confusing label with the ratio between the two in the collected answers.

 \begin{table}[h]
 \centering
    \caption{Comparison of performances for the corrupted CIFAR10H dataset with top2/full label training}
    \label{tab:corrupt_CIFAR10}
    \vspace{0.2em}
    \renewcommand{\arraystretch}{1.2}
    \begin{tabular}{c | c c | c c}
        \toprule
        \multirow{2}{*}{spammer ratio} & \multicolumn{2}{c|}{ResNet18} & \multicolumn{2}{c}{VGG-19}\\ \cline{2-5} 
        & top-two & full & top-two & full\\
        \midrule
        0.1	& 80.18$\pm$1.30\% & \textbf{80.73$\pm$0.79\%} & \textbf{78.90$\pm$0.72\%} & 78.67$\pm$1.45\% \\
        0.2	& \textbf{80.30$\pm$1.81\%} & 79.79$\pm$0.59\% & \textbf{79.10$\pm$0.64\%} & 78.65$\pm$0.91\% \\
        0.3	& \textbf{79.80$\pm$0.44\%} & 79.23$\pm$0.79\% & \textbf{79.08$\pm$1.22\%} & 77.80$\pm$1.08\% \\
        0.4	& \textbf{79.05$\pm$0.78\%} & 76.82$\pm$0.75\% & \textbf{79.15$\pm$1.46\%} & 77.40$\pm$1.09\% \\
        0.5	& \textbf{78.40$\pm$0.96\%} & 75.88$\pm$0.93\% & \textbf{78.22$\pm$0.69\%} & 76.11$\pm$1.53\% \\
        \bottomrule
    \end{tabular}
\end{table}

\section{Baseline Methods}\label{app:baseline}

In this section, we explain the baseline methods with which we compare the performance of our algorithms.
To analyze the performance in recovering the top-two answers, we considered the ML-based algorithms, including the \textbf{Spectral-EM algorithm (MV-D\&S and OPT-D\&S)} \citep{SEM}, \textbf{Projected Gradient Descent (PGD)} \citep{PGD}, \textbf{M-MSR} \citep{M-MSR}, \textbf{MultiSPA}  \citep{M-MSR}, and \textbf{EBCC} \citep{li2019exploiting}, which provide a ``score'' for each label so that we can recover the top-two answers. %In order to consider the recovery accuracy  only for the ground-truth label, we add three more baseline methods, \textbf{KOS} \cite{karger2013efficient}, \textbf{Ghost-SVD} \cite{GhostSVD}, and \textbf{Eigenvectors of Ratio (EOR)} \cite{EoR}. These three baselines are used only for public real datasets for which we do not have the information for the most confusing answer, so the only way to compare the performance of the algorithms is to check the prediction error for the ground truth.

\begin{itemize}
    \item \textbf{Spectral-EM algorithm (MV-D\&S and OPT-D\&S)} \citep{SEM} is a two-stage algorithm for multi-class crowd labeling problems. These algorithms are built for the D\&S model where each worker has his/her own confusion matrix. In the first stage of the algorithm, the confusion matrix of each worker is estimated via spectral method (OPT-D\&S) or majority voting (MV-D\&S), respectively, and in the second stage, the estimates for the confusion matrices are refined by optimizing the objective function of the D\&S estimator via the Expectation Maximization (EM) algorithm.
        \item\textbf{Projected Gradient Descent (PGD)} \citep{PGD} is an approach to estimate the skills of each worker in the single-coin D\&S model. The authors formulate the skill estimation problem as a rank-one correlation-matrix completion problem. They propose a projected gradient descent method to solve the correlation-matrix completion problem. 
   %   \item \textbf{KOS} \citep{karger2013efficient} algorithm targets multi-label classification tasks for D\&S model. KOS algorithm utilizes a random regular bipartite graph to assign tasks to workers, and solves the $(K-1)$ binary tasks by applying the singular value decomposition (SVD) to the binary-converted observation matrices to find the solution for the $K$-ary classification tasks. 

%    \item \textbf{Ghost-SVD} \citep{GhostSVD} algorithm was designed for binary classification tasks based on the single-coin D\&S model. It obtains the skills of workers and the truth labels by applying the SVD on the observed matrix. To extend this algorithm to multi-label tasks, we use the techniques from \citep{karger2013efficient}, where  a series of $(K-1)$-binary tasks is solved to find the solution for the $K$-ary classification task. 

%    \item \textbf{Eigenvectors of Ratio (EOR)} \citep{EoR} algorithm also targets the binary labeling problem in the single-coin D\&S model. Unlike Ghost-SVD and KOS approaches, this algorithm allows the task-assignment matrices to be arbitrary. The algorithm also applies the singular value decomposition to obtain the true labels and the workers' reliability. 

    \item \textbf{M-MSR} \citep{M-MSR} algorithm is an approach to estimate the reliability of each worker in the multi-class D\&S model. M-MSR algorithm utilizes that the rank of the response matrix is one. To estimate the reliability of the workers, they use update rules to find the left singular vector and right singular vector of the response matrix. In this process, the extreme values are filtered out to guarantee the stable convergence of the algorithm.

    \item \textbf{MultiSPA-EM} \citep{ibrahim2019crowdsourcing} is an approach to estimate  each worker's confusion matrix using pairwise co-occurrence matrix. To estimate the confusion matrices, three SPA (successive projection algorithm)-based algorithms are proposed; MultiSPA, MultiSPA-KL and MultiSPA-EM. MultiSPA utilizes the second order statistics to obtain the confusion matrices and the ground truth. MultiSPA-KL is an iterative optimization method to minimize the KL-divergence between the expectation of the co-occurrences and the empirical co-occurrences, where the initial estimates are obtained from MultiSPA. MultiSPA-EM is an EM based algorithm where the initial estimates are obtained from MultiSPA. Since the MultiSPA-EM outperforms MultiSPA and MultiSPA-KL, we only include these two in our baselines.
    
    \item \textbf{EBCC} \citep{li2019exploiting} algorithm is an enhanced version of the Baysian classifier combination model. The authors assume that each label has its own subtypes. Each subtype has different probability distribution even if the label is the same. EBCC algorithm utilizes the Expectation-Maximization (EM) algorithms to recover the hidden variables and estimates the true labels. 
    
\end{itemize}

\section{Synthetic Experiments}\label{app:synthetic_exp}
\begin{figure}[h]
    \centering
    \includegraphics[width=\textwidth]{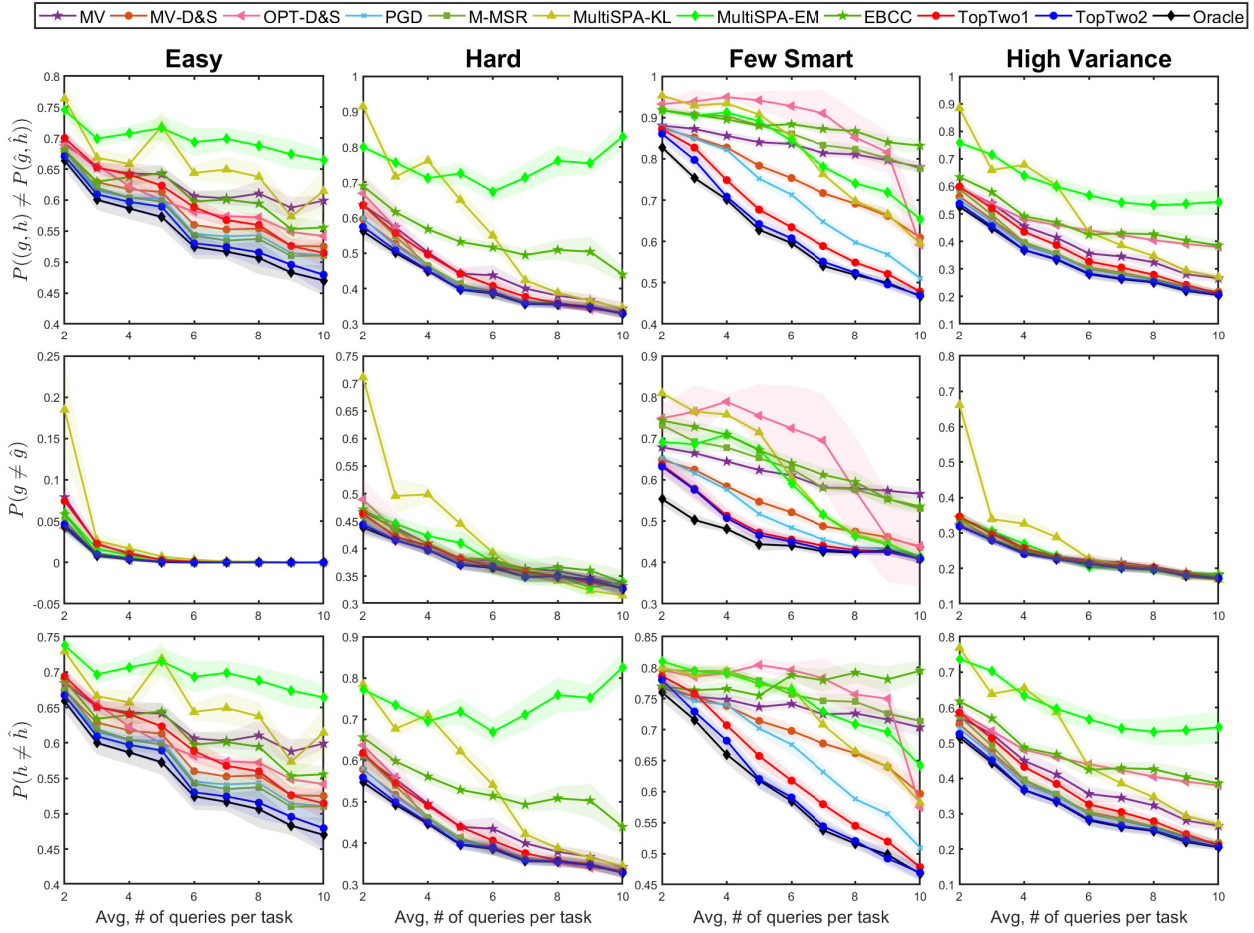}
    \caption{Prediction error for $(g_j,h_j)$ (top row), $g_j$ (middle) and $h_j$ (bottom) for four scenarios. Our algorithm (TopTwo2) achieves the best performance, near the oracle MLE for all the scenarios. }
    \label{fig:app:synthetic_acc}
\end{figure}

\subsection{Additional plots for synthetic data experiments in Sec. \ref{sec:synth}}\label{sec:app:more_plots}
In Section \ref{sec:synth}, we devised four scenarios described in Table \ref{tab:synth_setup1} to verify the robustness of our model for various $(\bsp,\bsq)$ ranges, with $(n,m,s)=(50,500,0.2)$. 
The performance of algorithms is measured by the empirical average error probabilities in recovering $g_j$, $h_j$ and $(g_j,h_j)$, i.e.,  $\frac{1}{m}\sum_{j=1}^m \P(\hat{g}_j\neq g_j)$, $\frac{1}{m}\sum_{j=1}^m \P(\hat{h}_j\neq h_j)$ and $\frac{1}{m}\sum_{j=1}^m \P((\hat{g}_j,\hat{h}_j)\neq (g_j,h_j))$ and plotted in Fig. \ref{fig:app:synthetic_acc}.
We can observe that for all the considered scenarios TopTwo2 achieves the best performance, near the oracle MLE, in recovering $(g_j,h_j)$. Depending on scenarios though, the reason TopTwo2 outperforms can be explained differently. For \underline{Easy} scenario, since $q_j$ is close to 1, it becomes easy to distinguish $g_j$ from $h_j$ but hard to distinguish $h_j$ from other labels. Our algorithm achieves the best performance in estimating $h_j$ by a large margin. For \underline{Hard} scenario, it becomes hard to distinguish $g_j$ and $h_j$, but our algorithm, which uses an accurate $\hat{q}_j$, can better distinguish $g_j$ and $h_j$. 
For \underline{Few-smart}, our algorithm achieves the largest gain compared to other methods, since our algorithm can effectively distinguish few smart workers from spammers.  \underline{High-variance} show the effect of having diverse $q_j$ in a dataset. 
%We remark that even though the performance gap between TopTwo2 and the next best performer is not significant for some cases, our algorithm always achieves the best performance, near that of the oracle-MLE, for all the scenarios, while the next performer keeps changing depending on scenarios. %For example, the OPT D\&S is the second best performer in the `Easy' scenario, while it is the worst performer in the `Few smart' scenario.
\subsection{Robustness of our methods}
\begin{figure}
    \centering
    \subfloat[Effect of the number of workers on the performance\label{fig:c_workers}]{\includegraphics[width=0.645\textwidth]{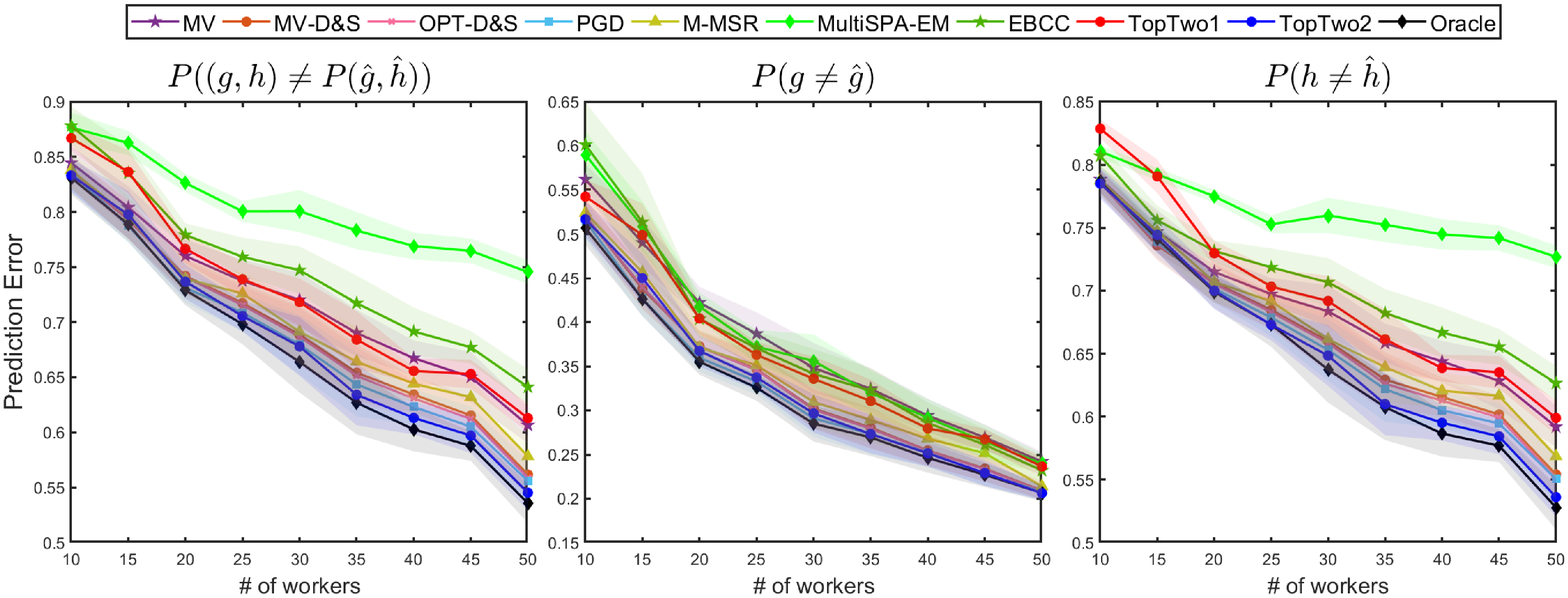}}  \vspace{-1em}\\
    \subfloat[Effect of the number of tasks on the performance\label{fig:c_tasks}]{\includegraphics[width=0.65\textwidth]{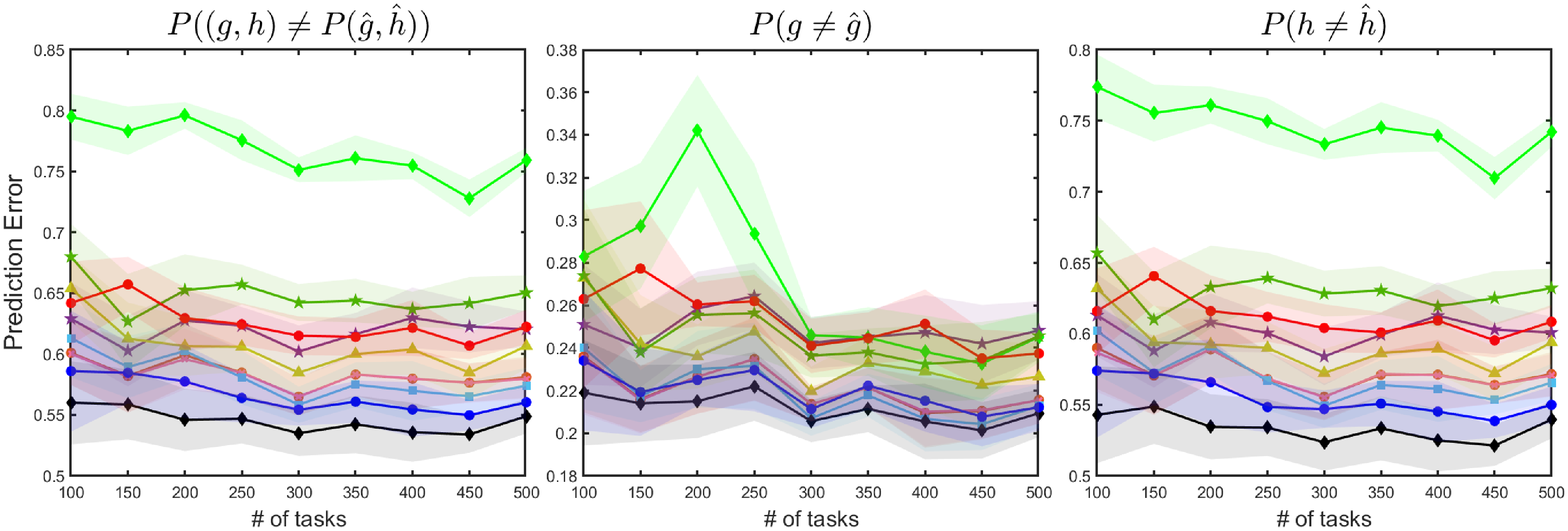}}  \vspace{-1em}\\
    \subfloat[Effect of the variance of worker reliability on the performance\label{fig:c_std_worker}]{\includegraphics[width=0.65\textwidth]{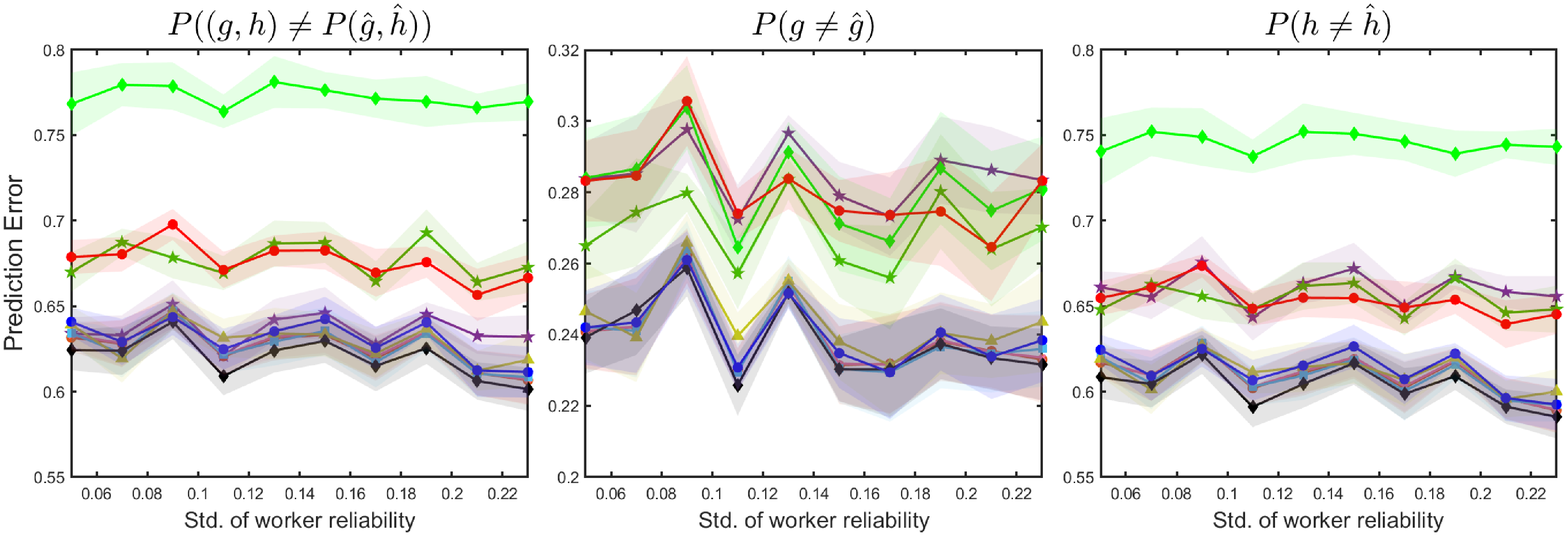}}  \vspace{-1em}\\
    \subfloat[Effect of the variance of task difficulty on the performance\label{fig:c_std_task}]{\includegraphics[width=0.65\textwidth]{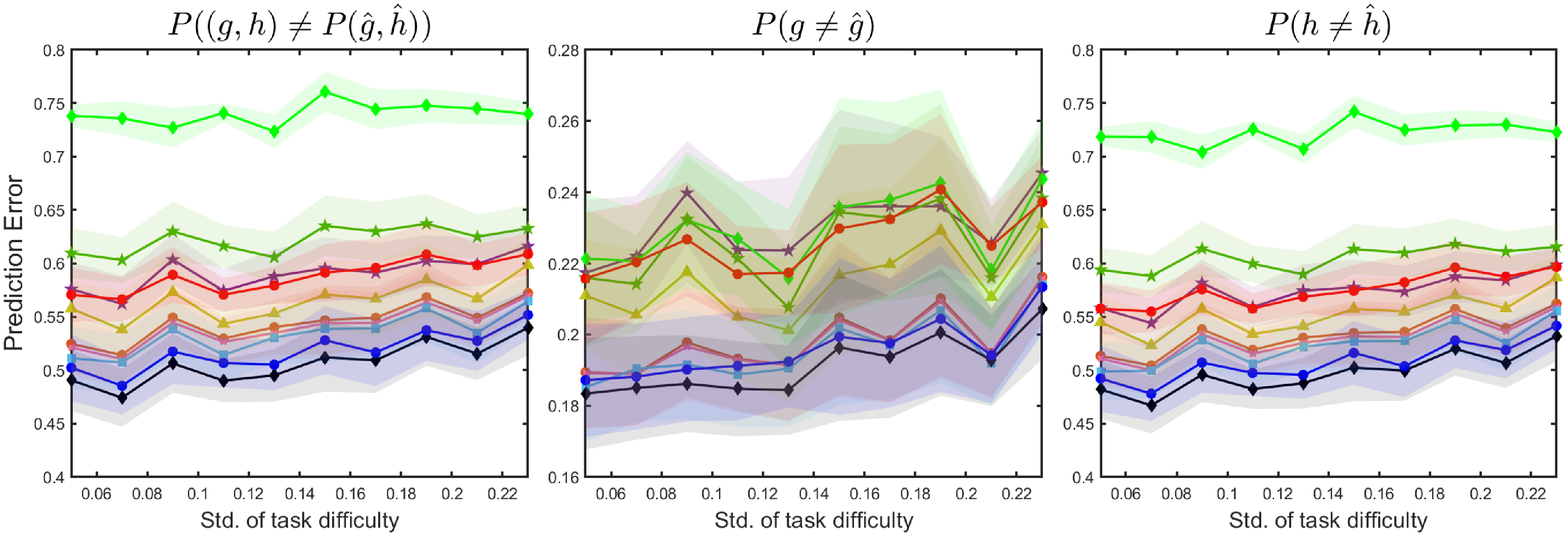}}  \vspace{-1em}\\
    \subfloat[Effect of the portion of spammers on the performance\label{fig:c_spammers}]{\includegraphics[width=0.65\textwidth]{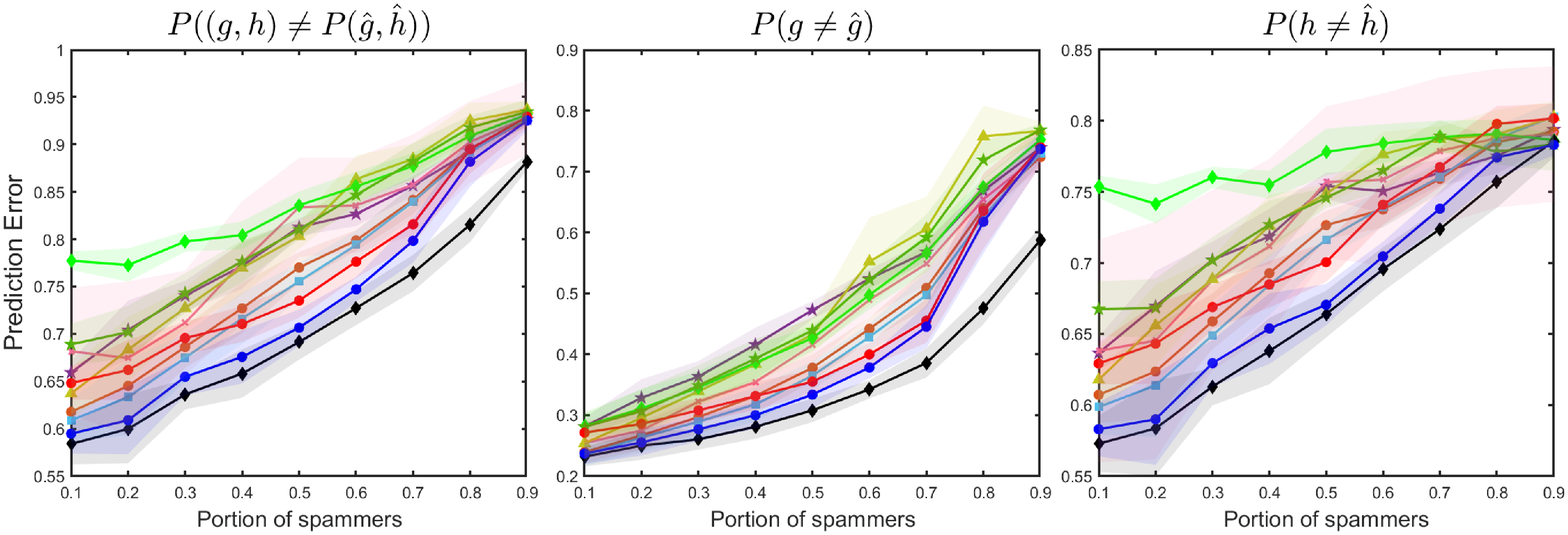}} 
    \caption{Prediction error for $(g_j, h_j)$ (first column), $g_j$ (second column), and $h_j$ (third column) for five different setups. The solid lines represent the mean prediction errors of each algorithm averaged over 10 runs, and the shaded regions represent the standard deviations. }
    \label{fig:app_exp}
\end{figure}
In this section, we present a set of four additional synthetic experiments to demonstrate the robustness of our methods, Alg. \ref{alg:top_two} and Alg. \ref{alg:plugin_MLE} (referred to as TopTwo1 and TopTwo2). In each experiment, we change a parameter of our synthetic error model and compare the prediction error of our algorithms to the baselines: majority voting(MV), MV-D\&S \cite{SEM}, PGD \cite{PGD}, MultiSPA-KL and MultiSPA-EM\cite{ibrahim2019crowdsourcing}, EBCC\cite{li2019exploiting} and Oracle-MLE. We measure the performance of each algorithm by the empirical average error probabilties in recovering the ground truth $g_j$, the most confusing answer $h_j$ and the pair of top two $(g_j,h_j)$, i.e.,  $\frac{1}{m}\sum_{j=1}^m \P(\hat{g}_j\neq g_j)$, $\frac{1}{m}\sum_{j=1}^m \P(\hat{h}_j\neq h_j)$ and $\frac{1}{m}\sum_{j=1}^m \P((\hat{g}_j,\hat{h}_j)\neq (g_j,h_j))$.
Obviously, Oracle-MLE provides a lower bound for the performance. 

\textbf{Changing the dimension of observed matrix}: We first check the robustness of our methods against the change of dimensions of the observation matrix $\bsA\in\{0,1\dots,K\}^{n\times m}$ with $n\leq m$. We vary the number of workers ($n$) or the number of tasks ($m$) while fixing the other dimension. The default values of $n$ and $m$ are 50 and 500, respectively, and the sampling probability $s$ is fixed as $0.1$ throughout the experiments. The worker reliability $p_i$ and the task difficulty $q_j$ is sampled uniformly at random from $[0,1]$ and $(1/2,1]$, respectively, for all $i\in[n]$ and $j\in[m]$.

In Fig. \ref{fig:c_workers} and \ref{fig:c_tasks}, we report the results when we change $n$ for a fixed $m$ and $s$, or when we change $m$ for a fixed $n$ and $s$, respectively.
From Fig. \ref{fig:c_workers}, we can see that as the number of workers increases, the performance of every algorithm improves since the number of samples per task scales as $ns$ for a fixed $s$.
Our algorithm achieves the performance close to the Oracle-MLE for all the considered range, which implies that the worker reliabilities $\{p_i\}$ are well estimated with our methods. 
%Fig.~\ref{fig:app_exp} (a), (b) shows the results of varying the shape of the observed matrix. We conclude that our algorithm works well regardless of the shape of the observed matrix. 
From Fig. \ref{fig:c_tasks}, we can see that our algorithm achieves a robust performance against the change in the number of tasks, although the performance gets closer to that of Oracle-MLE as the number of tasks increases. Since our method uses SVD in the first stage, the larger dimension is beneficial for the concentration of the random perturbation matrix with respect to the expectation of the observation matrix.  This phenomenon is observed for other baseline methods as well, which are based on the spectral method. %, OPT D\&S, for example. 

\textbf{Changing the variance of worker reliability}: %In our original experiment settings, $i$-th worker's reliability $p_i$ is randomly chosen from range $[0, 1]$. 
In this experiment, we change the range of $p_i$, the parameter for worker skill/reliability, for $i\in[n]$, with a fixed mean in order to observe the impact of the variance of the worker reliability on the prediction error. We randomly sample $p_i$ from the window $[0.5-x,0.5+x]$ with $x$ varying from $0.05$ to $0.25$. The mean of the worker reliability is fixed as $0.5$.

As shown in Fig.~\ref{fig:c_std_worker}, when the variance of the worker reliability increases, the baseline methods estimating worker reliabilities perform better than the majority voting. Our TopTwo2 algorithm achieves the best performance close to Oracle-MLE, as the standard deviation increases, i.e., as the workers become more heterogeneous.

\textbf{Changing the variance of task difficulty}: We also design an experiment to observe the impact of the variance of $q_j$, $j\in[m]$, the parameter for task difficulty, on the prediction error. 
We randomly sample $q_j$ from the window $[0.75-x,0.75+x]$ with $x$ varying from $0.05$ to $0.25$. The mean of the worker reliability is fixed as $0.75$.
%For the variance experiment, the width of the task difficulty range is selected from a range from $0.1$ to $0.5$ when the average task difficulty is set as $0.75$. 
If the variance of the task difficulty is small, it could be sufficient to only estimate the worker reliability since all the tasks have almost the similar task difficulties. 

As shown in Fig.~\ref{fig:c_std_task}, when the variance of the task difficulty increases, our TopTwo2 algorithm performs better than the other baselines. This is the evidence for the validity of our method in estimating the task difficulty. 

\textbf{Changing the portion of spammers}: Spammers who provide random answers always exist in crowdsourcing systems. To improve the inference performance, it is important to distinguish spammers from reliable workers. In our experimental setup, we define a spammer as a worker whose reliability parameter $p_i$ is in the range $[0, 0.1]$. We change the portion of spammers among the workers from $0.1$ to $0.9$ and compare the prediction error of our methods to those of other baseline methods. 

In Fig. \ref{fig:c_spammers}, we can see that  our algorithm achieves the best performance among all the considered baselines except Oracle-MLE, which can exactly distinguish spammers from reliable workers. This result demonstrates the superiority of our methods in detecting spammers compared to other methods.

\subsection{Estimating the worker reliability vector and the task difficulty vector}

In this section, we examine the accuracy of our estimates for the worker reliability vector $\bsp$ and the task difficulty vector $\bsq$.
The worker reliability is estimated by $\hat{\bsp}$ defined in \eqref{eqn:hatpi} of Algorithm \ref{alg:plugin_MLE} and the task difficulty is estimated by $\hat{\bsq}$ defined in \eqref{eqn:est_q} of Algorithm \ref{alg:top_two}.
To analyze the accuracy of these estimators, we compute the mean squared error (MSE), $\frac{1}{n}\|\hat{\bsp}-\bsp\|_2^2$ and $\frac{1}{m}\|\hat{\bsq}-\bsq\|_2^2$, respectively. 

To analyze the estimation accuracy for the worker reliability, we first sample $p_i$ uniformly at random from $[0, 1]$ for all $i\in[n]$ and fix the worker reliability vector $\bsp$. Then, we randomly sample the task difficulty vector $\bsq\in(1/2,1]^m$ fifty times and then sample the observation matrices from the distribution \eqref{eqn:A_dist} for each $(\bsp,\bsq)$ pair with a fixed $\bsp$. For each observation matrix, we subsample the data with varying probabilities and  apply Algorithm \ref{alg:plugin_MLE} to get the estimate $\hat{\bsp}$, which is then used to calculate the MSE of $\bsp$. We report the MSE averaged over these fifty cases. Similarly, to analyze the estimation accuracy for the task difficulty, we randomly sample and fix a task difficulty vector $\bsq\in(1/2,1]^m$ and generate fifty different observation matrices while varying the worker reliability vector $\bsp$. We again report the MSE averaged over these fifty cases. The number of workers and that of tasks is set to be $(50, 500)$ for the worker reliability estimation, and to be $(100, 1000)$ for the task difficulty estimation. 

In Fig. \ref{fig:estm_p} and \ref{fig:estm_q}, we plot the MSE for $\bsp$ and $\bsq$, respectively, as the average number of queries per task increases. 
We can see that both for $\bsp$ and $\bsq$, the MSEs converge to near zero as the average number of queries per task increases.
However, estimating the task difficulty requires more number of samples as our theory \eqref{eqn:total_s_pq} suggests.

%task degrees for it to converge. This is firstly because the averaging step and the trimming step of the left singular vector reduce the effect of noise affecting the observation matrix. Secondly, since we compute the increase of $v_j^{(k)}$ with respect to $k$, all noises of $K-1$ decomposition can be accumulated into the $q$ estimation. Despite this, in all scenarios, both the average absolute error of $\bsp$ and $\bsq$ decrease as task degree grows. 

\begin{figure}[thb]
    \centering
    \subfloat[Mean squared error $\frac{1}{n}\|\hat{\bsp}-\bsp\|_2^2$\label{fig:estm_p}]{\includegraphics[width=0.4\textwidth]{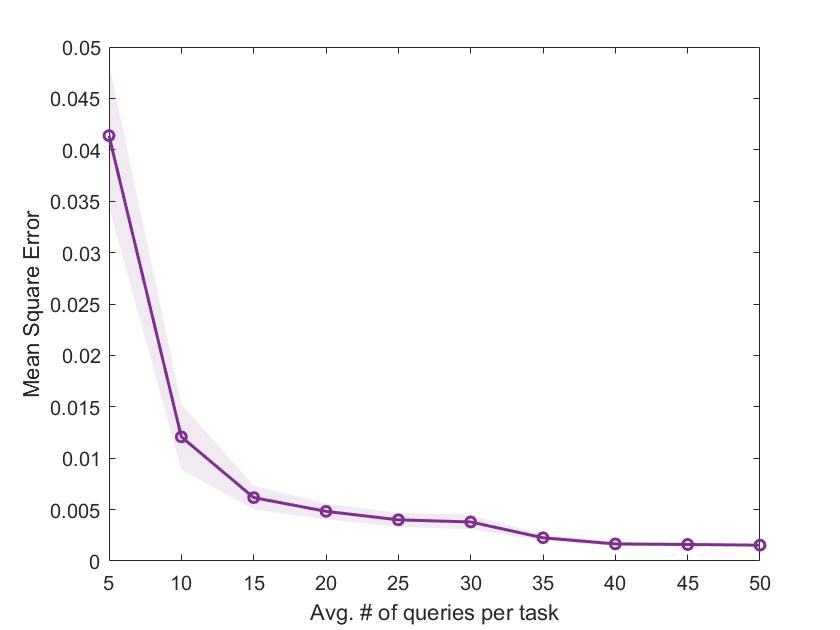}}
    \subfloat[Mean squared error  $\frac{1}{m}\|\hat{\bsq}-\bsq\|_2^2$ \label{fig:estm_q}]{\includegraphics[width=0.4\textwidth]{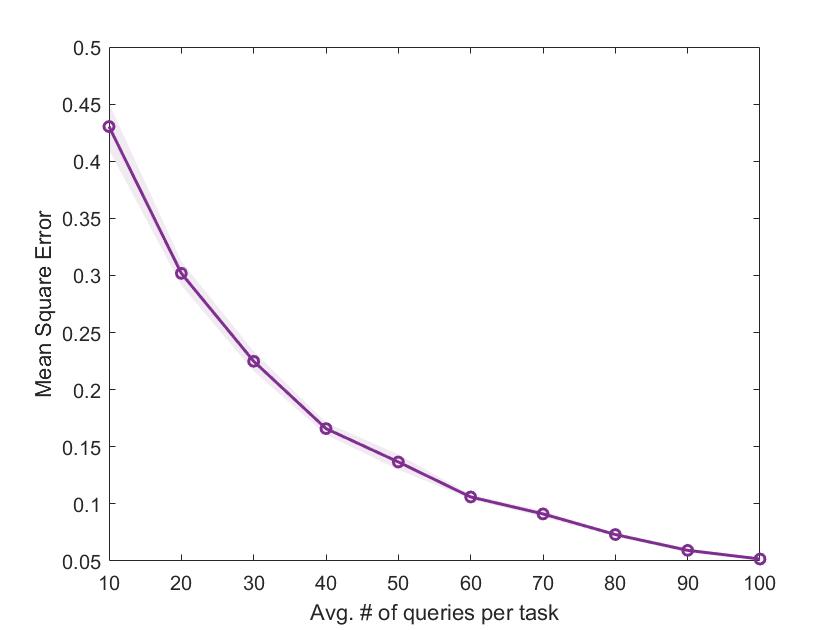}}
    \caption{Mean squared errors in estimating the worker reliability vector $\bsp$ (left) and the task difficulty vector $\bsq$ (right), respectively. }
    \label{fig:estm_pq}
\end{figure}

\section{Discussion of theoretical results}\label{app:diss:theory}

In this section, we present a discussion of the main theoretical results.
\begin{itemize}
\item Theorem \ref{thm:main} asserts that the sampling probability of $\Omega\left(\frac{1}{\delta_1^2 \|\bsp\|_2^2}\log\frac{K}{\epsilon}\right)$ is sufficient to recover the top-two answers $(g_j,h_j)$ for any task $j\in[m]$ and to estimate the confusion probability $q_j$ with accuracy of $|\hat{q}_j-q_j|<\delta_1$ by Algorithm \ref{alg:top_two} with probability at least $1-\epsilon$. Combined with Theorem \ref{thm:converse} part (a), we can see that this sample complexity is the minimax optimal rate for a fixed collective quality of workers, measured by $\|\bsp\|_2^2$. 

%\item It is also worth comparing our algorithm with the simple majority voting (MV) scheme. The MV scheme infers the top-two answers by counting the majority of the received answers. Simple analysis shows that the MV scheme requires the sampling probability $s$ such that $ns= \Theta\left((\frac{1}{n}\sum_i p_i)^{-2}\log\frac{1}{\epsilon}\right)$ to recover $(g_j,h_j)$ with probability $1-\epsilon$. Remind that Algorithm \ref{alg:top_two} requires $ns=\Omega\left(\frac{n}{\delta_1^2 \|\bsp\|_2^2}\log\frac{K}{\epsilon}\right)$ samples per task. Since $\frac{1}{n}\|\bsp\|^2=\frac{1}{n}\sum_{i} p_i^2 \geq\left(\frac{1}{n}\sum_i p_i\right)^2$ by Cauchy-Schwarz inequality, Algorithm 1 achieves strictly better trade-offs unless $p_i$ is same for all workers $i\in[n]$. As an example, for a spammer-hammer model where $\alpha\in(0,1)$ fraction of workers are hammers with $p_i=1$ and the rest are spammers with $p_i=0$,  Algorithm 1 requires $ns=\Theta\left(\frac{1}{\alpha}\log\frac{1}{\epsilon}\right)$ samples per task, while MV requires $ns=\Theta\left(\frac{1}{\alpha^2}\log\frac{1}{\epsilon}\right)$ samples per task to recover top-two answers with probability $1-\epsilon$.
\item  Theorem \ref{thm:plugmin_mle} shows that when we have an entrywise bound on the estimated worker reliability vector $\bsp$ and the task difficulty vector $\bsq$, the plug-in MLE estimator, used in Algorithm \ref{alg:plugin_MLE}, guarantees the recovery of top-two answers if the sampling probability $s=\Omega(\frac{\log(m/\epsilon)}{n\bar{D}})$ where $\bar{D}$, which depend on $(\bsp,\bsq)$, indicates the average reliability of workers in distinguishing the top-two answers from any other pairs for the most difficult task. Combined with Theorem \ref{thm:converse} part (b), we can see that this sample complexity is the minimax optimal rate for any $(\bsp,\bsq)$, ignoring the logarithmic terms.
\item  Combining the conditions for the accurate estimation of model parameters in \eqref{eqn:total_s_pq} and the convergence of the plug-in MLE (Theorem \ref{thm:plugmin_mle}), Corollary \ref{cor:alg2} shows the condition on the sample complexity to guarantee the performance of Algorithm \ref{alg:plugin_MLE}. 
\end{itemize}

\section{Proof of Proposition \ref{prop:binary}}\label{app:prop:binary}

For each task $j$ and label $k$, define four indicator functions: 
\beq\label{eqn:defn_pi}
\begin{split}
\Pi_a(j,k):=&\mathbbm{1}(g_j>k,h_j>k),\\
\Pi_b(j,k):=&\mathbbm{1}(g_j\leq k,h_j>k),\\
\Pi_c(j,k):=&\mathbbm{1}(g_j> k,h_j\leq k),\\
\Pi_d(j,k):=&\mathbbm{1}(g_j\leq k,h_j\leq k),
\end{split}
\eeq
which satisfy $\Pi_a(j,k)+\Pi_b(j,k)+\Pi_c(j,k)+\Pi_d(j,k)=1$. For notational simplicity, we will often drop $(j,k)$ fron $\Pi_*$.
The pmf of $\bsA^{(k)}$ is given by
\beq
A_{ij}^{(k)}=
\begin{cases}
-1 &\text{with probability }s(1-\rho_{ij}^{(k)}),\\
1&\text{with probability } s\rho_{ij}^{(k)},\\
0&\text{with probability }1-s,
\end{cases}
\eeq
where $\rho_{ij}^{(k)}=\Pi_a (j,k)p_i+\Pi_b(j,k) p_i(1-q_j)+\Pi_c(j,k) p_iq_j+\frac{(K-k)(1-p_i)}{K}$, and its expectation is
$
\E[A_{ij}^{(k)}]=s(2\rho_{ij}^{(k)}-1).
$
Note that by using  $\Pi_a=1-\Pi_b-\Pi_c-\Pi_d$, the probability $\rho_{ij}^{(k)}$ can be written as
$
\rho_{ij}^{(k)}=p_i\left(q_j(\Pi_c-\Pi_b)-(\Pi_c+\Pi_d)+\frac{k}{K}\right)+\frac{K-k}{K}.
$
Thus, by defining
\beq\label{eqn:def_rjk}
r_{j}^{(k)}:=q_j(\Pi_c-\Pi_b) -(\Pi_c+\Pi_d)+\frac{k}{K},
\eeq
the expectation of $A_{ij}^{(k)}$ can be written as
\beq
\E[A_{ij}^{(k)}]=s(2\rho_{ij}^{(k)}-1)=s\left(2p_ir_j^{(k)}+\frac{K-2k}{K}\right),
\eeq
and
\beq\label{eqn:rank1}
\E[\bsA^{(k)}]-\frac{s(K-2k)}{K}\mathbbm{1}_{n \times m}=2s\bsp (\bsr^{(k)})^\top.
\eeq

Note that
\begin{equation}\label{eqn:defpicase}
 \begin{split}
&\text{Case I: $g_j>h_j$} \\
&\Pi_a(j,k)=1 \text{ where }k<h_j,\\
&\Pi_c(j,k)=1 \text{ where }h_j\leq k< g_j,\\
&\Pi_d(j,k)=1 \text{ where }g_j \leq k;
\end{split}\qquad\qquad
 \begin{split}
&  \text{Case II: $g_j<h_j$} \\
&\Pi_a(j,k)=1 \text{ where }k<g_j,\\
&\Pi_b(j,k)=1 \text{ where }g_j\leq k< h_j,\\
&\Pi_d(j,k)=1 \text{ where }h_j \leq k.
\end{split}\end{equation}
Thus, $r_j^{(k)}$ in \eqref{eqn:def_rjk} is equal to
\begin{equation*}
 \begin{split}
&\text{Case I: $g_j>h_j$} \\
&r_j^{(k)}=\begin{cases}
\frac{k}{K}& \text{ where }k<h_j ; \\
\frac{k}{K}-(1-q_j)& \text{ where }h_j\leq k< g_j;  \\
\frac{k}{K}-1& \text{ where }g_j \leq k,
\end{cases}
\end{split}\quad\quad
 \begin{split}
&  \text{Case II: $g_j<h_j$} \\
&r_j^{(k)}=\begin{cases}
\frac{k}{K} &\text{ where }k<g_j;\\
\frac{k}{K}-q_j &\text{ where }g_j\leq k< h_j;\\
\frac{k}{K}-1& \text{ where }h_j \leq k.
\end{cases}
\end{split}\end{equation*}

\section{Performance Analysis of Algorithm \ref{alg:top_two}}\label{app:thm:main}

\subsection{Proofs of Theorem \ref{thm:main} and Corollary \ref{cor:exact_alg1}}
In Algorithm \ref{alg:top_two}, we use the data matrix $\bsA^1$, which is obtained by randomly splitting the original data matrix $\bsA$ into $\bsA^1$ and $\bsA^2$ with probability $s_1$ and $(1-s_1)$, respectively. 
Then, the first stage of Algorithm \ref{alg:top_two} begins with randomly splitting  $\bsA^1$ again  into two independent matrices $\bsB$ and $\bsC$ with equal probabilities, %:= \bsA \circ \bsS$ and $\bsC=\bsA \circ (\mathbbm{1}_{n \times m}-\bsS)$ where $\circ$ is an entrywise product and $\bsS\in\mathbb{R}^{n\times n}$ is composed of i.i.d. with Bern(1/2).  %We divide $\bsA$ into two matrices $\bsB:= \bsA \circ \bsS$ and $\bsC=\bsA \circ (\mathbbm{1}_{n \times m}-\bsS)$  where $\circ$ is an entrywise product. 
and then converting $\bsB$ and $\bsC$ into $(K-1)$-binary matrices $\bsB^{(k)}$ and $\bsC^{(k)}$ as explained  in Sec. \ref{sec:model}. %for $1\leq k< K$, defined as $B^{(k)}_{ij}=-1$ if $1\leq  B_{ij}\leq k$; $B^{(k)}_{ij}=1$ if $k<B_{ij}\leq K$; and $B^{(k)}_{ij}=0$ if $B_{ij}=0$, 
%and similarly for $\bsC^{(k)}$.
We define $\bsX^{(k)}$ and $\bsY^{(k)}$ as 
$ \bsX^{(k)} := \bsB^{(k)} - \frac{s'(K-2k)}{K}\mathbbm{1}_{n \times m}$ and $\bsY^{(k)} := \bsC^{(k)} - \frac{s'(K-2k)}{K}\mathbbm{1}_{n \times m}$ where $s'=s\cdot s_1/2$.
We have
$
\E[\bsX^{(k)}]=\E[\bsY^{(k)}]=s'\bsp (\bsr^{(k)})^\top
$ from Prop. \ref{prop:binary}. For notational simplicity, we will ignore this random splitting for a moment and just pretend that $\bsX^{(k)}$ and $\bsY^{(k)}$ are sampled independently with $s'=s$ throughout this section. %Thus, the sampling probability is not $s'$ but $s$ throughout this section. 

We first outline the proof. 
Based on the observation that $\E[\bsX^{(k)}]=s\bsp (\bsr^{(k)})^\top$, if $\E[\bsX^{(k)}]$ is available we can recover $\bsp^*=\frac{\bsp}{\|\bsp\|_2}$ by SVD, and by using $\bsp^*$ it is possible to recover $\|\bsp\|_2 \bsr^{(k)}$, which then reveals $\{(g_j,h_j)\}_{j=1}^m$ as well as $\bsq$ from the relation in \eqref{eqn:delta_r}.
To estimate $\bsp^*$ from $\bsX^{(k)}$, we first bound the spectral norm of the perturbation, $\|\bsX^{(k)}-\E[\bsX^{(k)}]\|_2$. We then use this bound and Wedin Sin$\Theta$ theorem to bound $\sin\theta(\bsu^{(k)}, \bsp^*)$ where $\bsu^{(k)}$ is the left singular vector of $\bsX^{(k)}$ with the largest singular value. We trim the abnormally large components of $\bsu^{(k)}$ and denote the resulting vector by $\tilde{\bsu}^{(k)}$. After trimming, it is still possible to show that $\sin\theta(\tilde{\bsu}^{(k)}, \bsp^*)$ can be bounded in the same order as that of $\sin\theta(\bsu^{(k)}, \bsp^*)$. Finally, we provide an entrywise bound between $\bsv^{(k)}=\frac{2}{s}(\bsY^{(k)})^{\top}\tilde{\bsu}^{(k)}$ and $\|\bsp\|_2\bsr^{(k)}$ in Lemma \ref{lem:entrywisebd_new}, which is the main lemma to prove Theorem \ref{thm:main}.  We state our main technical lemmas first and then prove Theorem \ref{thm:main}.

Let us define the perturbation matrix
\beq\label{eqn:perturbation}
\bsE:=\bsX^{(k)}-\E[\bsX^{(k)}]=\bsB^{(k)}-\frac{s(K-2k)}{K}{\mathbbm{1}_{n\times m}}-s\bsp (\bsr^{(k)})^\top=\bsB^{(k)}-\E[\bsB^{(k)}]
\eeq 
where 
\beq
B_{ij}^{(k)}=
\begin{cases}
-1 &\text{w.p. }s(1-\rho_{ij}^{(k)}),\\
1&\text{w.p. } s\rho_{ij}^{(k)},\\
0&\text{w.p. }1-s,
\end{cases}
\eeq
and $\rho_{ij}^{(k)}=\Pi_a (j,k)p_i+\Pi_b(j,k) p_i(1-q_j)+\Pi_c(j,k) p_iq_j+\frac{(K-k)(1-p_i)}{K}$ for $(\Pi_a,\Pi_b,\Pi_c,\Pi_d)$ defined in \eqref{eqn:defn_pi}.

For the perturbation matrix $\bsE$ in \eqref{eqn:perturbation}, we have
\beq
\E[E_{i,j}]=0,\quad\text{and}\quad |E_{i,j}|\leq 2, \quad1\leq i\leq n,\;\;1\leq j\leq m,
\eeq
and also
\beq
\begin{split}
\var(E_{ij})&=\var(B^{(k)}_{ij})=\E[(B_{ij}^{(k)})^2]-(\E[B_{ij}^{(k)}])^2\\
&=s-(s(\rho_{ij}^{(k)}-1/2))^2\leq s.
\end{split}	
\eeq
Note that $\{E_{ij}\}$ are independent across all $i,j$.
Define
\beq
\nu:=\max\left\{\max_i \sum_j\E[E_{i,j}^2],\;\;{\max_j \sum_i\E[E_{i,j}^2]}\right\}\leq \max\{m,n\}s.
\eeq
By applying the spectral norm bound to random matrices with independent entires, appeared in \cite{bandeira2016sharp} and summarized in Theorem \ref{lem:bd_spnorm}, we can bound the spectral norm of $\bsE$ as below.
\begin{lem}[Spectral norm bound of $\bsE$]\label{lem:spnorm}
With probability $1-(n+m)^{-8}$, we have
\beq
\|\bsE\|\leq 4\sqrt{s\max{(m,n)}}+\tilde{c}\sqrt{\log (n+m)}
\eeq
for some constant $\tilde{c}>0$ when $m\geq n$. For some sufficiently large $m$, assuming $n=o(m)$ and $s=\Omega(\log(n+m)/m)$, the spectral norm of $\bsE$ can be further bounded by
\beq\label{eqn:bd_spnorm_E}
\|\bsE\|\leq 5\sqrt{sm}.
\eeq 
\end{lem}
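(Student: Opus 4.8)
\textbf{Proof proposal for Lemma~\ref{lem:spnorm}.}
The plan is to apply the sharp spectral-norm bound for rectangular random matrices with independent, mean-zero, bounded entries, and then specialize the constants under the stated asymptotic regime. First I would record the three quantities that govern the bound: the entrywise almost-sure bound $\|E_{ij}\|_\infty\le 2$; the row/column variance proxy $\nu=\max\{\max_i\sum_j\E[E_{ij}^2],\ \max_j\sum_i\E[E_{ij}^2]\}$, which the preceding display bounds by $\max\{m,n\}\,s$; and the $\log$ factor coming from the dimension. With $m\ge n$ we have $\nu\le ms$. Theorem~\ref{lem:bd_spnorm} (the \cite{bandeira2016sharp} bound, quoted later in the excerpt) then yields, for a suitable absolute constant, a bound of the form $\|\bsE\|\le (1+o(1))\big(2\sqrt{\nu}+C\sqrt{\log(n+m)}\big)$ holding with probability at least $1-(n+m)^{-8}$; absorbing the $o(1)$ and the sub-Gaussian constant from the entrywise bound into explicit numerical constants gives $\|\bsE\|\le 4\sqrt{s\max(m,n)}+\tilde c\sqrt{\log(n+m)}$ for some $\tilde c>0$. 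This is the first displayed inequality.

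For the second inequality I would simply feed in the hypotheses $n=o(m)$ and $s=\Omega(\log(n+m)/m)$. The first term is $4\sqrt{sm}$ since $\max(m,n)=m$ for $m$ large. For the second term, $s=\Omega(\log(n+m)/m)$ means $\log(n+m)=O(sm)$, hence $\tilde c\sqrt{\log(n+m)}=O(\sqrt{sm})$, and in fact for $m$ large enough this contributes at most $\sqrt{sm}$. Adding the two pieces gives $\|\bsE\|\le 5\sqrt{sm}$, as claimed. The probability statement $1-(n+m)^{-8}$ is inherited verbatim from Theorem~\ref{lem:bd_spnorm}.

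The only genuine subtlety — and the step I would be most careful about — is verifying that the hypotheses of Theorem~\ref{lem:bd_spnorm} are met with the right parameters: the entries $E_{ij}=B_{ij}^{(k)}-\E[B_{ij}^{(k)}]$ must be independent (they are, since each $B_{ij}^{(k)}$ depends only on the independent assignment/response for pair $(i,j)$), mean-zero (immediate), uniformly bounded by $2$, and the variance proxy must be correctly computed; here one uses $\var(B_{ij}^{(k)})=s-\big(s(\rho_{ij}^{(k)}-1/2)\big)^2\le s$, so summing over a row or column of length at most $\max\{m,n\}$ gives $\nu\le \max\{m,n\}s$. The rest is bookkeeping of absolute constants, which is why the lemma states the clean bound $5\sqrt{sm}$ rather than tracking $\tilde c$ explicitly. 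No macro beyond those already defined in the excerpt (\bsE, \bsp, \bsr, \var, etc.) is needed.
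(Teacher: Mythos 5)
Your proposal is correct and follows essentially the same route as the paper: both invoke the Bandeira--van Handel bound (Theorem~\ref{lem:bd_spnorm}, via Corollary~\ref{cor:spect_norm}) with the variance proxy $\nu\le \max\{m,n\}s$, entrywise bound $B=2$, and $t=\Theta(\sqrt{\log(n+m)})$ to get the first display, then absorb the logarithmic term into $\sqrt{sm}$ using $n=o(m)$ and $s=\Omega(\log(n+m)/m)$. The minor slip of writing $2\sqrt{\nu}$ instead of $4\sqrt{\nu}$ is immaterial since you absorb constants anyway.
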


Using the bounded spectral norm of $\bsE$ in \eqref{eqn:bd_spnorm_E} and applying the Wedin Sin$\Theta$ theorem, summarized in Theorem \ref{thm:WedinSin}, we can bound the angle between $\bsu^{(k)}$ and $\bsp^*$.

\begin{lem}\label{lem:sin_uk_p} For some sufficiently large $m$, assuming $n=o(m)$ and $s=\Omega(\log(n+m)/m)$, we have
\beq\label{eqn:sin_uk_p}
\sin\theta(\bsu^{(k)},\bsp^*)\leq \Theta(1/\sqrt{sn})
\eeq
with probability at least $1-(n+m)^{-8}$.
\end{lem}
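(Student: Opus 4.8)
The plan is to invoke the Wedin $\sin\Theta$ theorem (Theorem~\ref{thm:WedinSin}) for the matrix $\bsX^{(k)}$ and its expectation $\E[\bsX^{(k)}]=s\bsp(\bsr^{(k)})^\top$, treating $\bsE:=\bsX^{(k)}-\E[\bsX^{(k)}]$ as the perturbation, whose spectral norm is controlled by Lemma~\ref{lem:spnorm}. By Proposition~\ref{prop:binary}, $\E[\bsX^{(k)}]$ has rank one, so its unique nonzero singular value equals $\sigma_1(\E[\bsX^{(k)}])=s\|\bsp\|_2\|\bsr^{(k)}\|_2$, its leading left singular vector is exactly $\bsp^*=\bsp/\|\bsp\|_2$, and $\sigma_2(\E[\bsX^{(k)}])=0$; hence the relevant spectral gap is $\sigma_1(\E[\bsX^{(k)}])$ itself. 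Using the standing assumptions $\eta\sqrt n\le\|\bsp\|_2\le\sqrt n$ and $\|\bsr^{(k)}\|_2=\Theta(\sqrt m)$, this gap is $\Theta(s\sqrt{nm})$.

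First I would condition on the high-probability event of Lemma~\ref{lem:spnorm}, which holds with probability at least $1-(n+m)^{-8}$ and gives $\|\bsE\|\le 5\sqrt{sm}$ under the hypotheses $n=o(m)$ and $s=\Omega(\log(n+m)/m)$. On this event the perturbation is small relative to the gap, since $\|\bsE\|/\sigma_1(\E[\bsX^{(k)}])=O(1/\sqrt{sn})$, so (after a Weyl-type check that the top singular value of $\bsX^{(k)}$ is itself separated from its remaining singular values) Wedin's theorem applies and yields
\[
\sin\theta(\bsu^{(k)},\bsp^*)\;\le\;\frac{2\|\bsE\|}{\sigma_1(\E[\bsX^{(k)}])}\;\le\;\frac{10\sqrt{sm}}{\Theta(s\sqrt{nm})}\;=\;\Theta\!\left(\frac{1}{\sqrt{sn}}\right),
\]
which is precisely~\eqref{eqn:sin_uk_p}. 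Because $\sin\theta$ is invariant under the sign flip used to fix the orientation of $\bsu^{(k)}$, the sign-fixing step of Algorithm~\ref{alg:top_two} is immaterial here.

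The only points requiring any care are: (i) the estimate $\|\bsr^{(k)}\|_2=\Theta(\sqrt m)$, which I would read off from the explicit piecewise formula for $\bsr^{(k)}$ in Proposition~\ref{prop:binary} together with the footnote assumption in Section~\ref{sec:model}; and (ii) verifying the small-perturbation hypothesis of the exact form of Wedin's theorem being used, namely that $\|\bsE\|$ lies below the constant fraction of the gap that the theorem requires. I expect this second point to be the main (mild) obstacle: it is essentially bookkeeping, requiring one either to track absolute constants so that $5\sqrt{sm}$ is genuinely small compared with $s\|\bsp\|_2\|\bsr^{(k)}\|_2$ in the stated regime of $s$, or to dispose of the degenerate range $sn=O(1)$ — where the claimed bound is already $\Omega(1)$ — by the trivial estimate $\sin\theta\le 1$.
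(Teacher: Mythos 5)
Your proposal is correct and follows essentially the same route as the paper: apply the Wedin $\sin\Theta$ theorem to the rank-one matrix $\E[\bsX^{(k)}]=s\bsp(\bsr^{(k)})^\top$ with spectral gap $s\|\bsp\|_2\|\bsr^{(k)}\|_2=\Theta(s\sqrt{nm})$, and control $\|\bsE\|\le 5\sqrt{sm}$ via Lemma~\ref{lem:spnorm} on its $1-(n+m)^{-8}$ event. The paper uses the bound $\sqrt{2}\|\bsE\|/(\sigma_1-\|\bsE\|)$ rather than your $2\|\bsE\|/\sigma_1$, which absorbs the small-perturbation bookkeeping you flag (and the degenerate regime $sn=O(1)$ is indeed trivial), so the two arguments coincide up to constants.
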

\begin{proof}
By applying the Wedin Sin$\Theta$ Theorem (Theorem \ref{thm:WedinSin}), we have
\beq 
\sin\theta(\bsu^{(k)},\bsp^*)\leq\frac{\sqrt{2}\|\bsE\|}{s\|\bsp\|_2\cdot\|\bsr^{(k)}\|_2-\|\bsE\|}.
\eeq
We have $\|\bsp\|_2=\Theta(\sqrt{n})$ and $\|\bsr^{(k)}\|_2=\Theta(\sqrt{m})$ by assumptions on model parameters.  By Lemma~\ref{lem:spnorm}, for some sufficiently large $m$, assuming $n=o(m)$ and $s=\Omega(\log(n+m)/m)$, we have $\|\bsE\|\leq 5\sqrt{sm}$ with probability at least $1-(n+m)^{-8}$. Combining these bounds, we get
\beq
\sin\theta(\bsu^{(k)},\bsp^*)\leq\frac{\Theta(\sqrt{sm})}{\Theta(s\sqrt{mn})-\Theta(\sqrt{sm})}=\frac{1}{\Theta\left(\sqrt{{sn}}\right)}.
\eeq

\end{proof}

We trim the abnormally large components of $\bsu^{(k)}$ by letting it zero if $u_i^{(k)}>{2}/({\eta\sqrt{n}})$ and denote the resulting vector as  $\tilde{\bsu}^{(k)}$.
This process is required to control the maximum entry size of $\tilde{\bsu}^{(k)}$, which is used later in the proof.
For the next lemma, we show that after the trimming process, the norm of $\tilde{\bsu}^{(k)}$ is still close to 1 and the angle between $\tilde{\bsu}^{(k)}$ and $\bsp^*$ has the same order as that of $\sin\theta({\bsu}^{(k)},\bsp^*)$.
\begin{lem}\label{lem:sin_uktilde_p} Given $\|\bsp^*\|_2\geq \eta\sqrt{n}$, we have
\begin{align}
\|\tilde{\bsu}^{(k)}\|_2&\geq\sqrt{1-50\sin^2\theta(\bsu^{(k)},\bsp^*)},\label{lem:utilde_norm}\\
\sin\theta({\tilde{\bsu}}^{(k)},\bsp^*)&\leq 6\sqrt{2} \sin\theta({\bsu}^{(k)},\bsp^*).\label{lem:utilde_sin}
\end{align}
\end{lem}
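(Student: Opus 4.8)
The plan is to reduce both claims \eqref{lem:utilde_norm} and \eqref{lem:utilde_sin} to elementary estimates driven by two facts. First, since $\bsp\in[0,1]^n$ and, by the standing assumption, $\|\bsp\|_2\ge\eta\sqrt n$, every entry of $\bsp^*=\bsp/\|\bsp\|_2$ lies in $[0,\,1/(\eta\sqrt n)]$, i.e.\ in half the trimming window $2/(\eta\sqrt n)$; so a coordinate is trimmed only if it is far from the corresponding coordinate of $\bsp^*$. Second, after the sign resolution in step~2 of Algorithm~\ref{alg:top_two}, $\bsu^{(k)}$ is close to $+\bsp^*$ (not $-\bsp^*$), hence $\|\bsu^{(k)}-\bsp^*\|_2^2=2-2\langle\bsu^{(k)},\bsp^*\rangle\le 2\sin^2\theta(\bsu^{(k)},\bsp^*)$. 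I will write $\beta:=\sin\theta(\bsu^{(k)},\bsp^*)$ and let $T:=\{i:u^{(k)}_i>2/(\eta\sqrt n)\}$ be the set of trimmed coordinates, so $\tilde\bsu^{(k)}$ agrees with $\bsu^{(k)}$ off $T$ and vanishes on $T$; I defer the justification of the second fact to the last paragraph, as it is the only non-mechanical point.

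For \eqref{lem:utilde_norm}, I would note that for every $i\in T$ we have $u^{(k)}_i-p^*_i\ge 2/(\eta\sqrt n)-1/(\eta\sqrt n)=1/(\eta\sqrt n)\ge p^*_i\ge 0$, hence $u^{(k)}_i\le 2(u^{(k)}_i-p^*_i)$. Squaring, summing over $T$, and invoking the two facts above,
\[
\sum_{i\in T}(u^{(k)}_i)^2\;\le\;4\sum_{i\in T}(u^{(k)}_i-p^*_i)^2\;\le\;4\|\bsu^{(k)}-\bsp^*\|_2^2\;\le\;8\beta^2 .
\]
Since $\|\bsu^{(k)}\|_2=1$, this gives $\|\tilde\bsu^{(k)}\|_2^2=1-\sum_{i\in T}(u^{(k)}_i)^2\ge 1-8\beta^2\ge 1-50\beta^2$, which is \eqref{lem:utilde_norm} with room to spare (the constant $50$ absorbs any coarser version of this bookkeeping).

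For \eqref{lem:utilde_sin}, the same computation shows $\|\tilde\bsu^{(k)}-\bsu^{(k)}\|_2^2=\sum_{i\in T}(u^{(k)}_i)^2\le 8\beta^2$, so by the triangle inequality $\|\tilde\bsu^{(k)}-\bsp^*\|_2\le\|\tilde\bsu^{(k)}-\bsu^{(k)}\|_2+\|\bsu^{(k)}-\bsp^*\|_2\le(2\sqrt2+\sqrt2)\,\beta=3\sqrt2\,\beta$. Writing $\sin\theta(\tilde\bsu^{(k)},\bsp^*)=\bigl\|(\mathbf I-\bsp^*(\bsp^*)^\top)\tilde\bsu^{(k)}\bigr\|_2/\|\tilde\bsu^{(k)}\|_2$ and using $(\mathbf I-\bsp^*(\bsp^*)^\top)\bsp^*=\mathbf 0$, the numerator is at most $\|\tilde\bsu^{(k)}-\bsp^*\|_2\le 3\sqrt2\,\beta$, while the denominator is at least $\sqrt{1-8\beta^2}$ by the previous step; thus $\sin\theta(\tilde\bsu^{(k)},\bsp^*)\le 3\sqrt2\,\beta/\sqrt{1-8\beta^2}$. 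If $\beta\le 1/4$ the right side is at most $3\sqrt2\,\beta/\sqrt{1/2}=6\beta\le6\sqrt2\,\beta$; if $\beta>1/4$ the asserted bound $6\sqrt2\,\beta>1\ge\sin\theta(\tilde\bsu^{(k)},\bsp^*)$ holds trivially. (In the regime that matters $\beta=O(1/\sqrt{sn})$ is small by Lemma~\ref{lem:sin_uk_p}, so the first case applies.)

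The one step that is not routine is the sign-resolution fact. I would argue that the standing assumption $\|\bsp\|_2\ge\eta\sqrt n$ forces all but $o(n)$ entries of $\bsp^*$ to be strictly positive (there are at most $o(n)$ spammers), and that $\beta$ is small by Lemma~\ref{lem:sin_uk_p}; were $\bsu^{(k)}$ close to $-\bsp^*$, it would have a strict majority of negative entries, contradicting the construction in step~2 of Algorithm~\ref{alg:top_two} that selects the representative with at least half its entries positive. Hence $\langle\bsu^{(k)},\bsp^*\rangle\ge0$, and therefore $2-2\langle\bsu^{(k)},\bsp^*\rangle\le 2\bigl(1-\langle\bsu^{(k)},\bsp^*\rangle^2\bigr)=2\sin^2\theta(\bsu^{(k)},\bsp^*)$, the inequality used throughout. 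Everything else is Cauchy--Schwarz and the triangle inequality.
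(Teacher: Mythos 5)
Your proof is correct and follows essentially the same route as the paper's: since $p_i^*\le 1/(\eta\sqrt n)$, every trimmed coordinate deviates from $p_i^*$ by at least $1/(\eta\sqrt n)$, so the trimmed mass is controlled by $\|\bsu^{(k)}-\bsp^*\|_2^2\le 2\sin^2\theta(\bsu^{(k)},\bsp^*)$ and the angle bound follows by the triangle inequality, exactly as in the paper; your pointwise inequality $u_i^{(k)}\le 2\bigl(u_i^{(k)}-p_i^*\bigr)$ even yields sharper constants ($8$ in place of the paper's $50$, and $3\sqrt2$ in place of $6$ before normalization). The one caveat is that both you and the paper rely on $\|\bsu^{(k)}-\bsp^*\|_2^2\le 2\sin^2\theta$, which requires $\langle\bsu^{(k)},\bsp^*\rangle\ge 0$; the paper takes this for granted from the sign convention of Algorithm~\ref{alg:top_two}, and your majority-of-signs justification, while in the right spirit, is not fully airtight as stated, because coordinates where $p_i^*$ is near zero can carry small positive entries even when $\bsu^{(k)}$ is $\ell_2$-close to $-\bsp^*$.
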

The proof of Lemma \ref{lem:sin_uktilde_p} is provided in Section \ref{app:sec:lem:sin_uktilde}.

Finally, we provide our main lemma giving the entrywise bound on the difference between $\bsv^{(k)}=\frac{1}{s}(\bsY^{(k)})^{\top}\tilde{\bsu}^{(k)}$ and $\|\bsp\|_2\bsr^{(k)}$.
\begin{lem}[Entrywise Bound]\label{lem:entrywisebd_new}
For any $\delta_1, \epsilon>0$, and any task $j\in[m]$ and label index $k\in[K]$, if the sampling probability 
$
s\geq \Theta\left(\frac{1}{\delta_1^2\|\bsp\|_2^2}\log\frac{1}{\epsilon}\right),$
then we can guarantee 
\beq\label{eqn:entry_bd1}
\P\left(\left| \frac{1}{s}\left\langle \bsY^{(k)}_{*j}, \tilde{\bsu}^{(k)}\right\rangle -\|\bsp\|_2 r_{j}^{(k)} \right|<\delta_1\|\bsp\|_2\right)>1-\epsilon
\eeq
as $m\to\infty$ when $n=O(m/\log m)$.
 %\alert{Add condition for $s_1,s_2=\frac{C\log m}{n}$ for a sufficiently large $C$ where $s_1$ is the sampling probability to estimate $\bsp^*$ and $s_2$ is the sampling probability to estimate $\|\bsp\|_2r_j^{(k)}$.}
\end{lem}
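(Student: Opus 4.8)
The plan is to split $\tfrac{1}{s}\langle\bsY^{(k)}_{*j},\tilde{\bsu}^{(k)}\rangle$ into a conditional-mean term and a mean-zero fluctuation term, conditioning on $\tilde{\bsu}^{(k)}$. Since $\bsX^{(k)}$ and $\bsY^{(k)}$ are built from the two independently sub-sampled copies $\bsB,\bsC$ of $\bsA^1$, the trimmed leading singular vector $\tilde{\bsu}^{(k)}$ is independent of $\bsY^{(k)}$, and Proposition~\ref{prop:binary} gives $\E[\bsY^{(k)}_{*j}]=s\,r_j^{(k)}\bsp=s\,r_j^{(k)}\|\bsp\|_2\,\bsp^*$, so $\E\bigl[\tfrac{1}{s}\langle\bsY^{(k)}_{*j},\tilde{\bsu}^{(k)}\rangle\mid\tilde{\bsu}^{(k)}\bigr]=r_j^{(k)}\|\bsp\|_2\langle\bsp^*,\tilde{\bsu}^{(k)}\rangle$. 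Thus
\[
\frac{1}{s}\bigl\langle\bsY^{(k)}_{*j},\tilde{\bsu}^{(k)}\bigr\rangle-\|\bsp\|_2\,r_j^{(k)}
=\underbrace{r_j^{(k)}\|\bsp\|_2\bigl(\langle\bsp^*,\tilde{\bsu}^{(k)}\rangle-1\bigr)}_{=:B}
+\underbrace{\frac{1}{s}\bigl\langle\bsY^{(k)}_{*j}-\E[\bsY^{(k)}_{*j}],\,\tilde{\bsu}^{(k)}\bigr\rangle}_{=:F}.
\]
I would bound $|B|\le\tfrac{\delta_1}{2}\|\bsp\|_2$ deterministically on the high-probability event of Lemma~\ref{lem:sin_uk_p}, bound $|F|\le\tfrac{\delta_1}{2}\|\bsp\|_2$ with probability at least $1-\epsilon$ via Bernstein's inequality, and combine with a union bound.

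For the bias term, $|r_j^{(k)}|\le 1$ from the definition of $r_j^{(k)}$ in Proposition~\ref{prop:binary}, and the sign convention on $\bsu^{(k)}$ (chosen to point toward $\bsp^*$, whose entries are nonnegative) gives $\langle\bsp^*,\tilde{\bsu}^{(k)}\rangle=\|\tilde{\bsu}^{(k)}\|_2\cos\theta(\tilde{\bsu}^{(k)},\bsp^*)\in[0,1]$, so $|B|\le\|\bsp\|_2\bigl(1-\|\tilde{\bsu}^{(k)}\|_2\cos\theta(\tilde{\bsu}^{(k)},\bsp^*)\bigr)$. Feeding $\sin\theta(\bsu^{(k)},\bsp^*)=\Theta(1/\sqrt{sn})$ from Lemma~\ref{lem:sin_uk_p} into the two inequalities of Lemma~\ref{lem:sin_uktilde_p}, together with $\cos\theta\ge 1-\sin^2\theta$, yields $1-\|\tilde{\bsu}^{(k)}\|_2\cos\theta(\tilde{\bsu}^{(k)},\bsp^*)=O(1/(sn))$, hence $|B|=O(\|\bsp\|_2/(sn))=O(1/(s\|\bsp\|_2))$ using $\|\bsp\|_2\le\sqrt n$. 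Under the hypothesis $s\ge\Theta\bigl(\tfrac{1}{\delta_1^2\|\bsp\|_2^2}\log\tfrac{1}{\epsilon}\bigr)$ this is $o(\delta_1\|\bsp\|_2)$ (and in any case $|B|\le\|\bsp\|_2$), so $|B|\le\tfrac{\delta_1}{2}\|\bsp\|_2$ on the event of Lemma~\ref{lem:sin_uk_p}, of probability at least $1-(n+m)^{-8}$.

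For the fluctuation, condition on $\tilde{\bsu}^{(k)}$ and write $F=\sum_{i=1}^n Z_i$ with $Z_i=\tfrac{1}{s}\bigl(C^{(k)}_{ij}-\E C^{(k)}_{ij}\bigr)\tilde{u}_i^{(k)}$, independent with mean zero (the additive constant in the definition of $\bsY^{(k)}$ cancels in the centering). We have $|C^{(k)}_{ij}-\E C^{(k)}_{ij}|\le 2$ and --- this is where the trimming step is used --- $|\tilde{u}_i^{(k)}|\le\tfrac{2}{\eta\sqrt n}$, so $|Z_i|\le\tfrac{4}{\eta s\sqrt n}=:M$; and from $\var(C^{(k)}_{ij})\le s$ (established earlier) with $\|\tilde{\bsu}^{(k)}\|_2\le 1$ we get $\sum_i\var(Z_i)\le\tfrac{1}{s}=:V$. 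Applying Bernstein's inequality with $t=\tfrac{\delta_1}{2}\|\bsp\|_2\le\tfrac{\delta_1}{2}\sqrt n$, the subexponential contribution $Mt$ is at most an $\eta$-dependent constant times $V$ (here one uses $\|\bsp\|_2\le\sqrt n$ again), giving $\P\bigl(|F|>t\mid\tilde{\bsu}^{(k)}\bigr)\le 2\exp\bigl(-\Theta(st^2)\bigr)=2\exp\bigl(-\Theta(s\,\delta_1^2\|\bsp\|_2^2)\bigr)$; this is $\le\epsilon/2$ once $s\ge\Theta\bigl(\tfrac{1}{\delta_1^2\|\bsp\|_2^2}\log\tfrac{1}{\epsilon}\bigr)$, and being uniform in $\tilde{\bsu}^{(k)}$ it holds unconditionally.

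Combining the two pieces gives $\P\bigl(|v_j^{(k)}-\|\bsp\|_2 r_j^{(k)}|<\delta_1\|\bsp\|_2\bigr)\ge 1-(n+m)^{-8}-\epsilon/2$, which exceeds $1-\epsilon$ for $m$ large since $n=O(m/\log m)$ forces $(n+m)^{-8}\to 0$; a further union bound over $j\in[m]$ and $k\in[K]$ (at the cost of a $\log(mK)$ factor in $s$) then supplies exactly what Theorem~\ref{thm:main} and Corollary~\ref{cor:exact_alg1} need. Mechanically the argument is just a bias/variance decomposition followed by Bernstein, so I do not expect a serious obstacle; the one point worth stressing is that the trimming is essential rather than cosmetic --- without it a coordinate of $\bsu^{(k)}$ of size $\Theta(1)$ would push $M$ up to $\Theta(1/s)$ and the subexponential term would dominate, weakening the sample requirement from $s\gtrsim\delta_1^{-2}\|\bsp\|_2^{-2}\log\tfrac{1}{\epsilon}$ to $s\gtrsim\delta_1^{-1}\|\bsp\|_2^{-1}\log\tfrac{1}{\epsilon}$, larger by a factor $\delta_1\|\bsp\|_2$ and no longer minimax-optimal.
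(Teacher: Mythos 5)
Your proposal is correct and follows essentially the same route as the paper: the same bias/fluctuation decomposition about the conditional mean $\|\tilde{\bsu}^{(k)}\|_2\|\bsp\|_2 r_j^{(k)}\cos\theta$, with the bias controlled via Lemmas~\ref{lem:sin_uk_p} and~\ref{lem:sin_uktilde_p} (giving the $O(1/(sn))$ term absorbed by the main sampling condition since $\|\bsp\|_2^2=\Theta(n)$) and the fluctuation controlled by Bernstein's inequality, using the independence of $\tilde{\bsu}^{(k)}$ from $\bsY^{(k)}$ and the trimming bound $\tilde{u}_i^{(k)}\le 2/(\eta\sqrt{n})$ exactly as in the paper's argument. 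Your closing remark on why trimming is essential is consistent with the paper's use of it and adds correct intuition, but no new step is needed.
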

\begin{proof}
For notional simplicity, denote $\theta(\tilde{\bsu}^{(k)},\bsp^*)$ by $\theta$.
To prove~\eqref{eqn:entry_bd1}, we show bounds on two probabilities,
\begin{align}
 \P\left(\left|\frac{1}{s}\left\langle \bsY^{(k)}_{*j}, \tilde{\bsu}^{(k)}\right\rangle-\|\tilde{\bsu}^{(k)}\|_2\|\bsp\|_2 r_{j}^{(k)}\cos\theta \right|>\frac{\delta_1\|\bsp\|_2}{2}\right)<\epsilon/2,\label{eqn:mainlem_part1}\\
  \P\left(\left|\|\tilde{\bsu}^{(k)}\|_2 \|\bsp\|_2 r_{j}^{(k)}\cos\theta -\|\bsp\|_2 r_{j}^{(k)}\right|>\frac{\delta_1\|\bsp\|_2}{2}\right)<\epsilon/2.\label{eqn:mainlem_part2}
\end{align}
Then, the triangle inequality implies \eqref{eqn:entry_bd1}. 
%\begin{align}
%\P\left(\left|\frac{1}{s}\left\langle \bsY^{(k)}_{*j}, \tilde{\bsu}^{(k)}\right\rangle-\|\tilde{\bsu}^{(k)}\|_2\|\bsp\|_2 r_{j}^{(k)}\cos\theta \right|>\frac{\delta\|\bsp\|_2}{2}\right)&<\frac{\epsilon}{2},\label{eqn:entry_1}\\
%\P\left(\left|\|\tilde{\bsu}^{(k)}\|_2 \|\bsp\|_2 r_{j}^{(k)}\cos\theta -\|\bsp\|_2 r_{j}^{(k)}\right|>\frac{\delta\|\bsp\|_2}{2}\right)&<\frac{\epsilon}{2}\label{eqn:entry_2}.
%\end{align}

We first prove \eqref{eqn:mainlem_part1}.
Remind that we do the random splitting of the input matrix $\bsA$ and define the two independent binary-converted matrices as $\bsX^{(k)}$ and $\bsY^{(k)}$, for $1\leq k<K$, which are used to estimate $\tilde{\bsu}^{(k)}$ and $\bsv^{(k)}$, respectively. Thus, $\tilde{\bsu}^{(k)}$ is independent from $\bsY^{(k)}$ and this independence is used when we bound the first and second moments of $v_j^{(k)}=\frac{1}{s}\langle\bsY^{(k)}_{*j}, \tilde{\bsu}^{(k)}\rangle$.
For any $1\leq j\leq m$, the first and second moments of $v_j^{(k)}=\frac{1}{s}\langle\bsY^{(k)}_{*j}, \tilde{\bsu}^{(k)}\rangle$ satisfy
\beq
\E\left[\frac{1}{s}\left\langle\bsY^{(k)}_{*j}, \tilde{\bsu}^{(k)}\right\rangle\right]=\langle \bsp,\tilde{\bsu}^{(k)}\rangle r_j^{(k)}=\|\bsp\|_2 \| \tilde{\bsu}^{(k)}\|_2(\cos\theta)r_j^{(k)}=\Theta(\sqrt{n})
\eeq
if  $r_j^{(k)}\neq 0$ by Lemma~\ref{lem:sin_uk_p} and~\ref{lem:sin_uktilde_p}, and
\beq
\var\left(\frac{1}{s}\left\langle\bsY^{(k)}_{*j}, \tilde{\bsu}^{(k)}\right\rangle\right)\leq\frac{1}{s^2}\sum_{i=1}^n (\tilde{u}_i^{(k)})^2\E[(Y_{ij}^{(k)})^2]=\Theta\left(\frac{1}{s}\right)
\eeq
since $\E[(Y_{ij}^{(k)})^2]=\Theta(s)$ and $\sum_{i=1}^n (\tilde{u}_i^{(k)})^2=\Theta(1)$ by Lemma~\ref{lem:sin_uk_p} and~\ref{lem:sin_uktilde_p}.
Furthermore, we have $\max_{1\leq i\leq m}| Y^{(k)}_{ij}\tilde{u}_i^{(k)} |\leq \Theta\left(\frac{1}{\sqrt{n}}\right) $ since  $\tilde{u}_i^{(k)}\leq \frac{2}{\eta\sqrt{n}}$. By applying the Bernstein's inequality, we can show that
\beq
\begin{split}
\P\left(\left|\frac{1}{s}\left\langle \bsY^{(k)}_{*j}, \tilde{\bsu}^{(k)}\right\rangle-\|\tilde{\bsu}^{(k)}\|_2\|\bsp\|_2 r_{j}^{(k)}\cos\theta \right|>\frac{\delta_1\|\bsp\|_2}{2}\right)&\leq 
2\exp\left(-\frac{\Theta(\delta_1^2\|\bsp\|_2^2)}{\Theta\left(\frac{1}{s}\right)+ \Theta\left(\delta_1\|\bsp\|_2/\sqrt{n}\right)}\right)\\
&\leq \exp\left(-\Theta(s\delta_1^2\|\bsp\|_2^2)\right)
\end{split}
\eeq
where the second inequality is due to the assumption $\|\bsp\|_2=\Theta(\sqrt{n})$. 
To make this probability less than $\frac{\epsilon}{2}$, it is sufficient to have
$
s\geq \Omega\left(\frac{1}{\delta_1^2\|\bsp\|_2^2}\log\frac{1}{\epsilon}\right).
$

We next prove \eqref{eqn:mainlem_part2} by bounding $\left|\|\tilde{\bsu}^{(k)}\|_2 \|\bsp\|_2 r_{j}^{(k)}\cos\theta -\|\bsp\|_2 r_{j}^{(k)}\right|$. By the triangle inequality, we have
\beq
\begin{split}
\left|\|\tilde{\bsu}^{(k)}\|_2 \|\bsp\|_2 r_{j}^{(k)}\cos\theta -\|\bsp\|_2 r_{j}^{(k)}\right|&\leq \left|\|\tilde{\bsu}^{(k)}\|_2 \|\bsp\|_2 r_{j}^{(k)}\cos\theta -\|\bsp\|_2 r_{j}^{(k)}\cos\theta\right|\\
&\quad+\left|\|\bsp\|_2 r_{j}^{(k)}\cos\theta -\|\bsp\|_2 r_{j}^{(k)}\right|.
\end{split}
\eeq
Note that 
\beq
\begin{split}
\frac{1}{\|\bsp\|_2 }\cdot\left|\|\tilde{\bsu}^{(k)}\|_2 \|\bsp\|_2 r_{j}^{(k)}\cos\theta -\|\bsp\|_2 r_{j}^{(k)}\cos\theta\right|&= r_j^{(k)}\cos\theta \left|\|\tilde{\bsu}^{(k)}\|_2 -1\right|\\
&\leq \Theta(\sin^2\theta(\bsu^{(k)},\bsp^*))=\frac{1}{\Theta\left({ns}\right)},
\end{split}
\eeq
with probability $1-(n+m)^{-8}$ by Lemma~\ref{lem:sin_uk_p} and~\ref{lem:sin_uktilde_p}, and also note that
\beq
\begin{split}
\frac{1}{\|\bsp\|_2 }\cdot\left|\|\bsp\|_2 r_{j}^{(k)}\cos\theta -\|\bsp\|_2 r_{j}^{(k)}\right|&=r_j^{(k)}(1-\cos\theta)\\
&\leq \Theta(\sin^2\theta(\bsu^{(k)},\bsp^*))=\frac{1}{\Theta\left(ns\right)},
\end{split}
\eeq 
with probability $1-(n+m)^{-8}$ by Lemma \ref{lem:sin_uk_p} and~\ref{lem:sin_uktilde_p}.
To make these errors of order $1/{\Theta\left(ns\right)}$ less than $\frac{\delta_1}{2}$, it is sufficient to have
$
s\geq\Omega\left( \frac{1}{\delta_1n}\right).
$

By combining the above results, it can be guaranteed that
$
\left| \frac{1}{2s}\left\langle \bsY^{(k)}_{*j}, \tilde{\bsu}^{(k)}\right\rangle -\|\bsp\|_2 r_{j}^{(k)} \right|<\delta\|\bsp\|_2
$ with probability at least $1-\epsilon$, 
if the sampling probability
\beq\label{eqn:main_lem_cond_s}
s\geq \max\left\{\Omega\left( \frac{1}{\delta_1^2\|\bsp\|_2^2}\log\frac{1}{\epsilon}\right),\Omega\left(\frac{1}{\delta_1 n}\right) \right\}=\Omega\left( \frac{1}{\delta_1^2\|\bsp\|_2^2}\log\frac{1}{\epsilon}\right)
\eeq
where the last equality is due to $\|\bsp\|_2=\Theta(\sqrt{n})$. The condition $s=\Omega(\log(n+m)/m)$ in Lemma \ref{lem:sin_uk_p} is immediately satisfied by \eqref{eqn:main_lem_cond_s} when $n=O(m/\log m)$.
\end{proof}

\paragraph{Proof of Theorem \ref{thm:main}.}
By using Lemma \ref{lem:entrywisebd_new}, we next prove Theorem \ref{thm:main}.
By applying the union bound over $k\in[K]$, if $s\geq \Theta\left(\frac{1}{\delta_1^2\|\bsp\|_2^2}\log\frac{K}{\epsilon}\right)$ then we have 
\beq\label{eqn:boundvj}
\|\bsp\|_2(r_j^{(k)}-\delta_1) \leq v_j^{(k)}=\frac{1}{s}\left\langle \bsY^{(k)}_{*j}, \tilde{\bsu}^{(k)}\right\rangle \leq \|\bsp\|_2(r_j^{(k)}+\delta_1), \;\;\forall k\in[K]
\eeq
for any $\delta_1>0$ and $j\in[m]$ with probability at least $1-\epsilon$.
Under the condition~\eqref{eqn:boundvj}, for any $q_j\in(1/2,1)$ and $\delta< \min\left\{\frac{2q_j-1}{2},\frac{1-q_j}{2}\right\}$, we can guarantee that
\beq
\frac{1}{K}-q_j+\delta<\frac{1}{K}-(1-q_j)-\delta\;\; \text{ and }\;\; \frac{1}{K}-(1-q_j)+\delta<\frac{1}{K}-\delta,
\eeq
which implies $(\hat{g}_j,\hat{h}_j)=(g_j,h_j)$ for $(\hat{g}_j,\hat{h}_j)$ defined in~\eqref{alg:gh}. This proves \eqref{eqn:rec_gh_main} of Theorem \ref{thm:main}.
%Thus, the recovery of top two plausible answers, \eqref{eqn:rec_gh_main}, is guaranteed if $s= \Omega\left(\frac{1}{\delta^2\|\bsp\|_2^2}\log\frac{1}{\epsilon}\right)$.

We next prove \eqref{eqn:rec_q_main}, the accuracy guarantee in estimating the task difficulty vector $\bsq$.
After estimating $\|\bsp\|_2\bsr^{(k)}$ by $\bsv^{(k)}=\frac{1}{s}(\bsY^{(k)})^{\top}\tilde{\bsu}^{(k)}$, we estimate $\|\bsp\|_2$ by calculating $l$ where ${l}_j:=\frac{K}{K-2}\sum_{k\neq \hat{g}_j, k\neq \hat{h}_j} \Delta v_j^{(k)}$ and ${l}:=\frac{1}{m}\sum_{j=1}^m l_j$.
Assume that
$
|\|\bsp\|_2-l|\leq \|\bsp\|_2\delta'.
$
We will specify the required order of $\delta'$ later. 
Remind that the estimate for $q_j$ is defined as 
$
\hat{q}_j:= \frac{1}{K}-\frac{\Delta v_{j}^{(\hat{g}_j)}}{l}.
$
Under the condition that $\hat{g}_j=g_j$ and $|v_j-\|\bsp\|_2r_j^{(k)}|\leq \|\bsp\|_2\delta_1$, both of which are satisfied under the conditions of Lemma \ref{lem:entrywisebd_new},
we have
\beq
\frac{\left(\frac{1}{K}-q_j-2\delta_1\right)}{1+\delta'}\leq \frac{\Delta v_j^{(\hat{g}_j)}}{l} \leq \frac{\left(\frac{1}{K}-q_j+2\delta_1\right)}{1-\delta'}.
\eeq 
By the Taylor expansion for $\frac{1}{1-x}=1+x+\Theta(x^2)$ as $x\to 0$, we have
\beq
|\hat{q}_j-q_j|\leq 2\delta_1 +\delta'\left(\frac{1}{K}-q_j+2\delta_1\right)+\Theta(\delta'^2)=\Theta(\delta_1+\delta').
\eeq
Thus, both the order of $\delta'$, which is the estimation error of $\|\bsp\|_2$, and that of $\delta$, which is the estimation error of $\|\bsp\|_2 r_j^{(k)}$, govern the estimation accuracy of $q_j$.
We next show that we can have $\delta'=\Theta(\delta_1)$. By Lemma \ref{lem:entrywisebd_new}, we have $|v_j-\|\bsp\|_2r_j^{(k)}|\leq \|\bsp\|_2\delta_1$, which implies
\beq
\|\bsp\|_2( \Delta r_j^{(k)}-2\delta_1)\leq \Delta v_j^{(k)}\leq \|\bsp\|_2( \Delta r_j^{(k)}+2\delta_1).
\eeq
Under the condition $(\hat{g}_j,\hat{h}_j)=(g_j,h_j)$, since $\Delta r_j^{(k)}=\frac{1}{K}$ for $k\neq \hat{g}_j,\hat{h}_j$, we have
\beq
 \|\bsp\|_2-\|\bsp\|_2\frac{2\delta_1 K}{K-2} \leq l_j=\frac{K}{K-2}\sum_{k\neq \hat{g}_j, k\neq \hat{h}_j} \Delta v_j^{(k)}\leq \|\bsp\|_2+\|\bsp\|_2\frac{2\delta_1 K}{K-2},
\eeq
and thus $\delta'=\frac{2\delta_1 K}{K-2}=\Theta(\delta_1)$. Thus, it is enough to have $s= \Omega\left(\frac{1}{\delta_1^2\|\bsp\|_2^2}\log\frac{K}{\epsilon}\right)$ to guarantee \eqref{eqn:rec_q_main}.

\paragraph{Proof of Corollary \ref{cor:exact_alg1}.}

By using Lemma \ref{lem:entrywisebd_new} and taking the union bound over all tasks $j\in[m]$ as well as $k\in[K]$, we can prove  Corollary \ref{cor:exact_alg1} in a similar way as that of Theorem \ref{thm:main}.

%\subsection{Proof of Corollary \ref{cor:exact_alg1}}

\subsection{Proof of Lemma \ref{lem:sin_uktilde_p}}\label{app:sec:lem:sin_uktilde}
We first prove \eqref{lem:utilde_norm},
\begin{equation*}
\|\tilde{\bsu}^{(k)}\|_2\geq\sqrt{1-50\sin^2\theta(\bsu^{(k)},\bsp^*)}.
\end{equation*}

Let $I$ be the set of indices $1\leq i\leq n$ such that $u_i^{(k)}\geq \frac{2}{\eta\sqrt{n}}$. Then, we have $u_i^{(k)}-p^*_i\geq \frac{1}{\eta\sqrt{n}}$ for all $i\in I$ since $p_i^*={p_i}/{\|\bsp\|_2}\leq \frac{1}{\eta\sqrt{n}}$ due to the assumption that $\|\bsp\|_2^2\geq \eta^2n$. Thus, we have
\beq
\frac{|I|}{\eta^2n} \leq \sum_{i\in I} (u_i^{(k)}-p^*_i)^2\leq \|\bsu^{(k)}-\bsp^*\|_2^2.
\eeq
By using the triangle inequality, we can show that
\beq
\begin{split}\label{eqn:bd_u_diff}
\sqrt{\sum_{i\in I} \left(u_i^{(k)}\right)^2}&\leq \sqrt{\sum_{i\in I} \left(u_i^{(k)}-\frac{2}{\eta\sqrt{n}}\right)^2}+\sqrt{\frac{4|I|}{\eta^2n}}\\
&\leq \sqrt{\sum_{i\in I} \left(p_i^{*}-\frac{2}{\eta\sqrt{n}}\right)^2}+\sqrt{\sum_{i\in I} \left(u_i^{(k)}-p_i^{*}\right)^2}+\sqrt{\frac{4|I|}{\eta^2n}}\\
&\leq \sqrt{\frac{4|I|}{\eta^2n}}+\sqrt{\sum_{i\in I} \left(u_i^{(k)}-p_i^{*}\right)^2}+\sqrt{\frac{4|I|}{\eta^2n}}\\
&\leq 5 \|\bsu^{(k)}-\bsp^*\|_2.
\end{split}
\eeq
Therefore, we get
\beq\label{eqn:lower_bd_norm_utilde}
1\geq \|\tilde{\bsu}^{(k)}\|_2^2=1-\sum_{i\in I} (u_i^{(k)})^2\geq1-25\|\bsu^{(k)}-\bsp^*\|_2^2.
\eeq

By the law of cosine, we have
\beq
\begin{split}\label{eqn:law_of_cos}
\|\bsp^*-\bsu^{(k)}\|_2^2&=\sin^2 \theta(\bsu^{(k)},\bsp^*)+(1-\cos\theta(\bsu^{(k)},\bsp^*))^2=2-2\cos\theta(\bsu^{(k)},\bsp^*)\\
&=2\left(1-\sqrt{1- \sin^2\theta(\bsu^{(k)},\bsp^*)}\right)=2\frac{\sin^2\theta(\bsu^{(k)},\bsp^*)}{1+\sqrt{1-\sin^2\theta(\bsu^{(k)},\bsp^*)}}\\
&\leq 2\sin^2\theta(\bsu^{(k)},\bsp^*).
\end{split}
\eeq
Combining \eqref{eqn:lower_bd_norm_utilde} and \eqref {eqn:law_of_cos} proves \eqref{lem:utilde_norm}.

We next prove \eqref{lem:utilde_sin},
\begin{equation*}
\sin\theta({\tilde{\bsu}}^{(k)},\bsp^*)\leq 6\sqrt{2} \sin\theta({\bsu}^{(k)},\bsp^*).
\end{equation*}
First, note that $\|\tilde{\bsu}^{(k)}-\bsu^{(k)}\|_2^2=\sum_{i\in I} \left(u_i^{(k)}\right)^2$. We have
\beq
\sin\theta(\tilde{\bsu}^{(k)},\bsp^*)\leq \|\tilde{\bsu}^{(k)}-\bsp\|_2\leq  \|\tilde{\bsu}^{(k)}-\bsu^{(k)}\|_2+ \|\bsu^{(k)}-\bsp^*\|_2\leq 6\|\bsu^{(k)}-\bsp^*\|_2
\eeq
where the last inequality is from \eqref{eqn:bd_u_diff}. Combined with \eqref{eqn:law_of_cos}, we get \eqref{lem:utilde_sin}.

\section{Performance Analysis of Algorithm \ref{alg:plugin_MLE}}\label{app:thm:plugin_MLE}

\subsection{Proof of Lemma \ref{lem:rec_p}}
In this lemma, we show that conditioned on $(\hat{g}_j,\hat{h}_j)=(g_j,h_j)$ for all $j\in[m]$, if $s(1-s_1)=\Omega\left(\frac{1}{\delta_2 m}\log\frac{n}{\epsilon}\right)$, the estimator $\hat{p}_i$ defined in \eqref{eqn:hatpi},
\begin{equation*}
\hat{p}_i=\frac{K}{(K-2)}\left(\frac{1}{s(1-s_1)}\left(\frac{1}{m}\sum_{j=1}^m \mathbbm{1}(A^2_{ij}=\hat{g}_j\text{ or } \hat{h}_j)\right)-\frac{2}{K}\right),
\end{equation*}
guarantees 
$
\P\left(\|\bsp-\hat{\bsp}\|_\infty<\delta_2\right)\geq1-\epsilon
$
for any $\epsilon>0$.

Given $(\hat{g}_j,\hat{h}_j)=(g_j,h_j)$ for all $j\in[m]$, since $\bsA^2$ is independent of $(\hat{g}_j,\hat{h}_j)$, we have
\beq
\begin{split}
\E\left[ \mathbbm{1}(A^2_{ij}=\hat{g}_j\text{ or } \hat{h}_j)\right]&=\P(A^2_{ij}=\hat{g}_j \text{ or }\hat{h}_j)=s(1-s_1)\left(\frac{K-2}{K}p_i+\frac{2}{K}\right),\\
\var\left( \mathbbm{1}(A^2_{ij}=\hat{g}_j\text{ or } \hat{h}_j)\right)&\leq s(1-s_1).
\end{split}
\eeq
 By applying the Bernstein's inequality, we can show that 
 \beq
 \begin{split}
& \P\left(\left|\sum_{j=1}^m \left(\mathbbm{1}(A^2_{ij}=\hat{g}_j\text{ or } \hat{h}_j)-s(1-s_1)\left(\frac{K-2}{K}p_i+\frac{2}{K}\right)\right)\right|>\frac{(K-2)ms(1-s_1)\delta_2}{K}\right)\\
&\leq \exp\left(-\frac{\frac{1}{2}\left(\frac{(K-2)ms(1-s_1)\delta_2}{K}\right)^2}{ms(1-s_1)+\frac{1}{3}\frac{(K-2)ms(1-s_1)\delta_2}{K}}\right)\leq\exp\left(-\Theta\left(ms(1-s_1)\delta_2^2\right)\right).
 \end{split}
 \eeq
 
 Thus, if the sampling probability satisfies
 \beq
 s(1-s_1)=\Omega\left(\frac{1}{m\delta_2^2}\log\frac{1}{\epsilon}\right),
 \eeq
then we can guarantee that
$
\P(|\hat{p}_i-p_i|<\delta_2)\geq 1-\epsilon.
$
By taking the union bound over $i\in[n]$, if the sampling probability satisfies
 \beq
 s(1-s_1)=\Omega\left(\frac{1}{m\delta_2^2}\log\frac{n}{\epsilon}\right),
 \eeq
 then we can guarantee that $\P\left(\|\hat\bsp-{\bsp}\|_\infty<\delta_2\right)\geq1-\epsilon$.

\subsection{Proof of Theorem \ref{thm:plugmin_mle}}
To prove this theorem, we use similar proof techniques from \cite{SEM}. Since the work in \cite{SEM} focuses on the recovery of only the ground-truth label for each task, we generalize the techniques to recover not only the ground-truth label but also the most confusing answer.

We first introduce some notations. Let $\mu^{(i,j)}_{(a,b),k}$ denote the probability that a worker $i\in[n]$ gives label $k\in[K]$ for the assigned task $j\in[m]$ of which the top-two answers are $(g_j,h_j)=(a,b)$.
Let $\boldsymbol{\mu}^{(i,j)}_{(a,b)}=[\mu^{(i,j)}_{(a,b),1}\;\; \mu^{(i,j)}_{(a,b),2}\;\;\cdots\;\;\mu^{(i,j)}_{(a,b),K}]^\top$.
We introduce a quantity that measures the average ability of workers in distinguishing the ground-truth pair of top-two answers $(g_j,h_j)$ from any other pair $(a,b)\in[K]^2/\{(g_j,h_j)\}$ for the task $j\in[m]$.
We define
\beq
\begin{split}\label{defn:overlineD}
%D&:=\sum_{i=1}^n  \mathbb{D}_\mathsf{KL}\left(\boldsymbol{\mu}^{(i,j)}_{(g_j,h_j)},\boldsymbol{\mu}^{(i,j)}_{(a,b)}\right),\\
\overline{D}^{(j)}:=\min_{(g_j,h_j)\neq (a,b)}\frac{1}{n}\sum_{i=1}^n  \mathbb{D}_\mathsf{KL}\left(\boldsymbol{\mu}^{(i,j)}_{(g_j,h_j)},\boldsymbol{\mu}^{(i,j)}_{(a,b)}\right);\quad
\overline{D}:=\min_{j\in[m]}\overline{D}^{(j)},
\end{split}
\eeq
where $ \mathbb{D}_\mathsf{KL}(P,Q):=\sum_i P(i)\log(P(i)/Q(i))$ is the KL-divergence between $P$ and $Q$. 
Note that $\overline{D}^{(j)}$ is strictly positive if $q_j\in(1/2,1)$ and there exists at least one worker $i$ with $p_i>0$ for the distribution \eqref{eqn:A_dist}, so that $(g_j,h_j)$ can be distinguished from any other $(a,b)\in[K]^2/\{(g_j,h_j)\}$ statistically. We define $\overline{D}$ as the minimum of $\overline{D}^{(j)}$ over $j\in[m]$, indicating the average ability of workers in distinguishing $(g_j,h_j)$ from any other $(a,b)$ for the most difficult task in the set.

Let us define an event that will be shown holding with high probability,
\beq\label{eqn:good_event}
\caE: \sum_{i=1}^n \sum_{k=1}^K \mathbbm{1}(A_{ij}=k)\log\left(\frac{\mu^{(i,j)}_{(g_j,h_j),k}}{\mu^{(i,j)}_{(a,b),k}}\right)\geq ns\overline{D}/2 \text{ for all }j\in[m]\text{ and }(a,b)\in[K]\times[K]\backslash {(g_j,h_j)}.
\eeq

Define
\beq
l_i:=\sum_{k=1}^K \mathbbm{1}(A_{ij}=k)\log\left(\mu^{(i,j)}_{(g_j,h_j),k}/\mu^{(i,j)}_{(a,b),k}\right).
\eeq
We can see that $l_1,\dots, l_n$ are mutually independent on any value of $(g_j,h_j)$, and each $l_i$ belongs to the interval $[0,\log(1/\rho)]$ where $\mu^{(i,j)}_{(g_j,h_j),c}\geq \rho$ for all $(i,j,g_j,h_j,c)\in[n]\times[m]\times [K]^3$.
We can easily show that
\beq
\E\left[\sum_{i=1}^n l_i\Bigg|(g_j,h_j)\right]=\sum_{i=1}^n s \mathbb{D}_{\mathsf KL}\left(\boldsymbol{\mu}^{(i,j)}_{(g_j,h_j)},\boldsymbol{\mu}^{(i,j)}_{(a,b)}\right).
\eeq
We define
\beq
\begin{split}
D&:=\sum_{i=1}^n  \mathbb{D}_{\mathsf KL}\left(\boldsymbol{\mu}^{(i,j)}_{(g_j,h_j)},\boldsymbol{\mu}^{(i,j)}_{(a,b)}\right).
%\overline{D}^{(j)}&:=\min_{(g_j,h_j)\neq (a,b)}\frac{1}{n}\sum_{i=1}^n  \mathbb{D}_{\sf KL}\left(\boldsymbol{\mu}^{(i,j)}_{(g_j,h_j)},\boldsymbol{\mu}^{(i,j)}_{(a,b)}\right).
\end{split}
\eeq
%Note that $\overline{D}^{(j)}$ is strictly positive.

The following lemma shows that the second moment of $l_i$ is bounded above by the KL-divergence between the label distribution under  $(g_j,h_j)$ pair and the label distribution under $(a,b)$ pair.
\begin{lem}\label{lem:second_si}
Conditioning on any value of $(g_j,h_j)$, we have
\beq
\E\left[ l_i^2|(g_j,h_j)\right]\leq \frac{2\log(1/\rho)}{1-\rho}s  \mathbb{D}_{\mathsf KL}\left(\boldsymbol{\mu}^{(i,j)}_{(g_j,h_j)},\boldsymbol{\mu}^{(i,j)}_{(a,b)}\right).
\eeq
\end{lem}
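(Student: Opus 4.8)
The plan is to bound the second moment $\E[l_i^2|(g_j,h_j)]$ by relating the per-label log-likelihood ratio to the KL-divergence term by term. First I would write $l_i = \sum_{k=1}^K \mathbbm{1}(A_{ij}=k)\log(\mu^{(i,j)}_{(g_j,h_j),k}/\mu^{(i,j)}_{(a,b),k})$ and note that since exactly one indicator is nonzero for an assigned task (and $l_i=0$ if the task is unassigned, which happens with probability $1-s$), we have
\beq
\E[l_i^2|(g_j,h_j)] = s\sum_{k=1}^K \mu^{(i,j)}_{(g_j,h_j),k}\left(\log\frac{\mu^{(i,j)}_{(g_j,h_j),k}}{\mu^{(i,j)}_{(a,b),k}}\right)^2.
\eeq
The strategy is then to show that each summand $\left(\log\frac{\mu^{(i,j)}_{(g_j,h_j),k}}{\mu^{(i,j)}_{(a,b),k}}\right)^2$ is bounded by a constant times $\log\frac{\mu^{(i,j)}_{(g_j,h_j),k}}{\mu^{(i,j)}_{(a,b),k}}$, so that the sum collapses into (a multiple of) the KL-divergence $\mathbb{D}_{\mathsf{KL}}(\boldsymbol{\mu}^{(i,j)}_{(g_j,h_j)},\boldsymbol{\mu}^{(i,j)}_{(a,b)})$.

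The key inequality to establish is: for $x = \log\frac{\mu^{(i,j)}_{(g_j,h_j),k}}{\mu^{(i,j)}_{(a,b),k}}$, we have $x^2 \le \frac{2\log(1/\rho)}{1-\rho}\, x$ whenever we also weight by $\mu^{(i,j)}_{(g_j,h_j),k}$ and sum. The cleanest route is to split by sign: when $x \le 0$, i.e. $\mu^{(i,j)}_{(g_j,h_j),k} \le \mu^{(i,j)}_{(a,b),k}$, the contribution to the KL sum is negative, and I would use the bound $|x| = \log\frac{\mu^{(i,j)}_{(a,b),k}}{\mu^{(i,j)}_{(g_j,h_j),k}} \le \log(1/\rho)$ (using $\mu^{(i,j)}_{(g_j,h_j),k}\ge\rho$ and $\mu^{(i,j)}_{(a,b),k}\le 1$) together with an inequality like $\mu^{(i,j)}_{(g_j,h_j),k} x^2 \le \frac{2\log(1/\rho)}{1-\rho}(\mu^{(i,j)}_{(g_j,h_j),k}-\mu^{(i,j)}_{(a,b),k})x$ — here one uses that for negative $x$ with $|x|\le\log(1/\rho)$, $\mu^{(i,j)}_{(g_j,h_j),k}\,x / (\mu^{(i,j)}_{(g_j,h_j),k}-\mu^{(i,j)}_{(a,b),k}) = x/(1-e^{-x})$ which for $x\in[-\log(1/\rho),0]$ is at most $\frac{\log(1/\rho)}{1-\rho}$. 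When $x > 0$, one bounds $x \le \log(1/\rho)$ directly (since $\mu^{(i,j)}_{(g_j,h_j),k}\le 1$ and $\mu^{(i,j)}_{(a,b),k}\ge\rho$), giving $\mu^{(i,j)}_{(g_j,h_j),k}x^2 \le \log(1/\rho)\,\mu^{(i,j)}_{(g_j,h_j),k}\,x \le \frac{2\log(1/\rho)}{1-\rho}\mu^{(i,j)}_{(g_j,h_j),k}\,x$. Summing over $k$ and noting $\sum_k \mu^{(i,j)}_{(g_j,h_j),k}\,x = \mathbb{D}_{\mathsf{KL}}(\boldsymbol{\mu}^{(i,j)}_{(g_j,h_j)},\boldsymbol{\mu}^{(i,j)}_{(a,b)})$ while also $\sum_k(\mu^{(i,j)}_{(g_j,h_j),k}-\mu^{(i,j)}_{(a,b),k})x$ equals the same KL-divergence plus its "reverse" which together are still controlled, yields the claim. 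An alternative cleaner packaging is the standard fact $\sum_k p_k(\log(p_k/q_k))^2 \le \frac{2\log(1/\rho)}{1-\rho}\,\mathbb{D}_{\mathsf{KL}}(p,q)$ for distributions with $p_k,q_k$ bounded below by $\rho$ — I would state and prove exactly this as a one-line lemma.

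The main obstacle I anticipate is handling the terms where $\mu^{(i,j)}_{(g_j,h_j),k} < \mu^{(i,j)}_{(a,b),k}$ (negative log-ratio), since naively $x^2 \le \log(1/\rho)|x|$ only gives a bound by $\sum_k \mu^{(i,j)}_{(g_j,h_j),k}|x|$, which is the \emph{total variation}-flavored quantity, not directly the KL-divergence (KL can be much smaller than $\sum_k p_k|\log(p_k/q_k)|$ a priori is false — actually KL $\ge$ the signed sum, so one needs care). The resolution is the elementary inequality $p(\log(p/q))^2 \le \frac{2}{1-q/p}\cdot p\log(p/q)\cdot$ (something), or more robustly: for each $k$, $\mu^{(i,j)}_{(g_j,h_j),k}(\log(\mu^{(i,j)}_{(g_j,h_j),k}/\mu^{(i,j)}_{(a,b),k}))^2 \le \frac{2\log(1/\rho)}{1-\rho}\cdot\big(\mu^{(i,j)}_{(g_j,h_j),k}\log(\mu^{(i,j)}_{(g_j,h_j),k}/\mu^{(i,j)}_{(a,b),k}) + \mu^{(i,j)}_{(a,b),k} - \mu^{(i,j)}_{(g_j,h_j),k}\big)$, where the right-hand bracket is nonnegative (it is $q - p + p\log(p/q) \ge 0$ by $\log t \ge 1 - 1/t$), and summing over $k$ the extra $\sum_k(\mu^{(i,j)}_{(a,b),k}-\mu^{(i,j)}_{(g_j,h_j),k}) = 0$ telescopes away, leaving exactly $\frac{2\log(1/\rho)}{1-\rho}\,\mathbb{D}_{\mathsf{KL}}(\boldsymbol{\mu}^{(i,j)}_{(g_j,h_j)},\boldsymbol{\mu}^{(i,j)}_{(a,b)})$. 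Verifying the pointwise inequality $p(\log(p/q))^2 \le \frac{2\log(1/\rho)}{1-\rho}(q - p + p\log(p/q))$ for $p,q\in[\rho,1]$ reduces, after substituting $t = q/p \in [\rho, 1/\rho]$, to a one-variable calculus check that $(\log t)^2 \le \frac{2\log(1/\rho)}{1-\rho}(t - 1 - \log t)$, which holds since $t-1-\log t \ge 0$ and the ratio $(\log t)^2/(t-1-\log t)$ is maximized at the endpoints of the interval — this is the step I'd expect to spend the most care on, but it is a routine bounded-interval estimate.
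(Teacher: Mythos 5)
Your proposal is correct and is essentially the argument behind this lemma: the paper itself gives no self-contained proof but defers to Lemma~4 of \citet{SEM}, which rests on exactly the per-coordinate bound you isolate, namely $p\bigl(\log (p/q)\bigr)^2\le \frac{2\log(1/\rho)}{1-\rho}\bigl(p\log(p/q)+q-p\bigr)$ summed over $k$ so that $\sum_k(q_k-p_k)=0$ telescopes, together with $\E[l_i^2\,|\,(g_j,h_j)]=s\sum_k \mu^{(i,j)}_{(g_j,h_j),k}\log^2\bigl(\mu^{(i,j)}_{(g_j,h_j),k}/\mu^{(i,j)}_{(a,b),k}\bigr)$. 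The one-variable inequality you flagged does hold for all $t\ge\rho$ (not only on a bounded interval): writing $L=\log(1/\rho)$, the function $\psi(t)=\frac{L}{1-\rho}(t-1)-\log t$ is convex and vanishes at $t=\rho$ and $t=1$, which gives the sign of the derivative of $h(t)=\frac{2L}{1-\rho}(t-1-\log t)-(\log t)^2$ on $[\rho,1]$ and on $[1,\infty)$, so $h\ge h(1)=0$ there, completing your proof.
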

The proof of this lemma can be obtained by following the proof of the similar result, Lemma 4 of \cite{SEM}.

According to Lemma~\ref{lem:second_si}, the aggregated second moment of $l_i$ is bounded by
\beq
\begin{split}
\E\left[\sum_{i=1}^n l_i^2\Bigg|(g_j,h_j)\right]&\leq \frac{2\log(1/\rho)}{1-\rho}\sum_{i=1}^{n}  s\mathbb{D}_{\mathsf KL}\left(\boldsymbol{\mu}^{(i,j)}_{(g_j,h_j)},\boldsymbol{\mu}^{(i,j)}_{(a,b)}\right)\\
&=\frac{2\log(1/\rho)}{1-\rho}sD.
\end{split}
\eeq
Thus, applying the Bernstein's inequality, we have
\beq
\P\left[\sum_{i=1}^n l_i\geq sD/2\Bigg|(g_j,h_j)\right]\geq 1-\exp\left(-\frac{\frac{1}{2}(sD/2)^2}{\frac{2\log(1/\rho)}{1-\rho}sD+\frac{1}{3}(2\log(1/\rho))(sD/2)}\right).
\eeq
Since $\rho\leq 1/2$ and $D\geq n\overline{D}^{(j)}\geq n\overline{D}$, combining the above inequality with union bound over $j\in[m]$, we have
\beq
\begin{split}\label{eqn:prob_E}
\P\left[\caE\right] &\geq 1-mK^2 \exp\left(-\frac{ns\overline{D}}{33\log(1/\rho)}\right).
\end{split}
\eeq

The maximum likelihood estimator finds a pair of $(a,b)\in[K]^2$, $a\neq b$, maximizing
\begin{align}
        (\hat{g}_j, \hat{h}_j) &= \argmax_{(a,b)\in[K]^2, a \neq b} \prod_{i=1}^n \mathbb{P}(A_{ij}\vert \bsp, q_j, (a,b)) \nonumber\\
        &= \argmax_{(a,b)\in[K]^2, a \neq b} \sum_{i=1}^n \log \mathbb{P}(A_{ij}\vert \bsp, q_j, (a,b)) \nonumber\\
        &= \argmax_{(a,b)\in[K]^2, a \neq b} \sum_{i=1}^n\sum_{k=1}^K \mathbbm{1}(A_{ij}=k)\log \mu^{(i,j)}_{(a,b),k}.
\end{align}
%Thus, under the event $\mathcal{E}$, defined in~\eqref{eqn:good_event},
The plug-in MLE in \eqref{eqn:plug-in-mle}, on the other hand, finds a pair of $(a,b)\in[K]^2$, $a\neq b$, maximizing
\begin{align}\label{eqn:rule_pluginmle}
        (\hat{g}_j, \hat{h}_j)        &= \argmax_{(a,b)\in[K]^2, a \neq b} \sum_{i=1}^n\sum_{k=1}^K \mathbbm{1}(A_{ij}=k)\log \hat{\mu}^{(i,j)}_{(a,b),k}
\end{align}
where $\hat{\mu}^{(i,j)}_{(a,b),k}$ is the estimated probability that a worker $i\in[n]$ gives label $k\in[K]$ for the assigned task $j\in[m]$ of which the top two answers are $(g_j,h_j)=(a,b)$ assuming $p_i=\hat{p}_i$ from \eqref{eqn:hatpi} and $q_j=\hat{q}_j$ from \eqref{eqn:est_q} in the distribution \eqref{eqn:A_dist}.
Thus, for the plug-in MLE to correctly find the ground-truth top two answers $(g_j,h_j)$, we need to satisfy the following event:
\beq%\label{eqn:good_event}
\sum_{i=1}^n \sum_{k=1}^K \mathbbm{1}(A_{ij}=k)\log\left(\hat{\mu}^{(i,j)}_{(g_j,h_j),k}/\hat{\mu}^{(i,j)}_{(a,b),k}\right)\geq 0 \text{ for all }(a,b)\in[K]\times[K]\backslash {(g_j,h_j)}.
\eeq
For any arbitrary $(a,b)\neq (g_j,h_j)$, consider the quantity
\beq
Q_{(a,b)}:=\sum_{i=1}^n \sum_{k=1}^K \mathbbm{1}(A_{ij}=k)\log\left(\hat{\mu}^{(i,j)}_{(g_j,h_j),k}/\hat{\mu}^{(i,j)}_{(a,b),k}\right),
\eeq
which can be written as
\beq
\begin{split}
Q_{(a,b)}&=\sum_{i=1}^n \sum_{k=1}^K \mathbbm{1}(A_{ij}=k)\log\frac{{\mu}^{(i,j)}_{(g_j,h_j),k}}{{\mu}^{(i,j)}_{(a,b),k}} +\sum_{i=1}^n \sum_{k=1}^K\mathbbm{1}(A_{ij}=k)\left[\log\left(\frac{\hat{\mu}^{(i,j)}_{(g_j,h_j),k}}{{\mu}^{(i,j)}_{(g_j,h_j),k}}\right)-\log\left(\frac{\hat{\mu}^{(i,j)}_{(a,b),k}}{{\mu}^{(i,j)}_{(a,b),k}}\right)\right].
\end{split}
\eeq

Assuming that there exist $\rho>\delta_3$ such that
\beq
{\mu}^{(i,j)}_{(a,b),k}\geq \rho \text{ and } |\hat{\mu}^{(i,j)}_{(a,b),k}-{\mu}^{(i,j)}_{(a,b),k}|\leq \delta_3\text{ for all }i\in[n], j\in[m], (a,b)\in[K]^2,
\eeq
we have
\beq
\max_{i\in[n],k\in[K]} \left[\log\left(\frac{\hat{\mu}^{(i,j)}_{(g_j,h_j),k}}{{\mu}^{(i,j)}_{(g_j,h_j),k}}\right)-\log\left(\frac{\hat{\mu}^{(i,j)}_{(a,b),k}}{{\mu}^{(i,j)}_{(a,b),k}}\right)\right]\leq 2\log\left(\frac{\rho}{\rho-\delta_3}\right).
\eeq
By the Bernstein's inequality, we also have
\beq\label{eqn:Aij_cont}
\P\left[\left|\sum_{i=1}^n\sum_{k=1}^K \mathbbm{1}(A_{ij}=k)-ns\right|> ns/2\right]\leq \exp\left(-\frac{\frac{1}{2}(ns/2)^2}{ns+\frac{1}{3}(ns/2)}\right)=\exp\left(-\frac{3ns}{28}\right).
\eeq
By taking the union bound over $j\in[m]$, we have
\beq\label{eqn:Aij_cont_union}
\P\left[\left|\sum_{i=1}^n\sum_{k=1}^K \mathbbm{1}(A_{ij}=k)-ns\right|>ns/2\text{ for any }j\in[m]\right]\leq m\exp\left(-\frac{3ns}{28}\right).
\eeq
Under the intersection of the event $\left|\sum_{i=1}^n\sum_{k=1}^K \mathbbm{1}(A_{ij}=k)-ns\right|\leq ns/2$ for all $j\in[m]$  and the event $\mathcal{E}$, we can guarantee
\beq
\begin{split}
Q_{(a,b)}&=\sum_{i=1}^n \sum_{k=1}^K \mathbbm{1}(A_{ij}=k)\log\frac{{\mu}^{(i,j)}_{(g_j,h_j),k}}{{\mu}^{(i,j)}_{(a,b),k}} +\sum_{i=1}^n \sum_{k=1}^K\mathbbm{1}(A_{ij}=k)\left[\log\left(\frac{\hat{\mu}^{(i,j)}_{(g_j,h_j),k}}{{\mu}^{(i,j)}_{(g_j,h_j),k}}\right)-\log\left(\frac{\hat{\mu}^{(i,j)}_{(a,b),k}}{{\mu}^{(i,j)}_{(a,b),k}}\right)\right]\\
&\geq \frac{ns\overline{D}}{2}-3ns\log\left(\frac{\rho}{\rho-\delta_3}\right)\geq ns \left(\frac{\overline{D}}{2}-\frac{3\delta_3}{\rho-\delta_3}\right)>0
\end{split}
\eeq
for every $j\in[m]$ 
where the last inequality holds if
\beq
\delta_3< \rho\frac{\overline{D}}{6+\overline{D}}.
\eeq
In summary, under that the event $\left|\sum_{i=1}^n\sum_{k=1}^K \mathbbm{1}(A_{ij}=k)-ns\right|\leq ns/2$ for all $j\in[m]$ and the event $\mathcal{E}$ hold, if we have $\delta_3$ such that
\beq
|\hat{\mu}^{(i,j)}_{(a,b),k}-{\mu}^{(i,j)}_{(a,b),k}|\leq \delta_3\text{ for all }i\in[n], j\in[m], (a,b)\in[K]^2
\eeq
and
\beq
\delta_3<\rho\text{ and }\quad\delta_3< \rho\frac{\overline{D}}{6+\overline{D}},
\eeq
then we can guarantee that the plug-in MLE in \eqref{eqn:rule_pluginmle} successfully recovers the pair of top two $(g_j,h_j)$ for all the tasks $j\in[m]$.
To make the right-hand side of \eqref{eqn:prob_E} and \eqref{eqn:Aij_cont_union} less than $\epsilon/2$, it is sufficient to have
\beq
s=\Omega\left(\frac{\log(1/\rho)\log(mK^2/\epsilon)+\log(m/\epsilon)}{n\overline{D}}\right).
%s=\Omega\left(\frac{\log(1/\rho)\log(mK^2/\epsilon)+\overline{D}\log(m/\epsilon)}{n\overline{D}}\right).
\eeq

Lastly, when we have 
\beq\label{eqn:cond_p_q_inf_delta}
\max\{\|\bsp-\hat{\bsp}\|_\infty, \|\bsq-\hat{\bsq}\|_\infty\}\leq \delta,
\eeq
we can guarantee that 
\beq
|\hat{\mu}^{(i,j)}_{(a,b),k}-{\mu}^{(i,j)}_{(a,b),k}|\leq 4\delta:=\delta_3.
\eeq
Thus, it is sufficient to guarantee \eqref{eqn:cond_p_q_inf_delta} with
\beq
\delta<\min\left\{\frac{\rho}{4},\frac{\rho\overline{D}}{4(6+\overline{D})}\right\}.
\eeq

%\subsection{Proof of Lemma \ref{lem:second_si}}\label{app:sec:proof{lem:second_si}}

\section{Proof of Theorem \ref{thm:converse}}\label{app:thm:converse}

\subsection{Proof of part (a)}

To prove this minimax bound, we use the similar arguments from \cite{Karger}.
In particular, we consider a spammer-hammer model such that
\beq
p_i=\begin{cases}
0, \text{ for } 1\leq i\leq \floor{(1-\overline{p})n}\\
1,\text{ otherwise. }
\end{cases}
\eeq
Assume that total $l_j$ workers randomly sampled from $[n]$ provide answers for the task $j$. 
Under the spammer-hammer model, the oracle estimator makes a mistake on task $j$ with probability $(K-1)/K$ if it is only assigned to spammers. When $l_j$ is the number of assignments, we have
\beq
\p(\hat{g}_j\neq g_j)=\frac{K-1}{K}(1-\overline{p})^{l_j}.
\eeq
By convexity and using Jensen's inequality, the average probability of error is lower bounded by
\beq
\frac{1}{m}\sum_{j\in[m]}\p(\hat{g}_j\neq g_j)\geq\frac{K-1}{K}(1-\overline{p})^{l}
\eeq
where $\frac{1}{m}\sum_{i\in[m]}l_i\leq l$. By assuming $\overline{p}\leq 2/3$, we have $(1-\overline{p})\geq e^{-(\overline{p}+\overline{p}^2)}$. Thus,
\beq\label{eqn:minmax1_bd_f}
\min_{\hat{\bsg}}\max_{\bsp\in\mathcal{F}_{\overline{p}},\;\; \bsg\in[K]^m}\frac{1}{m}\sum_{j\in[m]}\p(\hat{g}_j\neq g_j)\geq \frac{K-1}{K}e^{-(\overline{p}+\overline{p}^2)l}\geq \frac{K-1}{K}e^{-2\overline{p}l}.
\eeq
The inequality in \eqref{eqn:minmax1_bd_f} implies that if $l$ is less than $\frac{1}{2\overline{p}}\log\left(\frac{K-1}{K\epsilon}\right)$, then no algorithm can make the minimax error in \eqref{eqn:minmax1_bd_f} less than $\epsilon$.
Since the average number of queries per task in our model is $ns$, it implies that it is necessary to have $s=\Omega\left(\frac{1}{\|\bsp\|_2^2}\log\frac{1}{\epsilon}\right)$.
%As the worst-case worker distribution, we use the spammer-hammer model where $p_i=0$ for ${(1-\overline{p})}$-fraction of workers and $p_i=1$ for the rest.
%If the task is assigned only to spammers, there is no way to estimate $g_j$ better than a random guess. Thus, we can lower bound the error probability by $\p(\hat{g}_j\neq g_j)\geq \frac{1}{K}(1-\overline{p})^{l}\geq  \frac{1}{K}e^{-2\overline{p}l}$, where $l$ is the average number of queries (samples) per task and $\bar{p}\leq 2/3$.
%Since the average number of queries in our model is $ns$, Lemma \ref{lem:minmax} implies that if $s$ is less than $(1/2\|\bsp\|_2)\log(1/(K\epsilon))$, then no algorithm can minimax error \eqref{eqn:minimax_error} less than $\epsilon$  for recovering the ground truth.
%In Theorem~\ref{thm:main}, we show that our algorithm can recover not only the ground-truth label but also the most confusing label for every task if $s= \Omega\left(({\|\bsp\|_2})^{-2}\log\frac{1}{\epsilon}\right)$. Thus, our result is minimax optimal in terms of sample complexity.

\subsection{Proof of part (b)}
To prove the second part of the theorem, we use proof techniques from \cite{SEM}, but generalizes the results for pair of top two answers.
We assume that $j_c\in[m]$, $(g_c,h_c)\in[K]^2$ and $(a_c,b_c)\in[K]^2$ are the task index and the pairs of labels such that
\beq
\overline{D}=\frac{1}{n}\sum_{i=1}^n  \mathbb{D}_\mathsf{KL}\left(\boldsymbol{\mu}^{(i,j_c)}_{(g_c,h_c)},\boldsymbol{\mu}^{(i,j_c)}_{(a_c,b_c)}\right)
\eeq
for $\overline{D}$ defined in \eqref{defn:overlineD}.

Let $\mathbb{Q}$ be a uniform distribution over the set $\{(g_c,h_c),(a_c,b_c)\}^m$. For any $(\hat{g},\hat{h})$, we have
\beq
\begin{split}\label{eqn:conv2_1}
&\max_{\substack{(\bsv,\bsu)\in[K]^m\times [K]^m\\ v_j\neq u_j,\forall j[m]}}\E\left[\sum_{j=1}^m\mathbbm{1}((\hat{g}_j,\hat{h}_j)\neq (g_j,h_j))\Big|(\bsg,\bsh)=(\bsv,\bsu)\right]\\
&\geq \sum_{j=1}^m \sum_{(\bsv,\bsu)\in \{(g_c,h_c),(a_c,b_c)\}^m}\mathbb{Q}((\bsv,\bsu))\E\left[\mathbbm{1}((\hat{g}_j,\hat{h}_j)\neq (g_j,h_j))\Big|(\bsg,\bsh)=(\bsv,\bsu)\right]
\end{split}
\eeq
%We provide a lower bound on $\sum_{(\bsv,\bsu)\in \{(g_c,h_c),(a_c,b_c)\}^m}\mathbb{Q}((\bsv,\bsu))\E\left[\mathbbm{1}((\hat{g}_j,\hat{h}_j)\neq (g_j,h_j))\Big|(\bsg,\bsh)=(\bsv,\bsu)\right]$ for every $j\in[m]$.
Let $\bsA:=\{A_{ij}:i\in[n],j\in[m]\}$ be the set of observations. Define two probability measures $\mathbb{P}_0$ and $\mathbb{P}_1$, such that $\mathbb{P}_0$ is the measure of $\bsA$ conditioned on $(g_{j},h_{j})=(g_c,h_c)$, while $\mathbb{P}_1$ is that on $(g_{j},h_{j})=(a_c,b_c)$.
Then, we can have
\beq
\begin{split}\label{eqn:conv2_2}
&\sum_{(\bsv,\bsu)\in \{(g_c,h_c),(a_c,b_c)\}^m}\mathbb{Q}((\bsv,\bsu))\E\left[\mathbbm{1}((\hat{g}_j,\hat{h}_j)\neq (g_j,h_j))\Big|(\bsg,\bsh)=(\bsv,\bsu)\right]\\
&=\mathbb{Q}((g_{j},h_{j})=(g_c,h_c))\mathbb{P}_0((\hat{g}_j,\hat{h}_j)\neq (g_c,h_c))+\mathbb{Q}((g_{j},h_{j})=(a_c,b_c))\mathbb{P}_1((\hat{g}_j,\hat{h}_j)\neq (a_c,b_c))\\
&\geq\frac{1}{2}-\frac{1}{2}\|\mathbb{P}_0-\mathbb{P}_1\|_{\text{TV}}\\
&\geq\frac{1}{2}-\frac{1}{4}\sqrt{\mathbb{D}_\mathsf{KL}(\mathbb{P}_0,\mathbb{P}_1)}.
\end{split}
\eeq
where the second to the last inequality is by Le Cam's method and the last inequality is by Pinsker's inequality.\footnote{The total variation distance between probability distributions $P$ and $Q$ defined on a set $\mathcal{X}$ is defined as the maximum difference between probabilities they assign on subsets of $\mathcal{X}$: $\|P-Q\|_\text{TV}:=\sup_{\mathcal{A}\subset \mathcal{X}}|P(\mathcal{A})-Q(\mathcal{A})|$.}
 
Conditioned on $(g_j,h_j)$, the set of random variables $A_j:=\{A_{ij}: i\in[n]\}$ are independent of $\bsA\backslash A_j$ for both $\mathbb{P}_0$ and $\mathbb{P}_1$, and thus
\beq\label{eqn:conv2_3}
\mathbb{D}_\mathsf{KL}(\mathbb{P}_0,\mathbb{P}_1)=\mathbb{D}_\mathsf{KL}(\mathbb{P}_0(A_j),\mathbb{P}_1(A_j))+\mathbb{D}_\mathsf{KL}(\mathbb{P}_0(\bsA\backslash A_j),\mathbb{P}_1(\bsA\backslash A_j))=\mathbb{D}_\mathsf{KL}(\mathbb{P}_0(A_j),\mathbb{P}_1(A_j))
\eeq
where $\mathbb{P}(X)$ denote the distribution of $X$ with respect to the probability measure $\mathbb{P}$.
Given $(g_j,h_j)$, since $A_{1j},\dots, A_{nj}$ are independent, we can show that 
\beq
\begin{split}\label{eqn:conv2_4}
\mathbb{D}_\mathsf{KL}(\mathbb{P}_0(A_j),\mathbb{P}_1(A_j))&=\sum_{i=1}^n\mathbb{D}_\mathsf{KL}( \mathbb{P}_0(A_{ij}),\mathbb{P}_1(A_{ij}))\\
&=\sum_{i=1}^n\left((1-s)\log\frac{1-s}{1-s}+s \mathbb{D}_\mathsf{KL}\left(\boldsymbol{\mu}^{(i,j)}_{(g_c,h_c)},\boldsymbol{\mu}^{(i,j)}_{(a_c,b_c)}\right)\right)\\
&\geq s n \overline{D}.
\end{split}
\eeq

Combining \eqref{eqn:conv2_1}-- \eqref{eqn:conv2_4}, we have
\beq
\begin{split}
&\max_{\substack{(\bsv,\bsu)\in[K]^m\times [K]^m\\ v_j\neq u_j,\forall j[m]}}\E\left[\frac{1}{m}\sum_{j=1}^m\mathbbm{1}((\hat{g}_j,\hat{h}_j)\neq (g_j,h_j))\Big|(\bsg,\bsh)=(\bsv,\bsu)\right]\\
&\geq \frac{1}{2}-\frac{1}{4}\sqrt{sn\overline{D}}.
\end{split}
\eeq
Thus, if $s\leq \frac{1}{4n\overline{D}}$, then the above inequality is lower bounded by $3/8$.
This completes the proof.

\section{Useful Inequalities}

In this section, we summarize the useful inequalities used in the proof of the main results.

The following inequality, which appeared in \cite{bandeira2016sharp} provides a non-asymptotic spectral norm bound for random matrices with independent random entries. 
\begin{thm}[Spectral norm bound of a random matrice with independent entries]\label{lem:bd_spnorm}
Consider a random matrix $\bsX\in\mathbb{R}^{n\times m}$, whose entries are independently generated and obey
\beq
\E[X_{i,j}]=0,\quad\text{and}\quad |X_{i,j}|\leq B, \quad1\leq i\leq n,\;\;1\leq j\leq m.
\eeq
Define
\beq
\nu:=\max\left\{\max_i \sum_j\E[X_{i,j}^2],\;\;{\max_j \sum_i\E[X_{i,j}^2]}\right\}.
\eeq
Then there exists some universal constant $c>0$ such that for any $t>0$,
\beq\label{eqn:spec_norm_Ban}
\P\left\{\|\bsX\|\geq 4\sqrt{\nu}+t\right\}\leq (n+m)\exp\left(-\frac{t^2}{cB^2}\right).
\eeq
\end{thm}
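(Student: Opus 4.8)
The plan is to derive the stated tail by separating two essentially independent phenomena: the \emph{concentration} of $\|\bsX\|$ about its mean, which is dimension-free and controlled purely by the entrywise bound $B$, and the \emph{location} of that mean, which is governed by the variance parameter $\nu$. The prefactor $(n+m)$ will be used only to render the inequality vacuous in the low-deviation regime; all the real content lies in showing $\E\|\bsX\|\le 4\sqrt{\nu}+O\!\left(B\sqrt{\log(n+m)}\right)$ together with a sub-Gaussian concentration of $\|\bsX\|$ around $\E\|\bsX\|$ at scale $B$, and then stitching these together with a split on $t$.

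First I would establish concentration about the mean. Viewing $\|\cdot\|$ as a function of the independent entries $\{X_{ij}\}$, note that it is convex and $1$-Lipschitz with respect to the Frobenius norm, since $\bigl|\,\|M\|-\|M'\|\,\bigr|\le\|M-M'\|\le\|M-M'\|_F$. Each $X_{ij}$ is independent and supported on an interval of length at most $2B$. Talagrand's concentration inequality for convex Lipschitz functions of bounded independent variables then yields a universal constant $c_1>0$ with
\beq
\P\left(\|\bsX\|\ge \E\|\bsX\|+u\right)\le \exp\left(-\frac{u^2}{c_1 B^2}\right)\qquad\text{for all }u\ge0,
\eeq
and crucially this bound carries no dependence on $n$ or $m$.

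The harder step is the expectation bound. I would pass to the Hermitian dilation $\bsY:=\left(\begin{smallmatrix}\bm{0}&\bsX\\\bsX^\top&\bm{0}\end{smallmatrix}\right)$, a symmetric $(n+m)\times(n+m)$ matrix with independent, mean-zero, above-diagonal entries bounded by $B$ and satisfying $\|\bsY\|=\|\bsX\|$; its maximal row variance-sum is exactly $\nu$. I would then invoke the sharp non-asymptotic moment bound for symmetric matrices with independent entries, proved by the trace/moment method: bound $\E\|\bsY\|^{2p}\le \E\,\tr(\bsY^{2p})$ and estimate the contribution of closed walks of length $2p$ using the variance profile rather than crude entrywise bounds, then optimize over $p\sim\log(n+m)$. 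This yields $\E\|\bsX\|=\E\|\bsY\|\le 2\sqrt{\nu}+c_2 B\sqrt{\log(n+m)}$ for a universal $c_2$, comfortably below the stated $4\sqrt{\nu}$ leading term. \textbf{This is the main obstacle}: the combinatorial control of the walk contributions that produces the correct leading constant is precisely the technical heart of \cite{bandeira2016sharp}. The weaker estimate $\E\|\bsX\|\lesssim\sqrt{\nu\log(n+m)}$ is elementary (e.g.\ via the non-commutative Khintchine inequality), but confining the logarithm to the additive correction $B\sqrt{\log(n+m)}$ while keeping a clean $\sqrt{\nu}$ leading term requires the refined analysis.

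Finally I would assemble the two ingredients, fixing $c:=\max\{4c_1,4c_2^2\}$. If $t\ge 2c_2 B\sqrt{\log(n+m)}$, then $\E\|\bsX\|\le 2\sqrt{\nu}+t/2\le 4\sqrt{\nu}+t/2$, so the concentration bound with $u=t/2$ gives
\beq
\P\left(\|\bsX\|\ge 4\sqrt{\nu}+t\right)\le \P\left(\|\bsX\|\ge \E\|\bsX\|+t/2\right)\le \exp\left(-\frac{t^2}{4c_1 B^2}\right)\le\exp\left(-\frac{t^2}{cB^2}\right),
\eeq
even without the prefactor. If instead $t< 2c_2 B\sqrt{\log(n+m)}$, then $t^2/(cB^2)<4c_2^2\log(n+m)/c\le\log(n+m)$, whence $(n+m)\exp(-t^2/(cB^2))\ge1$ and the claimed inequality holds trivially. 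Combining the two regimes yields $\P(\|\bsX\|\ge 4\sqrt{\nu}+t)\le(n+m)\exp(-t^2/(cB^2))$ for all $t>0$, as desired.
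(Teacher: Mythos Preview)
The paper does not give its own proof of this theorem: it is quoted as a known inequality from \cite{bandeira2016sharp} in the ``Useful Inequalities'' appendix and used as a black box (e.g., to derive the spectral-norm bound on the perturbation matrix $\bsE$). There is therefore no in-paper argument to compare against.

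Your proposal is correct and is, in fact, a faithful outline of how the cited reference obtains such a bound: an expectation estimate $\E\|\bsX\|\le 2\sqrt{\nu}+c_2 B\sqrt{\log(n+m)}$ obtained via Hermitian dilation and the trace/moment method (this is precisely the main theorem of Bandeira--van~Handel), combined with dimension-free sub-Gaussian concentration of $\|\bsX\|$ about its mean via Talagrand's convex-Lipschitz inequality, and a case split on $t$ that uses the $(n+m)$ prefactor to render the small-$t$ regime vacuous. Your identification of the expectation bound as the only substantive obstacle is exactly right; the concentration step and the stitching are routine.
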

We also present a useful corollary of Theorem \ref{lem:bd_spnorm}, which can be shown from \eqref{eqn:spec_norm_Ban} by setting $\tilde{c}=\sqrt{9c}$ and $t=B\sqrt{9c\log (n+m)}$.
\begin{cor}[Corollary of Theorem \ref{lem:bd_spnorm}]\label{cor:spect_norm}
If  $\E[X_{i,j}^2]\leq \sigma^2$ for all $i,j$ and satisfying conditions in Theorem \ref{lem:bd_spnorm}, then we have
\beq
\|\bsX\|\leq 4\sigma \sqrt{\max(m,n)}+\tilde{c}B\sqrt{\log (n+m)}
\eeq
with probability $1-(n+m)^{-8}$ for some constant $\tilde{c}>0$.
\end{cor}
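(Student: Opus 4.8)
The plan is to specialize the general concentration inequality \eqref{eqn:spec_norm_Ban} of Theorem \ref{lem:bd_spnorm} by (i) controlling the variance proxy $\nu$ using the uniform second-moment bound $\E[X_{i,j}^2]\leq\sigma^2$, and (ii) making a single explicit choice of the free parameter $t$ that forces the tail probability down to $(n+m)^{-8}$.

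First I would bound $\nu$. Since $\E[X_{i,j}^2]\leq\sigma^2$ holds for every entry, each of the $n$ row-sums satisfies $\sum_j\E[X_{i,j}^2]\leq m\sigma^2$ and each of the $m$ column-sums satisfies $\sum_i\E[X_{i,j}^2]\leq n\sigma^2$. Taking the maximum over both, $\nu\leq\max(m,n)\sigma^2$, and therefore $4\sqrt{\nu}\leq 4\sigma\sqrt{\max(m,n)}$. This turns the first term of the bound in \eqref{eqn:spec_norm_Ban} into the leading term appearing in the corollary.

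Next I would choose $t:=\tilde{c}B\sqrt{\log(n+m)}$ with $\tilde{c}:=\sqrt{9c}=3\sqrt{c}$, where $c>0$ is the universal constant from Theorem \ref{lem:bd_spnorm}. With this choice the exponent in \eqref{eqn:spec_norm_Ban} becomes
\beq
-\frac{t^2}{cB^2}=-\frac{9cB^2\log(n+m)}{cB^2}=-9\log(n+m),
\eeq
so that the right-hand side of \eqref{eqn:spec_norm_Ban} equals $(n+m)\exp(-9\log(n+m))=(n+m)^{-8}$. On the complementary event, which has probability at least $1-(n+m)^{-8}$, the inequality $\|\bsX\|<4\sqrt{\nu}+t$ holds; substituting the two bounds established above yields $\|\bsX\|\leq 4\sigma\sqrt{\max(m,n)}+\tilde{c}B\sqrt{\log(n+m)}$, which is exactly the claim.

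There is no substantive obstacle here: the result is a direct instantiation of Theorem \ref{lem:bd_spnorm}, and the entire content of the proof is the bookkeeping of constants. The only point requiring care is that the constant $\tilde{c}$ must absorb $\sqrt{9c}$ so that the $9$ in the exponent cancels against the prefactor $(n+m)$ and leaves precisely the power $-8$; any other normalization would change the stated probability. Once $\tilde{c}=\sqrt{9c}$ is fixed, combining the variance bound with this choice of $t$ completes the argument.
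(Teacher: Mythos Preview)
Your proposal is correct and matches the paper's approach exactly: the paper simply states that the corollary follows from \eqref{eqn:spec_norm_Ban} by setting $\tilde{c}=\sqrt{9c}$ and $t=B\sqrt{9c\log(n+m)}$, which is precisely the choice you make and justify. Your write-up merely fills in the routine bookkeeping (the bound $\nu\le\sigma^2\max(m,n)$ and the verification that the tail probability becomes $(n+m)^{-8}$) that the paper leaves implicit.
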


We next summarize the eigenspace perturbation theory for asymmetric matrices with singular value composition (SVD).
Suppose $\bsX:=[\bsX_0,\bsX_1]$ and $\bsZ:=[\bsZ_0,\bsZ_1]$ are orthonormal matrices. 
When we define the distance between two subspaces $\bsX_0$ and $\bsZ_0$  by
\beq
\mathsf{dist}(\bsX_0,\bsZ_0):=\|\bsX_0\bsX_0^\top-\bsZ_0\bsZ_0^\top\|,
\eeq
then we have
\beq
\mathsf{dist}(\bsX_0,\bsZ_0)=\|\bsX_0^\top \bsZ_1\|=\|\bsZ_0^\top \bsX_1\|.
\eeq
Given $\|\bsX_0^\top\bsZ_0\|\leq 1$, we write SVD of $\bsX_0^\top \bsZ_0\in \mathbb{R}^{r\times r}$ as  $\bsX_0^\top \bsZ_0:=\bsU\cos \Theta\bsV^\top$ where $\cos\Theta=\text{diag}(\cos \theta_1,\dots, \cos\theta_r)$. We call $\{\theta_1,\dots,\theta_r\}$ principal angles between $\bsX_0$ and $\bsZ_0$. Then, we have
\beq
\|\bsX_0^\top \bsZ_1\|=\|\sin\Theta\|=\max\{|\sin\theta_1|,\cdots,|\sin\theta_r|\}.
\eeq

Let $\bsM^*$ and $\bsM=\bsM^*+\bsE$ be two matrices in $\mathbb{R}^{n\times m}$ with $n\leq m$, whose SVD are represented by $\bsM^*=\sum_{i=1}^n \sigma_i^* \bsu_i^*{\bsv_i^*}^\top$ and $\bsM=\sum_{i=1}^n \sigma_i \bsu_i{\bsv_i}^\top$, where $\sigma_1\geq \cdots \geq \sigma_n$ (resp.  $\sigma_1^*\geq \cdots \geq \sigma_n^*$). Let us define
\beq
\begin{split}
\bsU_0:=[\bsu_1,\cdots,\bsu_r]\in\mathbb{R}^{n\times r},\quad\bsV_0:=[\bsv_1,\cdots,\bsv_r]\in \mathbb{R}^{m\times r}.
\end{split}
\eeq
The matrices $\bsU_0^*$ and $\bsV_0^*$ are defined analogously.

\begin{thm}[Wedin $\sin\Theta$ Theorem]\label{thm:WedinSin} 
If $\|\bsE\|<\sigma_r^*-\sigma_{r+1}^*$, then one has
\beq
\max\{\|\mathsf{dist}(\bsU_0,\bsU_0^*)\|,\|\mathsf{dist}(\bsV_0,\bsV_0^*)\|\}\leq \frac{\sqrt{2}\|\bsE\|}{\sigma_r^*-\sigma_{r+1}^*-\|\bsE\|},
\eeq
where $\bsU_0^*$ ($\bsV_0^*$) and  $\bsU_0$ ($\bsV_0$)  are subspaces spanned by the largest $r$  left (right) singular vectors of $\bsM^*$ and $\bsM$, respecively. 
\end{thm}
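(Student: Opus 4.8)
The plan is to give the classical residual (Sylvester-equation) argument behind Wedin's $\sin\Theta$ theorem, working directly with the two SVDs; the aim is to recover the structure of the proof rather than to chase constants, and the $\sqrt{2}$ in the statement is the usual slack incurred by aggregating the two one-sided residuals, so it will not require a separate effort.

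First I would fix notation: complete the orthonormal systems by appending $\bsU_1,\bsV_1$ (resp.\ $\bsU_1^*,\bsV_1^*$), orthonormal bases of the orthogonal complements of $\bsU_0,\bsV_0$ (resp.\ $\bsU_0^*,\bsV_0^*$), so that $\bsM=\bsU_0\boldsymbol{\Sigma}_0\bsV_0^\top+\bsU_1\boldsymbol{\Sigma}_1\bsV_1^\top$ and $\bsM^*=\bsU_0^*\boldsymbol{\Sigma}_0^*(\bsV_0^*)^\top+\bsU_1^*\boldsymbol{\Sigma}_1^*(\bsV_1^*)^\top$, where $\boldsymbol{\Sigma}_0=\mathrm{diag}(\sigma_1,\dots,\sigma_r)$, $\boldsymbol{\Sigma}_1$ carries $\sigma_{r+1},\sigma_{r+2},\dots$ (plus the zero block when $n<m$), and $\boldsymbol{\Sigma}_0^*,\boldsymbol{\Sigma}_1^*$ are defined analogously for $\bsM^*$. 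By the principal-angle identities recalled immediately above the theorem, $\mathsf{dist}(\bsU_0,\bsU_0^*)=\|\bsU_1^\top\bsU_0^*\|$ and $\mathsf{dist}(\bsV_0,\bsV_0^*)=\|\bsV_1^\top\bsV_0^*\|$; abbreviate $\boldsymbol{\Phi}:=\bsU_1^\top\bsU_0^*$ and $\boldsymbol{\Psi}:=\bsV_1^\top\bsV_0^*$.

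Next I would derive the two coupled perturbation identities. Starting from $\bsM^*\bsV_0^*=\bsU_0^*\boldsymbol{\Sigma}_0^*$, substitute $\bsM^*=\bsM-\bsE$, left-multiply by $\bsU_1^\top$, and use $\bsU_1^\top\bsM=\boldsymbol{\Sigma}_1\bsV_1^\top$ together with $\bsU_1^\top\bsU_0=\boldsymbol{0}$ to obtain
\beq
\boldsymbol{\Sigma}_1\boldsymbol{\Psi}=\boldsymbol{\Phi}\boldsymbol{\Sigma}_0^*+\bsU_1^\top\bsE\bsV_0^*,
\eeq
and, symmetrically, from $(\bsM^*)^\top\bsU_0^*=\bsV_0^*\boldsymbol{\Sigma}_0^*$,
\beq
\boldsymbol{\Sigma}_1^\top\boldsymbol{\Phi}=\boldsymbol{\Psi}\boldsymbol{\Sigma}_0^*+\bsV_1^\top\bsE^\top\bsU_0^*.
\eeq
Then I would estimate the pieces: $\|\bsU_1^\top\bsE\bsV_0^*\|\le\|\bsE\|$ and $\|\bsV_1^\top\bsE^\top\bsU_0^*\|\le\|\bsE\|$ since the outer factors have orthonormal columns; $\|\boldsymbol{\Phi}\boldsymbol{\Sigma}_0^*\|\ge\sigma_{\min}(\boldsymbol{\Sigma}_0^*)\|\boldsymbol{\Phi}\|=\sigma_r^*\|\boldsymbol{\Phi}\|$ because the gap hypothesis forces $\sigma_r^*>\sigma_{r+1}^*+\|\bsE\|\ge0$, so $\boldsymbol{\Sigma}_0^*$ is invertible; and $\|\boldsymbol{\Sigma}_1\boldsymbol{\Psi}\|\le\|\boldsymbol{\Sigma}_1\|\|\boldsymbol{\Psi}\|=\sigma_{r+1}(\bsM)\|\boldsymbol{\Psi}\|\le(\sigma_{r+1}^*+\|\bsE\|)\|\boldsymbol{\Psi}\|$ by Weyl's inequality for singular values (and likewise with $\boldsymbol{\Phi}$). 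Feeding these into the two identities yields $\sigma_r^*\|\boldsymbol{\Phi}\|\le(\sigma_{r+1}^*+\|\bsE\|)\|\boldsymbol{\Psi}\|+\|\bsE\|$ and $\sigma_r^*\|\boldsymbol{\Psi}\|\le(\sigma_{r+1}^*+\|\bsE\|)\|\boldsymbol{\Phi}\|+\|\bsE\|$. Writing $\mu:=\max\{\|\boldsymbol{\Phi}\|,\|\boldsymbol{\Psi}\|\}$ and applying whichever of the two bounds controls the maximizer gives $(\sigma_r^*-\sigma_{r+1}^*-\|\bsE\|)\mu\le\|\bsE\|$; the gap hypothesis makes the bracket strictly positive, so $\mu\le\|\bsE\|/(\sigma_r^*-\sigma_{r+1}^*-\|\bsE\|)$, which already dominates the claimed bound (the $\sqrt2$ is alternatively what one obtains by keeping the residuals separate and aggregating them as $\sqrt{\|\bsU_1^\top\bsE\bsV_0^*\|^2+\|\bsV_1^\top\bsE^\top\bsU_0^*\|^2}\le\sqrt2\,\|\bsE\|$, as in the reference the paper cites).

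The only genuinely delicate step is setting up the two perturbation identities with the correct complementary projections and checking that the cross terms vanish ($\bsU_1^\top\bsU_0=\boldsymbol{0}$, $\bsV_1^\top\bsV_0=\boldsymbol{0}$), together with a little bookkeeping in the rectangular case $n\le m$, where $\boldsymbol{\Sigma}_1$ also carries zero singular values — but that only shrinks $\|\boldsymbol{\Sigma}_1\boldsymbol{\Psi}\|$ and therefore leaves the estimate intact. Everything else (Weyl's inequality, invertibility of $\boldsymbol{\Sigma}_0^*$, and the submultiplicative norm bounds) is routine.
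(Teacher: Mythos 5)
The paper offers no proof of Theorem~\ref{thm:WedinSin} to compare against: it is quoted in the ``Useful Inequalities'' appendix as a classical result (Wedin's $\sin\Theta$ theorem) and used as a black box in the proof of Lemma~\ref{lem:sin_uk_p}. Your blind reconstruction is a correct, self-contained proof of the standard statement. The two coupled identities $\boldsymbol{\Sigma}_1\boldsymbol{\Psi}=\boldsymbol{\Phi}\boldsymbol{\Sigma}_0^*+\bsU_1^\top\bsE\bsV_0^*$ and $\boldsymbol{\Sigma}_1^\top\boldsymbol{\Phi}=\boldsymbol{\Psi}\boldsymbol{\Sigma}_0^*+\bsV_1^\top\bsE^\top\bsU_0^*$ are derived correctly (the cross terms vanish because $\bsU_1^\top\bsU_0=\boldsymbol{0}$ and $\bsV_1^\top\bsV_0=\boldsymbol{0}$), the estimates $\|\boldsymbol{\Phi}\boldsymbol{\Sigma}_0^*\|\ge\sigma_r^*\|\boldsymbol{\Phi}\|$, $\|\boldsymbol{\Sigma}_1\|=\sigma_{r+1}(\bsM)\le\sigma_{r+1}^*+\|\bsE\|$ (Weyl), and $\|\bsU_1^\top\bsE\bsV_0^*\|\le\|\bsE\|$ are all valid, and the coupling step is sound: if $\mu=\max\{\|\boldsymbol{\Phi}\|,\|\boldsymbol{\Psi}\|\}$, applying the inequality whose left-hand side carries $\mu$ and using the nonnegativity of $\sigma_{r+1}^*+\|\bsE\|$ gives $(\sigma_r^*-\sigma_{r+1}^*-\|\bsE\|)\mu\le\|\bsE\|$, and the gap hypothesis makes the prefactor strictly positive (and forces $\sigma_r^*>0$, so $\boldsymbol{\Sigma}_0^*$ is indeed invertible). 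You actually obtain the bound with constant $1$ rather than $\sqrt{2}$, which implies the quoted inequality; as you note, the $\sqrt{2}$ only appears when the two residuals are aggregated in quadrature, so nothing is lost. The rectangular-case remark about $\boldsymbol{\Sigma}_1$ carrying zero singular values is also handled correctly, since it can only decrease $\|\boldsymbol{\Sigma}_1\boldsymbol{\Psi}\|$.
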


Lastly, we also write down two useful concentration inequalities.
\begin{thm}[Hoeffding] Let $X_1,X_2,\dots,X_n$ be independent random variables such that $X_i\in[a_i,b_i]$ for $1\leq i\leq n$. Then, we have
\beq\label{eqn:Hoeffding}
\p\left[\left|\sum_{i=1}^n (X_i-\E[X_i])\right|>t\right]\leq 2 \exp\left(-\frac{2t^2}{\sum_{i=1}^n (b_i-a_i)^2}\right).
\eeq
\end{thm}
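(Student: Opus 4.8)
The plan is to prove this by the standard Chernoff (exponential moment) method, reducing to the classical Hoeffding lemma for bounded centered variables. Write $S:=\sum_{i=1}^n (X_i-\E[X_i])$ and $V:=\sum_{i=1}^n (b_i-a_i)^2$. First I would establish the one-sided tail bound $\p[S>t]\le \exp(-2t^2/V)$, and then recover the two-sided statement by symmetry. For the one-sided bound, fix any $\lambda>0$; applying Markov's inequality to the nonnegative variable $e^{\lambda S}$ gives $\p[S>t]=\p[e^{\lambda S}>e^{\lambda t}]\le e^{-\lambda t}\E[e^{\lambda S}]$, and independence of the $X_i$ factorizes the moment generating function as $\E[e^{\lambda S}]=\prod_{i=1}^n \E[e^{\lambda(X_i-\E[X_i])}]$.

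The key step is Hoeffding's lemma applied to each factor. Writing $Y_i:=X_i-\E[X_i]$, which is centered and supported on the interval $[\alpha_i,\beta_i]:=[a_i-\E[X_i],\,b_i-\E[X_i]]$ of length $\beta_i-\alpha_i=b_i-a_i$, I claim $\E[e^{\lambda Y_i}]\le \exp(\lambda^2(b_i-a_i)^2/8)$. I would prove this using convexity of $y\mapsto e^{\lambda y}$: on $[\alpha_i,\beta_i]$ bound $e^{\lambda y}$ by the chord through its endpoints, take expectations, and use $\E[Y_i]=0$ together with $\alpha_i\le 0\le \beta_i$ to obtain $\E[e^{\lambda Y_i}]\le \tfrac{\beta_i}{\beta_i-\alpha_i}e^{\lambda\alpha_i}-\tfrac{\alpha_i}{\beta_i-\alpha_i}e^{\lambda\beta_i}=:e^{\phi(\lambda)}$. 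Setting $p:=-\alpha_i/(\beta_i-\alpha_i)\in[0,1]$, a direct computation shows $\phi(0)=\phi'(0)=0$ and $\phi''(\lambda)=(b_i-a_i)^2\,u(1-u)$, where $u=u(\lambda)\in[0,1]$ is a reweighted probability; since $u(1-u)\le 1/4$, Taylor's theorem gives $\phi(\lambda)\le \lambda^2(b_i-a_i)^2/8$, which is the claim.

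Combining the factors yields $\E[e^{\lambda S}]\le \exp(\lambda^2 V/8)$, hence $\p[S>t]\le \exp(-\lambda t+\lambda^2 V/8)$ for every $\lambda>0$. Optimizing the exponent over $\lambda$ by taking $\lambda=4t/V$ gives the minimum value $-2t^2/V$, so $\p[S>t]\le \exp(-2t^2/V)$. Running the identical argument on $-X_i$ (which lie in intervals of the same lengths $b_i-a_i$) gives $\p[S<-t]\le \exp(-2t^2/V)$, and a union bound over the two tail events produces the factor $2$ in \eqref{eqn:Hoeffding}. I expect the only genuinely delicate point to be the bound $\phi''(\lambda)\le (b_i-a_i)^2/4$ in Hoeffding's lemma, where one must recognize $u(1-u)$ as a Bernoulli-type variance bounded by $1/4$; everything else is routine bookkeeping.
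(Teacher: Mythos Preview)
Your argument is the standard and correct proof of Hoeffding's inequality via the Chernoff bound and Hoeffding's lemma; there is no gap. Note, however, that the paper does not actually prove this statement: it is listed in the appendix of ``Useful Inequalities'' as a classical fact stated without proof, so there is no paper proof to compare against. Your write-up is exactly the textbook derivation one would cite for it.
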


\begin{thm}[Bernstein]\label{thm:Bernstein} Let $X_1,X_2,\dots,X_n$ be independent random variables such that $X_i\in[a_i,b_i]$ for $1\leq i\leq n$. Let $C:=\max_{1\leq i\leq n}(b_i-a_i)$ and $\sigma^2=\sum_{i=1}^n \var(X_i)$.  Then we have
\beq\label{eqn:Bernstein}
\p\left[\left|\sum_{i=1}^n (X_i-\E[X_i])\right|>t\right]\leq 2 \exp\left(-\frac{t^2/2}{\sigma^2+C\cdot t/3}\right).
\eeq
\end{thm}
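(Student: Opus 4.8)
The plan is to prove Bernstein's inequality by the classical exponential-moment (Chernoff) method: bound the moment generating function of each centered summand, multiply over the independent summands, and optimize the free exponential parameter. First I would reduce to the one-sided tail and to centered variables. Set $Y_i := X_i - \E[X_i]$, so that $\E[Y_i]=0$ and $\var(Y_i)=\var(X_i)=:\sigma_i^2$; since $X_i\in[a_i,b_i]$ forces $\E[X_i]\in[a_i,b_i]$, each centered variable obeys $|Y_i|\le b_i-a_i\le C$. Writing $S:=\sum_{i=1}^n Y_i$, Markov's inequality applied to $e^{\lambda S}$ for any $\lambda>0$, together with independence, yields
\[
\p\left[S>t\right]\le e^{-\lambda t}\,\E\!\left[e^{\lambda S}\right]=e^{-\lambda t}\prod_{i=1}^n \E\!\left[e^{\lambda Y_i}\right].
\]

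The key step is a per-variable bound on $\E[e^{\lambda Y_i}]$. Expanding the exponential and using $\E[Y_i]=0$ gives $\E[e^{\lambda Y_i}]=1+\sum_{k\ge 2}\lambda^k\,\E[Y_i^k]/k!$. The boundedness $|Y_i|\le C$ lets me control higher moments by the variance via $|\E[Y_i^k]|\le \E[Y_i^2]\,C^{k-2}=\sigma_i^2 C^{k-2}$ for $k\ge 2$. Invoking the elementary factorial bound $k!\ge 2\cdot 3^{k-2}$ (valid for all $k\ge 2$, by induction with equality at $k=2,3$) and summing the resulting geometric series for $\lambda C<3$, I obtain
\[
\E\!\left[e^{\lambda Y_i}\right]\le 1+\frac{\sigma_i^2}{C^2}\sum_{k\ge2}\frac{(\lambda C)^k}{k!}\le 1+\frac{\lambda^2\sigma_i^2}{2(1-\lambda C/3)}\le \exp\!\left(\frac{\lambda^2\sigma_i^2}{2(1-\lambda C/3)}\right),
\]
where the final inequality uses $1+x\le e^x$. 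Taking the product over $i$ and writing $\sigma^2=\sum_i\sigma_i^2=\sum_i\var(X_i)$ gives $\p[S>t]\le \exp\!\left(-\lambda t+\lambda^2\sigma^2/(2(1-\lambda C/3))\right)$.

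Finally I would optimize over $\lambda$ by choosing $\lambda=t/(\sigma^2+Ct/3)$, which lies in the admissible range $(0,3/C)$. With this choice $1-\lambda C/3=\sigma^2/(\sigma^2+Ct/3)$, and substituting collapses the exponent exactly to $-\tfrac{t^2/2}{\sigma^2+Ct/3}$, proving the one-sided bound. Applying the identical argument to $-Y_i$ (which has the same range bound and the same variance) controls the lower tail, and adding the two one-sided probabilities produces the factor $2$ in the stated two-sided inequality. There is no deep obstacle here, as this is a textbook concentration result; the only genuinely delicate points are choosing the factorial lower bound so as to reproduce \emph{precisely} the constant $1/3$ in the $Ct/3$ term, and checking that the optimizing $\lambda$ remains in $(0,3/C)$ so that the geometric series and the MGF bound are valid.
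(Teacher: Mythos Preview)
Your proof is correct and is the standard Chernoff--Cram\'er argument for Bernstein's inequality. Note, however, that the paper does not actually prove this statement: it is listed in the final ``Useful Inequalities'' appendix alongside Hoeffding's inequality and the Wedin $\sin\Theta$ theorem as a known tool, with no accompanying proof. So there is no paper-side argument to compare against; what you have written is exactly the classical derivation one would find in a concentration-inequalities reference, and all the steps (the moment bound $|\E[Y_i^k]|\le \sigma_i^2 C^{k-2}$, the factorial estimate $k!\ge 2\cdot 3^{k-2}$ yielding the $1/3$ constant, and the optimizing choice $\lambda=t/(\sigma^2+Ct/3)$) check out.
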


\end{document}